\pgfplotsset{compat=1.18}
\newtheorem{definition}{Definition}[section]
\newtheorem{lemma}{Lemma}[section]
\newtheorem{theorem}{Theorem}[section]
\newtheorem{claim}{Claim}[section]
\newtheorem{corollary}{Corollary}[section]
\newtheorem{proposition}{Proposition}[section]
\newtheorem{observation}{Observation}[section]
\newcommand{\pr}[1]{\mathbf{Pr}\left(#1\right)}
\newcommand{\ind}[1]{1_{#1}}
\newcommand{\bv}{\mathbf{v}}
\newcommand{\bp}{\mathbf{p}}
\newcommand{\br}{\mathbf{r}}
\newcommand{\bo}{\mathbf{o}}
\newcommand{\pcac}{\texttt{Cut-and-Balance-and-Choose}}
\newcommand{\pas}{\texttt{Plant-and-Steal}}
\newcommand{\brr}{\texttt{Balanced-Round-Robin}}
\newcommand{\otrr}{\texttt{1-2-Round-Robin}}
\newcommand{\brrpas}{\texttt{B-RR-Plant-and-Steal}}
\newcommand{\otrrpas}{\texttt{1-2-RR-Plant-and-Steal}}
\newcommand{\M}{{M}}
\newcommand{\Mot}{M\setminus A_1 \setminus A_2}
\renewcommand{\th}{^{\textrm{th}}}
\newenvironment{mechanism}[1]
  {
   \begin{algorithm}
   #1
  }{\end{algorithm}}
\newenvironment{proced}[1]
  {
   \begin{algorithm}
   #1
  }{\end{algorithm}}
\def\moverlay{\mathpalette\mov@rlay}
\def\mov@rlay#1#2{\leavevmode\vtop{%
   \baselineskip\z@skip \lineskiplimit-\maxdimen
   \ialign{\hfil$\m@th#1##$\hfil\cr#2\crcr}}}
\newcommand{\charfusion}[3][\mathord]{
    #1{\ifx#1\mathop\vphantom{#2}\fi
        \mathpalette\mov@rlay{#2\cr#3}
      }
    \ifx#1\mathop\expandafter\displaylimits\fi}
\newcommand{\cupdot}{\charfusion[\mathbin]{\cup}{\cdot}}
\newcommand{\bigcupdot}{\charfusion[\mathop]{\bigcup}{\cdot}}
\newcommand{\ice}[1]{{\ICedit{#1}}}
\newcommand{\aee}[1]{{\AEedit{#1}}}
\newcommand{\talya}[1]{\textcolor{blue}{#1}}
\newcommand{\alg}{\mathcal{A}}
\newcommand{\kthv}[2]{v_{{#1}}^{{#2}}}
\newcommand{\kthp}[2]{p_{{#1}}^{{#2}}}
\newcommand{\kthr}[2]{r_{{#1}}^{{#2}}}
\newcommand{\kdis}{d}
\title{Plant-and-Steal: Truthful Fair Allocations via Predictions\thanks{The work of I.R. Cohen was supported in part by ISF grant 1737/21. The work of A.\ Eden was supported by the Israel Science Foundation (grant No. 533/23). The work of A. Vasilyan was done while visiting Bar-Ilan university as a part of the MISTI-Israel program, supported by the Zuckerman Institute.}}
\author{
        Ilan Reuven Cohen
        \thanks{Bar-Ilan University; {\tt ilan-reuven.cohen@biu.ac.il}}
        \and
		Alon Eden
		\thanks{The Hebrew University; {\tt alon.eden@mail.huji.ac.il}}
		\and
		Talya Eden
		\thanks{Bar-Ilan University; {\tt talyaa01@gmail.com}}
            \and
            Arsen Vasilyan
            \thanks{MIT; {\tt vasilyan@mit.edu}}
	}
\begin{document}
\maketitle

\begin{abstract}
We study truthful mechanisms for approximating the Maximin-Share (MMS) allocation of agents with additive valuations for indivisible goods. Algorithmically, constant factor approximations exist for the problem for any number of agents. When adding incentives to the mix, a jarring result by Amanatidis, Birmpas, Christodoulou, and  Markakis [EC 2017] shows that the best possible approximation for two agents and $m$ items is $\lfloor \frac{m}{2} \rfloor$. We adopt a learning-augmented framework to investigate what is possible when some prediction on the input is given. For two agents, we give a truthful mechanism that takes agents' ordering over items as prediction. When the prediction is accurate, we give a $2$-approximation to the MMS  (consistency), and when the prediction is off, we still get an $\lceil \frac{m}{2} \rceil$-approximation to the MMS (robustness). We further show that the mechanism's performance degrades gracefully in the number of ``mistakes" in the prediction; i.e., we interpolate (up to constant factors) between the two extremes: when there are no mistakes, and when there is a maximum number of mistakes. We also show an impossibility result on the obtainable consistency for mechanisms with finite robustness. For the general case of $n\ge 2$ agents, we give a 2-approximation mechanism for accurate predictions, with relaxed fallback guarantees. Finally, we give experimental results which illustrate when different components of  our framework, made to insure consistency and robustness, come into play.
\end{abstract}

\newpage






\section{Introduction}

Allocating items among self interested agents in a ``fair" way is an age-old problem, with many applications such as splitting inheritance and allocating courses to students. As a starting point, consider the case of two agents. When the items are divisible, the famous cut-and-choose procedure achieves fairness in two senses. Firstly, no agent wants to switch their allocation with the other; i.e., there is no envy among the agents. Secondly, each agent gets a bundle of items which they value at least as much as their value for all the items divided by 2; that is, each one gets their ``fair share". When moving to the case of indivisible goods, which is relevant to scenarios such as splitting inheritance and allocating courses, things get trickier. 
For instance, if there's a single item, the agent that does not receive that item does not get an envy-free allocation, nor do they get their ``fair share" according to the previous definitions. Therefore, it is clear that some fairness needs to be sacrificed in this case. 

The study of fair allocations with indivisible goods has been a fruitful research direction, with many meaningful notions of  fairness studied (see survey by~\citet{AmanatidisABFLMVW23}). In this paper, we focus on the notion of the Maximin Share, or MMS, introduced by~\citet{budish2011combinatorial}. For two agents, this notion captures the value an agent will ensure if we implement the cut-and-choose procedure. That is, assume Alice splits the items into two bundles, and then Bob takes one of them (adversarially), and Alice gets the second one. The MMS captures exactly how much value Alice can guarantee for herself. Generalizing the notion for $n$ agents is pretty straightforward --- the MMS is the minimum value Alice can guarantee for herself when she partitions the items into $n$ bundles, assuming $n-1$ bundles are taken adversarially.

We study the case where agents have additive valuations over goods.\footnote{Agent $i$ with an additive valuation has a value $v_{ij}=v_i(j)$ for every item, and their value for bundle $S$ is $v_i(S)=\sum_{j\in S}v_{ij}$.} For the case of two agents, the allocation produced by the cut-and-choose procedure guarantees each of the agents their MMS value. For more than two agents, the existence of such an allocation is not longer guaranteed. Indeed, \citet{KurokawaPW18} show an instance of three agents, where in every allocation, at least one of the agents does not get their MMS value. Since allocating all the agents their  MMS value is not always feasible, various papers studied the existence of approximately optimal allocation. An allocation is an $\alpha$-approximate MMS allocation for $\alpha > 1$ if every agents gets at least an $1/\alpha$ fraction of their MMS value.  \citet{FeigeST21} introduce an instance where one cannot find an $\alpha$-approximate allocation for $\alpha < \frac{40}{39}$. On the other hand, \cite{KurokawaPW18} show there always exists $\frac{3}{2}$-approximation. The $\frac{3}{2}$ factor was gradually improved~\cite{BarmanK20,GhodsiHSSY18,GargMT19, AmanatidisMNS17, AkramiGST23,AkramiG23,AkramiGT23}, where the state-of-the-art algorithm achieves  an approximation of $959/720>4/3$~\cite{AkramiG23}. It is worth noting that simple variants of Round-Robin and water-filling algorithms already achieve  $2$-approximation. When adding incentives to the mix, matters become even more complicated.

\citet{AmanatidisBCM17} study the case of two additive agents and $m$ items, where the algorithm (or mechanism) does not know the values of the agents. Thus, the algorithm's designer is faced with the task of devising an allocation rule such that \textit{(i)} agents will maximize their allocated value by bidding truthfully, and \textit{(ii)} the resulting allocation is an $\alpha$-approximate MMS allocation for an $\alpha$ as close to 1 as possible. \cite{AmanatidisBCM17} show that the cost of dealing with self-interested agents might be dire. Namely, they show that no incentive-compatible algorithm can approximate the MMS to a factor better than $\lfloor \frac{m}{2}\rfloor$, and this is matched by the following trivial mechanism --- first agent picks their favorite item, and the second agent gets the rest. We note that although allocating each agent all items with probability $1/2$ gives each agent an expected value which is at least as large as their MMS, this solution is not deemed fair, as one agent might end up with no items at all, while their counterpart will receive all items. Thus, the fair division literature mainly considers ex-post guarantees.

For $2<n< m$,\footnote{For $n>m$, the MMS of each agent is trivially 0. The problem becomes more interesting for $m\gg n.$} a trivial truthful algorithm that lets the first $n-1$ agents pick a single item in some order and gives the last agent the rest achieves an $\lfloor\frac{m-n+2}{2}\rfloor$-approximation, and no better mechanism is known. It is conjectured that one cannot drop the dependence in $m$ for $n>2$. We are left with a stark disparity. On the one hand, assuming agents values are public information, good approximate solutions are known. On the other hand, when considering private values, it seems that only trivial approximations are possible. \textit{The goal of this paper is to bridge these two regimes using predictions.} 

We study the problem of truthful allocations that approximate the MMS,  taking a learning-augmented point of view. In the learning-augmented framework, the algorithm designer aims to tackle some intrinsic hardness of the problem at hand, which might arise due to computational constraints, space constraints, input arriving piecemeal online, or incentive constraints, among others. To help the designer overcome these constraints, the algorithm is given some side information which is a function of the input, or a \textit{prediction}, in order to improve the algorithm's performance. The hope is that if the prediction is accurate, then the performance is greatly improved over the performance without the prediction (termed \textit{consistency}). On the other end, if the prediction is inaccurate then the performance of the algorithm is comparable to the performance of the best algorithm that is not given access to predictions (termed \textit{robustness}). The learning-augmented framework has proven useful in bypassing impossibilities that arise due to incentive issues~\cite{BalkanskiGT23,AgrawalBGOT22,GkatzelisKST22, BalkanskiGTZ23,XuL22,LuWZ23,BalcanPS23}.

When designing a learning-augmented mechanism, one should think of  realistic predictions. For instance, predicting the entire valuation profile of all agents seems to be a strong assumption. A more plausible assumption is to have some ordinal ranking over the items of the agents. Indeed, it seems unlikely that the algorithm can accurately predict Alice's value for a car, but it is plausible that the algorithm can guess that Alice values the car more than she values the table. Ideally, the algorithm's performance should remain robust if the predicted ordering is almost perfect, with only a few pairs of items whose real ordering is swapped in the prediction. Another desired property is to make the prediction as space-efficient as possible, following the intuition that smaller predictions are easier to observe. In this paper we devise learning-augmented truthful mechanisms for  the problem of approximate-MMS allocations, while taking into considerations the issues mentioned above.

\subsection{Our Results and Techniques} \label{sec:results}
We start by studying the two agent case. Recall that in the two agent case, \cite{AmanatidisBCM17} show that no truthful mechanism gets a better approximation than $\lceil \frac{m}{2}\rceil$ to the MMS. We aim at getting:
\begin{enumerate}
    \item \textit{Constant consistency:} when the predictions are accurate, we want to get a constant approximation to the MMS.
    \item \textit{Near-optimal robustness:} when the predictions are off, we want to  get as close as possible to the optimal $\lceil \frac{m}{2}\rceil$-approximation we can obtain by truthful mechanisms. 
\end{enumerate}

\paragraph{\pas\ Framework.} 
In Section~\ref{sec:pas} we present a framework for devising learning-augmented mechanisms for approximating the MMS with two agents. The intuition behind the framework is as follows --- in order to get better approximation guarantees, one must use the predictions in order to get a good allocation. But in case the predictions are off,  only using the predictions cannot guarantee any finite approximation to the MMS. Therefore, in case the predictions are off, we must use the reports to  ensure each agent gets at least one valuable item. In doing so, the mechanism should still maintains a nearly optimal allocation according to the predictions.

Our framework, which we term \pas\ is given the set of goods, an allocation procedure $\alg$, the prediction $\bp$ and reports $\bv$. The framework operates as follows:
\begin{enumerate}
    \item It first applies $\alg$ on the  predictions $\bp$ to divide the set of goods into two bundles $A_1, A_2$. The procedure $\alg$ should be an allocation procedure with good MMS guarantees. We use different allocation procedures depending on the type of prediction given and on the consistency-robustness tradeoffs we are aiming for.
    \item \textit{Planting phase:} For each agent $i$, it picks $i$'s favorite item in set $A_i$ \textit{according to prediction}, and ``plants'' this item in the  bundle $A_j$ of the other agent $j\ne i$. Let $T_1,T_2$ denote the sets that result in this planting phase.
    \item \textit{Stealing phase:} To obtain the final allocation, each agent $i$ now ``steals'' back their favorite item from set $T_j$ of agent $j\neq i$ \textit{according to reports}. Notice this is the first and only place where we use agents' reports.
\end{enumerate}

This procedure is trivially truthful because the only step where we use agents' reports is the one where they pick exactly one item to steal back from $T_j$, and this $T_j$ only depends on predictions, and not reports (Lemma~\ref{lem:pas-truth}). To obtain robustness, we notice that each agent gets one of their two favorite items according to their true valuations (Lemma~\ref{lem:pcasseond}). This implies  a robustness of $m-1$. We show that if the allocations produced by $\alg$ are more balanced, we get improved robustness guarantees (Lemma~\ref{lem:pas-robust}).

\paragraph{Ordering Predictions.} In Section~\ref{sec:ordering}, we study learning-augmented mechanisms when the predictions given are the preference orders over items of the agents (rather than  the values). In the case where the predictions are preference orders, we instantiate the \pas\ framework with a Round-Robin-based allocation procedure. \cite{AmanatidisMNS17} show that  preceding the Round-Robin procedure with an initial allocation of large items (of worth greater than $\mu_i/2$) gives a 2-approximation to the MMS. We observe that in the case of two agents, one can run the Round-Robin procedure \textit{as is}, without the initial allocation phase, and still obtain the 2-approximation. The gain in using the standard procedure is that the allocation is as balanced as possible. To show consistency, we notice that by the properties of the Round-Robin procedure, each agent $i$ values her favorite item more then  any item in the other agent's set $A_j$, except for the other agent's favorite item. Since $T_j$ is obtained by adding $i$'s favorite item to $T_j$ and removing $j$'s favorite item from it,  in case the predictions are accurate, $i$ takes back the item the mechanism planted in $T_j$, and vice-versa. Thus, we end up with the original allocation $(A_1,A_2)$, obtaining a consistency of 2. Since each agent gets at least $\lfloor\frac{m}{2}\rfloor$ items, including one of their top two items, we can show that we obtain a robustness guarantee of $\lceil\frac{m}{2}\rceil$. This almost completely matches the $\lfloor\frac{m}{2}\rfloor$ lower bound from~\cite{AmanatidisBCM17}.

\citet{AmanatidisBM16} study truthful mechanisms when the agents' rankings are global. For two agents, they were able to show that slightly modifying the Round-Robin procedure, to let the second agent choose two items each time, obtains an improved approximation ratio of $\frac{3}{2}$ to the MMS. When using the modified Round-Robin as the allocation procedure $\alg$ in the \pas\ framework, we get an improved consistency of $\frac{3}{2}$, but since the final allocation is less balanced, our robustness guarantee becomes $\lfloor\frac{2m}{3}\rfloor$.

We then study the performance of the \pas\ framework when using the Round-Robin procedure, when the prediction given is not fully accurate, but accurate to some degree. To quantify the prediction's accuracy, we adopt the Kendall tau distance measure (or the bubble-sort distance). The Kendall tau distance counts the number of pairs of elements swapped in the two orderings. For our purpose, we consider the Kendall tau distance between the predicted preference order and the order induced by the true valuations. In order to simplify the analysis, we apply the zero-one principle. By the zero-one principle, it is enough to show that our mechanism achieves the desired approximation guarantee in instances where the values for the items are either 1's or 0's. We first show that for such instances, the initial allocation of the Round-Robin procedure, $(A_1,A_2)$, achieves an additive approximation to the MMS (this is also true for the mechanisms with global rankings from~\cite{AmanatidisBM16}). This does not guarantee, however, any multiplicative approximation. Thus, we must leverage the fact that the agents get to ``steal back" an item according to their \textit{true} valuations. We therefore are able to show that combining the \pas\ framework with a Round-Robin allocation procedure obtains $O(\sqrt{\kdis})$-approximation to the MMS when the Kendall tau distance between the predictions and the valuations is $\kdis$. Since $\kdis$ goes from 0 to $\binom{m}{2} = \Theta(m^2)$, we recover the constant consistency when there are no errors, and the $O(m)$ robustness when the number of errors is  maximal.

\paragraph{General Predictions.} In Section~\ref{sec:non-ordering}, we study the two-agent case where the mechanism is given access to predictions which are not necessarily the preference order of the agents. We first show that for any prediction given to the learning-augmented mechanism, no mechanism can simultaneously be $\alpha$-consistent while maintaining finite robustness for $\alpha<6/5$. For the proof, we leverage the characterization of two-agent truthful mechanisms by~\cite{AmanatidisBCM17}.

We then study small-space predictions. The Round-Robin-based mechanisms described above require an $\Omega(m)$-bit prediction  (to describe an arbitrary allocation of items). We first notice that we can implement a water-filling type allocation procedure using  $O(\log m)$-bit predictions. This already achieves a constant consistency along with $O(m)$ robustness. We then devise a more refined allocation procedure, which requires $O(\log m/\epsilon)$-bit predictions, and achieves $2+\epsilon$ consistency along with $\lceil\frac{m}{2}\rceil$ robustness.

Note that the work of \cite{cohen_et_al} showed how to learn an $(1+\epsilon/2)$-approximate MMS allocation in the context of the model of  \cite{lavastida2021learnable} in which the valuations $v_i(j)$ are sampled i.i.d. from a distribution $D_{i,j}$ under a small item assumption. We remark that combining this learned allocation with our Plant-and-Steal framework immediately gives a truthful, $(2+\epsilon)$-consistent and $m$-robust mechanism.


\paragraph{General number of agents $n$.} Finally, in Section~\ref{sec:general}, we devise a learning-augmented truthful mechanism for $n\ge 2$ additive agents. We obtain a 2-consistent mechanism, while relaxing the robustness guarantees of the mechanism. We take a similar approach to the works of~\cite{budish2011combinatorial,HosseiniS21,HosseiniSS23,Aigner-HorevS22,AkramiGT23}, who compete against a relaxed benchmark of the MMS value for $\hat{n}>n$ agents, and try to minimize $\hat{n}$. We obtain a $(m-\lceil 3n/2\rceil-1)$-approximation to the MMS for $\hat{n}=\lceil\frac{3n}{2}\rceil$ agents when the predictions are off. Our mechanism uses the modified Round-Robin procedure from~\cite{AmanatidisMNS17} to determine the initial allocation using the predictions. It then applies a recursive plant-and-steal procedure where in each stage of the recursion, agents are partitioned into two sets. For each set of agents, the mechanism ``plants'' their current favorite item \textit{according to prediction} in the combined bundle of items of the other set, and ``steals'' back an item \textit{according to  her reports}. In order to ensure consistency, the internal order in which each set of agents steal should be the same as their order in the corresponding Round-Robin round. In order to get our robustness guarantee, we carefully choose the order at each Round-Robin round. We then show each agent gets at least their $\lceil\frac{3n}{2}\rceil$th most preferred item according to their true valuation. 

\paragraph{Experiments.} Finally, In Section~\ref{sec:experiments}, we demonstrate how several components in our design come into play when experimenting with synthetic data. We run different variants of mechanism on two player instances, and show that when predictions are accurate, then only using predictions is nearly optimal, if predictions are noisy, then the stealing component ensures robustness, and our \pas\ framework achieves best-of-both-worlds guarantees.

We summarize the known bounds for learning-augmented truthful mechanisms for MMS approximation in Table~\ref{tab:my_table_merged}.

\begin{table}[h] 
    \centering
    \begin{tabular}{|p{3cm}|p{3cm}|p{3cm}|p{3cm}|}
        \hline
        \textbf{Setting} & \textbf{Consistency} & \textbf{Robustness} & \textbf{Reference} \\
        \hline

        \multirow{3}{3cm}{Ordering predictions, $n=2$} & $2$ & $\lceil \nicefrac{m}{2}\rceil$ & Section~\ref{sec:ordering}\\
         & $3/2$ & $\lfloor \nicefrac{2m}{3}\rfloor$  & Section~\ref{sec:ordering} \\
         & Any  & $\geq \lfloor \nicefrac{m}{2}\rfloor$ & ~\cite{AmanatidisBCM17} \\
         & $\geq 5/4$ & Any & ~\cite{AmanatidisBM16}\\
         \hline
         
          {Arbitrary predictions, $n=2$} & $\geq 6/5$    & Bounded  \hspace{2cm} \hspace{2cm} & Section~\ref{sec:general_lb} \\

         $\log n+1$ space  & 4 & $m-1$ & Section~\ref{sec:logn_space} \\
         $O(\log (n/\epsilon))$ space &$2+\epsilon$ & $\lceil \nicefrac{m}{2}\rceil$  & Section~\ref{sec:logn_eps_space} \\
        \hline   
        $n>2$ & 2 & $ m- \lceil \nicefrac{3n}{2}\rceil-1$ for $\;\;\;\;\;\;\;\hat{n} =\lceil \nicefrac{3n}{2}\rceil$& Section~\ref{sec:general} \\

        \hline
    \end{tabular}
    \caption{Known bounds for truthful learning-augmented MMS mechanisms.}
    \label{tab:my_table_merged}
\end{table}

\subsection{Related Works} \label{sec:related}
The notion of the maximin share allocation was introduced by~\citet{budish2011combinatorial} as an ordinal notion, and extended to the notion we adopt by~\citet{BouveretL16}. Using machine learning advice in algorithm design was used in theory~\cite{DevenurH09,VeeVS10} and practice~\cite{KraskaBCDP18}. The learning-augmented framework of studying consistency-robustness tradeoffs was introduced by~\citet{LykourisV21}. \cite{Mitzenmacher00,WiermanN08} studied the performance of algorithms using imprecise predictions.

\paragraph{Fair division with incentives.} The two closest papers to ours are \citet{AmanatidisBM16,AmanatidisBCM17}. In~\cite{AmanatidisBM16}, they initiate the study of truthful mechanisms  for approximating the  MMS value for agents with additive valuations. They show that no truthful mechanism can get an approximation better than $1/2$ for the MMS in the case of 2 agents and 4 items. They give the best known approximation guarantee for $n$ agents and $m$ items of $\lfloor \frac{m-n+2}{2} \rfloor$. Finally they consider the public ranking model, where the ranking over items is public information. Using this, they are able to obtain a $\frac{n+1}{2}$-approximation algorithm. One can view this as an algorithm that is given a prediction over the input, but does not provide robustness guarantees. \cite{AmanatidisBCM17} Fully characterize truthful mechanism for 2 agents with additive valuations. They use this characterization to provide a strong lower bound of $\lfloor \frac{m}{2}\rfloor$ for any truthful mechanism.

\cite{BabaioffEF21} design truthful mechanisms for dichotomous submodular valuations that maximize welfare, along with  desirable fairness properties such as  EFX and NSW. For additive binary valuations, they also maximize the MMS in a truthful manner. \cite{GkatzelisPTV23} bypass the impossibilities imposed by \cite{AmanatidisBCM17, Tao22} for truthful fair allocations with indivisible and divisible goods by considering Bayesian Incentive Compatible mechanisms with symmetric priors. They are able to obtain EF-1 allocations for indivisible goods and proportional allocations for indivisible goods.

Finally, \cite{AmanatidisBFLLR21} study the Nash equilibrium for simple mechanisms for agents with additive valuations. They show that for every number of agents, the Pure Nash equilibrium of the Round-Robin procedure produces an EF-1 allocation. For two agents, they show that the Pure Nash equilibrium of \citet{PlautR20} cut-and-choose procedure produces an EFX and MMS allocation.

\paragraph{1-out-of-$k$.} As stated above, the MMS value of an agent is defined by the highest value an agent can guarantee for themselves when partitioning the items into $n$ different bundles, where $n$ is the number of agents, and then getting the lowest valued bundle. Thus, an agent get a value larger than the worst one-out-of-$n$ bundles that define the MMS.

Noticing that finding an allocation that satisfies the MMS value of each agent is a demanding task (which was shown to be infeasible in some cases by~\citet{KurokawaPW18}), \citet{budish2011combinatorial} relaxed the notion and defined the 1-out-of-$n+1$ MMS to be the worst bundle out of the bundles that define the MMS when partitioning the items using an additional bundle. \cite{budish2011combinatorial} showed it is possible to achieve this benchmark when adding a small number of access goods. There has been an effort to find the smallest $k$ for which an allocation that guarantees a 1-out-of-$k$ MMS for each agent exists. \cite{Aigner-HorevS22} were able to show the existence for $k=2n-2$, \cite{HosseiniS21,HosseiniSS23} achieved $k=\lceil\frac{3n}{2}\rceil$, and recently, \cite{AkramiGT23} showed the smallest up-to-date $k=\lceil \frac{4n}{3}\rceil$. In our $n$-agent mechanism, our robustness guarantee approximates this relaxed benchmark for $k=\lceil\frac{3n}{2}\rceil$.

\paragraph{Learning Augmented Mechanisms.}

\citet{AgrawalBGOT22} and \citet{XuL22} first explored the learning augmented framework in a mechanism design setting, where \cite{AgrawalBGOT22} studied the facility location problem while \cite{XuL22} applied the framework to several settings such as revenue-maximization, path auctions, scheduling and two-facility games. \cite{BalkanskiGT23} give nearly optimal consistency-robustness tradeoffs to the strategyproof scheduling with unrelated machines. \cite{GkatzelisKST22} use predictions to design mechanisms with improved Price of Anarchy bounds. \cite{LuWZ23,CaragiannisK24} study revenue maximization auctions with predictions, and \cite{BalcanPS23} devise bicriteria mechanisms.


\section{Preliminaries}

In the setting we study, there is a set $N$ of $n$ agents and a set $M$ of $m$ indivisible items. Each agent has a \textit{private} additive valuation over the items, unknown to the mechanism designer, where the value of agent $i$ for item $j$ is $v_{ij}$ (also denoted as $v_i(j)$). For a bundle $S\subseteq M$ of items, $v_i(S)=\sum_{j\in S} v_{ij}$. 

The fairness notion we focus on is the following.
\begin{definition}[Maximin Share]
The Maximin Share (MMS) of agent $i$ with valuation $v_i$ and $n$ agents is 
$$\mu_i^n = \max_{S_1\bigcupdot \ldots \bigcupdot S_n=M}\min_{j\in [n]} v_i(S_j);$$ that is, if $i$ were to partition the items into $n$ bundles, and then $n-1$ of those bundles are taken adversarially, what is the value $i$ can guarantee for themselves. When clear from the context, we omit $n$ and use $\mu_i$ to denote the MMS of $i$ with $n$ agents.
\end{definition}

We are interested in mechanisms that produce approximately optimal allocations, as defined next.
\begin{definition}[$(\gamma,k)$-approximate MMS Allocation]
An allocation $X=(X_1,\ldots, X_n)$ is $(\gamma,k)$-approximate MMS allocation for $\gamma>1$ and a natural number $k$ if for every agent $i$,
$$v_i(X_i)\ge \mu_i^k/\gamma.$$ 
When $k=n$, we say the allocation is a $\gamma$-approximate MMS allocation. 
\end{definition}

We study mechanism that get some prediction on the input.
\begin{definition}[Learning Augmented Mechanism]
A learning-augmented mechanism takes agents' reports $\br=(r_1,\ldots, r_n)$ and predictions $\bp$ in some prediction space $\mathcal{P}$, and outputs a partition of the items 
$$X(\br,\bp)=(X_1(\br,\bp), X_2(\br,\bp), \ldots , X_n(\br,\bp)), \quad X_1(\br,\bp)\bigcupdot X_2(\br,\bp)\bigcupdot \ldots \bigcupdot X_n(\br,\bp)=M,$$
where agent $i$ gets $X_i(\br,\bp)$.     
\end{definition}

For learning-augmented mechanisms, truthfulness should hold for any possible prediction $\bp$.
\begin{definition}
A learning-augmented mechanism is truthful if for every agent $i$ and every possible report of other agents $\br_{-i}$ and every possible prediction $\bp$, 
$$v_i(X_i(v_i,\br_{-i}, \bp)) \ge v_i(X_i(r_i,\br_{-i}, \bp))$$ 
for every $r_i$.    
\end{definition}

We next define the consistency and robustness measures according to which we measure the performance of our mechanisms.
\begin{definition}[$\alpha$-consistency]
    Consider a prediction function $f_{\mathcal{P}}$ which takes a valuation profile and outputs a prediction in prediction space $\mathcal{P}$. 
    A learning-augmented mechanism is $\alpha$-consistent for $\alpha>1$ and  prediction function $f_{\mathcal{P}}$ if for every valuation profile $\bv$ and every prediction $\bp=f_{\mathcal{P}}(\bv)$, $X(\bv, \bp)$ is an $\alpha$-approximate MMS allocation.
\end{definition}

\begin{definition}[$(\beta,k)$-robust]
    A learning-augmented mechanism is $(\beta,k)$-robust for $\beta>1$ and natural number $k$ if for every valuation profile $\bv$ and every prediction $\bp$, $X(\bv, \bp)$ is an $(\beta,k)$-approximate MMS allocation. If $k=n$, we say the mechanism is $\beta$-robust.
\end{definition}

For ease of presentation, for valuation $v_i$, report $r_i$ and prediction $p_i$, we use $v_i^\ell, r_i^\ell, p_i^\ell$ to denote \textit{both} the $\ell$th highest good according to the valuation/report/prediction \textit{and} its value.
Note that, we may use $v_i^\ell$ for $\ell> m$, in this case,$v_i^\ell=0$. For $\ell=1$, i.e., the highest good we use $v_i^*, r_i^*, p_i^*$.

\subsection{Ordering Predictions and Kendall tau Distance}

Most of our mechanisms use predictions which take the form of an ordering over agents items. That is, $f_{\mathcal{P}}(\bv)$ outputs a vector of orderings 
$\bp = (\kthp{1}{},\ldots, \kthp{n}{})$, 
where $\kthp{i}{\ell}$ is the $\ell$th highest valued item of $i$ in $M$ according to $\bp$.  
Accordingly, for agent $i$, let $\kthv{i}{\ell}$ be the $\ell$th highest valued item according to $\bv$. For two items $j\neq j'$, We use $j \succ_{p_i} j'$ to denote that $j$ is higher ranked than $j'$ according to $\bp$.

When studying imprecise predictions, we want to quantify the degree to which the prediction is inaccurate. 
For this, we use the following measure.
For an agent $i$, we define our noise level with respect to the Kendall tau distance (also known as bubble-sort distance) between $\bv$ and $\bp$.

\begin{definition}[Kendall tau distance]
The Kendall tau distance counts the number of pairwise disagreements between two orders.
For $i$'s valuation $v_i$ and predicted preference order $p_i$, we define
$$K_d(v_i,p_i) = |\{ j\succ_{p_i} j'\ :\ v_i(j) < v_i(j') \}.$$
That is, the number of pairs of items where the prediction got their relative ordering wrong. We also denote $K_d(\bv,\bp) = \max\{K_d(v_1,p_1),K_d(v_2,p_2)\}$.

\end{definition}


We note that the Kendall tau distance between $v_i$ and $p_i$,  $K_d(v_i,p_i)$, can go from $0$ to $\binom{m}{2}$.



\section{\pas\ Framework} \label{sec:pas}

In this section, we present the framework which is used to devise learning-augmented mechanisms for two agents. The ideas presented here also inspire the highly complex learning-augmented mechanism for $n>2$ agents. As described in Section~\ref{sec:results}, the \pas\ framework takes an allocation procedure $\alg$, as well agents' predictions and reports. It first uses $\alg$ \textit{on the predictions} to derive an initial allocation $(A_1, A_2)$. Then,  it ``plants''  agent $i$'s favorite item of set $A_i$ \textit{according predictions} in set $A_j$, $j\neq i$. Let  $(T_1,T_2)$ be the sets resulting from the planting phase. Finally, each agent $i$ ``steals'' back their favorite item in $T_j$, $j\neq i$, \textit{according to reports}.

For $S\subseteq M$, and agent $i$, let $\kthv{i}{*}(S)$
($\kthp{i}{*}(S)$,$\kthr{i}{*}(S)$) be
the max valued item in $S$ according to $v_i$ ($p_i,r_i$).
 for $g\in M$ and $S\subseteq M$, denote $S+g := S \cup \{g\}$ and $S-g = S \setminus \{g\}$.
The \pas\ framework is presented in Mechanism~\ref{alg:pas}.

\begin{mechanism}
  \SetAlgoNoEnd\SetAlgoNoLine
  \SetKwInOut{Input}{Input}
  \SetKwInOut{Output}{Output}
  \DontPrintSemicolon
  \Input{Allocation Procedure $\alg$, 
  set of items $M$, 
  predictions $\bp$ and reports $\br$}
  \Output{Allocations  $X_1 \bigcupdot X_2 = M$  }

  \BlankLine    \tcc*{Find an initial allocation by applying $\alg$ on the  predictions}
$(A_1,A_2)\coloneqq \alg(M,N, \bp)$\;

   \BlankLine    
   \tcc*{Plant favorite items according to predictions}

    $\hat{j}_1 \gets \kthp{1}{*}(A_1)$\;
    $\hat{j}_2 \gets \kthp{2}{*}(A_2)$\;
  
  
  $T_1 \gets A_1 + \hat{j}_2 - \hat{j}_1 $\;
  $T_2 \gets A_2 + \hat{j}_1 - \hat{j}_2 $\;

  \BlankLine    \tcc*{Steal according to report}

  $\tilde{j}_1 \gets \kthr{1}{*}(T_2)$\;
  $\tilde{j}_2 \gets \kthr{2}{*}(T_1)$\;

  $X_1 \gets T_1 + \tilde{j}_1 - \tilde{j}_2 $\;
  $X_2 \gets T_2 + \tilde{j}_2 - \tilde{j}_1 $\; 

    \caption{Two agent \pas\  Framework}
    \label{alg:pas}
\end{mechanism}



We now show that for any allocation function $\alg$ and predictions $\bp$ given to the framework, the resulting mechanism is truthful.

\begin{lemma}[Truthfulness Lemma]
    \label{lem:pas-truth}
    For any allocation procedure $\alg$, $\pas$ mechanism using $\alg$ is truthful.
\end{lemma}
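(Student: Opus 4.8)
The plan is to exploit the fact that the only place agents' reports enter the mechanism is the final stealing step, so each agent can influence only a single item of their final bundle. By the definition of truthfulness it suffices to fix one agent, say agent $1$ (agent $2$ is symmetric), fix the other agent's report $r_2$ and the prediction $\bp$, and show agent $1$ cannot strictly gain by misreporting $r_1$.

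First I would observe that everything computed prior to the stealing step is a function of $\bp$ alone. The initial allocation $(A_1,A_2)=\alg(M,N,\bp)$, the planted items $\hat{j}_1=\kthp{1}{*}(A_1)$ and $\hat{j}_2=\kthp{2}{*}(A_2)$, and hence the bundles $(T_1,T_2)$ depend only on predictions. Moreover $\tilde{j}_2=\kthr{2}{*}(T_1)$ is determined by $r_2$ and $T_1$, so it too is fixed from agent $1$'s viewpoint. The single object agent $1$ can move by changing their report is $\tilde{j}_1=\kthr{1}{*}(T_2)$, the item stolen from $T_2$.

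Next I would record the structural fact that $(T_1,T_2)$ is again a partition of $M$: since $\hat{j}_1\in A_1$ and $\hat{j}_2\in A_2$, the planting phase merely swaps these two items between the parts of $(A_1,A_2)$, so $T_1\cap T_2=\emptyset$, with $\tilde{j}_1\in T_2$ and $\tilde{j}_2\in T_1$ (in particular $\tilde{j}_1\neq\tilde{j}_2$). Combining this with additivity and $X_1=T_1+\tilde{j}_1-\tilde{j}_2$ gives
$$v_1(X_1)=v_1(T_1)-v_1(\tilde{j}_2)+v_1(\tilde{j}_1),$$
where the first two terms are constants independent of agent $1$'s report.

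I would then conclude that maximizing $v_1(X_1)$ over reports of agent $1$ is exactly maximizing $v_1(\tilde{j}_1)$ subject to $\tilde{j}_1\in T_2$, whose optimum is $\max_{g\in T_2}v_1(g)$. Reporting truthfully, $r_1=v_1$, yields $\tilde{j}_1=\kthv{1}{*}(T_2)$ and attains this maximum, so no report does better; by symmetry the same holds for agent $2$, establishing truthfulness. I do not expect a genuine obstacle here, since the argument is driven entirely by the mechanism's design (confining reports to the stealing step); the only point that needs care is verifying that $T_1$ and $T_2$ are disjoint, which is what makes the additive decomposition above valid and ensures the stolen and planted-away items never coincide.
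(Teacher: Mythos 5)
Your proof is correct and follows essentially the same route as the paper's: both arguments rest on the observation that $(T_1,T_2)$ and $\tilde{j}_2$ are independent of agent $1$'s report, so the only lever agent $1$ has is which item of $T_2$ they steal, and truthful reporting maximizes that item's value. Your explicit additive decomposition $v_1(X_1)=v_1(T_1)-v_1(\tilde{j}_2)+v_1(\tilde{j}_1)$ merely spells out what the paper states in words, so there is nothing to add.
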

\begin{proof}
    We show that agent 1 is better off reporting their true valuation, a symmetric argument holds for agent 2. First, notice that sets $T_1$ and $T_2$ are determined using predictions, ignoring the reports. Next, notice that the item $\tilde{j}_2$ is chosen only using agent 2's report. Therefore, the only way agent 1 can affect their allocation is by choosing which item in $T_2$ is allocated to them. agent 1 gets their favorite item in $T_2$ according to their report. Therefore, it is clear that the agent maximize their utility by reporting their true value.
\end{proof}

Since the framework is truthful, from now on, we assume that $\br=\bv$.
Next, we show that the \pas\ mechanism ensures that for each agent, an item is allocated with a value that is at least as good as their second-best option \textit{according to their value}.

\begin{lemma}
\label{lem:pcasseond}
    Consider the allocation $(X_1,X_2)$ returned by \pas \ with some allocation procedure $\alg$. For any agent $i$, 
    then $\kthv{i}{1} \in X_i$ or $\kthv{i}{2} \in X_i$.
\end{lemma}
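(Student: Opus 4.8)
The plan is to fix one agent, say agent $1$ (the argument for agent $2$ being symmetric), and to track where agent $1$'s two most valuable items $\kthv{1}{1}$ and $\kthv{1}{2}$ end up after the planting and stealing phases. Since the framework is truthful, I may take $\br=\bv$, so that agent $1$ steals its value-favorite item of $T_2$, i.e. $\tilde{j}_1 = \kthv{1}{*}(T_2)$. First I would record two structural facts. One: $(T_1,T_2)$ is again a partition of $M$, because the planting step merely swaps $\hat{j}_1\in A_1$ and $\hat{j}_2\in A_2$ between the two bundles (and $\hat{j}_1\neq \hat{j}_2$ as $A_1\cap A_2=\emptyset$). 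Two: in the final step agent $1$ keeps every item of $T_1$ except the single item $\tilde{j}_2$ that agent $2$ steals, and additionally gains $\tilde{j}_1\in T_2$; formally $X_1 = (T_1\setminus\{\tilde{j}_2\})\cup\{\tilde{j}_1\}$.

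Next I would do a case split on the location of agent $1$'s favorite item $\kthv{1}{1}$ in the partition $(T_1,T_2)$. If $\kthv{1}{1}\in T_2$, then since it is agent $1$'s globally most valuable item it is a fortiori its most valuable item within $T_2$, so $\tilde{j}_1=\kthv{1}{1}$ and hence $\kthv{1}{1}\in X_1$. If $\kthv{1}{1}\in T_1$ and agent $2$ does not steal it, i.e. $\tilde{j}_2\neq \kthv{1}{1}$, then $\kthv{1}{1}$ survives directly into $X_1$.

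The only remaining, and genuinely substantive, case is $\kthv{1}{1}\in T_1$ with $\tilde{j}_2=\kthv{1}{1}$, so agent $2$'s favorite item of $T_1$ happens to be agent $1$'s top item and agent $1$ loses it. Here I would fall back to $\kthv{1}{2}$ and split once more. If $\kthv{1}{2}\in T_1$, it is distinct from $\kthv{1}{1}=\tilde{j}_2$, and since agent $2$ removes only that one item, $\kthv{1}{2}$ stays in $X_1$. If instead $\kthv{1}{2}\in T_2$, then because $\kthv{1}{1}\in T_1$ is absent from $T_2$, the only item that could outrank $\kthv{1}{2}$ for agent $1$ is not present in $T_2$; hence $\kthv{1}{2}$ is agent $1$'s favorite item of $T_2$, giving $\tilde{j}_1=\kthv{1}{2}\in X_1$. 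In every branch one of the two items lands in $X_1$, which proves the claim.

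I expect the real content to sit entirely in this last case, and the point to be careful about is the step ``$\kthv{1}{2}$ is the favorite of $T_2$'', which relies precisely on $\kthv{1}{1}$ having been placed in $T_1$, thereby removing the one item that could beat $\kthv{1}{2}$. To keep ``favorite item'' unambiguous under value ties I would fix a single tie-breaking order used consistently in $\kthv{i}{\ell}$ and in $\kthv{i}{*}(\cdot)$, and I would dispose of the degenerate cases (such as an empty $A_i$, where $\hat{j}_i$ is undefined) separately, as they do not affect the main argument.
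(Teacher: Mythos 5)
Your proof is correct and takes essentially the same approach as the paper: the paper's proof observes that for any partition $(T_1,T_2)$, either at least one of the top two items lies in the opposite set (and then the stealing step grabs one of them, using exactly your observation that the absence of $\kthv{i}{1}$ is what makes $\kthv{i}{2}$ the favorite there), or both lie in the agent's own set, from which only one item can be stolen away. Your four-way case split is simply a finer-grained unpacking of that same two-case argument.
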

\begin{proof}
     Consider some agent $i$. We claim for every partition of the items into two non-empty sets, $T_1,T_2$,  $i$ is always guaranteed to have one of their two favorite items according to their true valuation $v_i$ in $X_i$. This is because either \textit{(1)} $i$ has one of their two favorite items in $T_\ell$, $\ell \ne i$, and $i$ gets their favorite item from $T_\ell$; or $(2)$ $i$'s two favorite items are in $T_i$, and in this case, $i$ gets all items from $T_i$ but one, so $i$ is guaranteed one of them. 
\end{proof}

We next claim that if $i$ gets one of their two favorite items and any $k-1$ additional items, $i$'s value is an $m-k$-approximation to $\mu_i$. 

\begin{lemma}
\label{lem:twosecondandkitems}
For any agent $i$, let $S\subseteq M$ be a subset of the items of size $|S|=k$
and $\kthv{i}{1} \in S$ or $\kthv{i}{2} \in S$ then $$v_i(S) \ge \mu_i/(m-k).$$
\end{lemma}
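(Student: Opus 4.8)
The plan is to reduce everything to two elementary upper bounds on the two-agent maximin share, $\mu_i \le v_i(M)/2$ and $\mu_i \le v_i(M) - \kthv{i}{1}$, and then to bound the value agent $i$ forgoes on the complement $M\setminus S$ in terms of $v_i(S)$. The first bound is immediate from the definition, since in any split into two bundles the smaller one is at most half the total. For the second, I would observe that in a partition achieving $\mu_i$ the top item $\kthv{i}{1}$ lies in one of the two bundles, so the \emph{other} bundle has value at most $v_i(M)-\kthv{i}{1}$; as $\mu_i$ is the minimum of the two bundle values, $\mu_i \le v_i(M)-\kthv{i}{1}$.

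The starting observation is that, because $S$ contains $\kthv{i}{1}$ or $\kthv{i}{2}$, we always have $v_i(S)\ge \kthv{i}{2}$: if the top item is in $S$ this is clear, and otherwise $S$ contains $\kthv{i}{2}$ by hypothesis. I would then split on whether the top item lies inside $S$, since this controls both how large the complement items can be and which of the two MMS bounds to invoke.

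In the case $\kthv{i}{1}\in S$, every item of $M\setminus S$ is ranked second or lower, hence has value at most $\kthv{i}{2}\le v_i(S)$; summing the $m-k$ complement items gives $v_i(M\setminus S)\le (m-k)\,v_i(S)$, so $v_i(M)\le (m-k+1)\,v_i(S)$. Combining with $\mu_i\le v_i(M)/2$ and using $m-k\ge 1$ yields $\mu_i\le (m-k)\,v_i(S)$. In the complementary case $\kthv{i}{1}\notin S$, the hypothesis forces $\kthv{i}{2}\in S$, so every complement item other than $\kthv{i}{1}$ itself is ranked third or lower and bounded by $\kthv{i}{2}\le v_i(S)$, giving $v_i(M\setminus S)\le \kthv{i}{1}+(m-k-1)\,v_i(S)$. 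Here I would use the second bound: $\mu_i\le v_i(M)-\kthv{i}{1}=v_i(S)+v_i(M\setminus S)-\kthv{i}{1}\le (m-k)\,v_i(S)$, where the two $\kthv{i}{1}$ terms cancel. Either way $\mu_i\le (m-k)\,v_i(S)$, which rearranges to the claim (assuming $k<m$, so the complement is nonempty and the right-hand side is well defined).

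The main obstacle is matching the right MMS bound to the right case: the $v_i(M)/2$ bound is too weak when a single dominant item escapes into the complement (the case $\kthv{i}{1}\notin S$), and it is precisely there that the sharper top-item bound $\mu_i\le v_i(M)-\kthv{i}{1}$ makes the stray $\kthv{i}{1}$ cancel against the total. Beyond this, the argument only bounds a sum of $m-k$ items by their maximum, so I anticipate no further difficulty.
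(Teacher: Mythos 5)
Your proof is correct, but it takes a genuinely different route from the paper's. The paper argues directly inside an MMS-optimal partition $(S_1,S_2)$: writing $S'=S-g$ for a top-two item $g\in S$, it bounds the ratio $\mu_i/v_i(S)$ through the chain $\frac{\mu_i}{v_i^{2}+v_i(S')}\le\frac{\mu_i-v_i(S')}{v_i^{2}}\le\frac{\min_j v_i(S_j\setminus S')}{v_i^{2}}$, then uses that $S_1\setminus S'$ and $S_2\setminus S'$ are disjoint, nonempty, and have $m-k+1$ elements in total (so one has at most $m-k$ elements), together with the fact that the smaller of the two bundle maxima is at most $v_i^{2}$. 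You never touch the optimal partition beyond extracting the two global bounds $\mu_i\le v_i(M)/2$ and $\mu_i\le v_i(M)-v_i^{1}$, and instead split on whether $v_i^{1}\in S$, bounding each complement item by $v_i^{2}\le v_i(S)$ and letting the stray $v_i^{1}$ cancel against the second bound. Both arguments ultimately rest on the same two facts --- every complement item other than $v_i^{1}$ is worth at most $v_i^{2}\le v_i(S)$, and there are only $m-k$ such items --- but yours is more elementary and sidesteps a subtlety in the paper's chain: the step $\frac{\mu_i}{v_i^{2}+v_i(S')}\le\frac{\mu_i-v_i(S')}{v_i^{2}}$ is only valid when $\mu_i\ge v_i^{2}+v_i(S')$ (otherwise $v_i(S)>\mu_i$ and the lemma is trivial), a case distinction the paper leaves implicit; you also make the degenerate case $k=m$ explicit. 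What the paper's version buys is a single multiplicative chain in the style of its other bundle-based MMS arguments; what yours buys is transparency, and reuse of the bound $\mu_i\le v_i(M)-v_i^{1}$ that the paper itself establishes in the proof of Lemma~\ref{lem:rr2agent}.
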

\begin{proof}

Let $g \in S \cap \{\kthv{i}{1},\kthv{i}{2} \}  $, by the definition of $S$ such $g$ exists. Let $S' = S \setminus \{g \}$, by the definition of $S$, we have $|S'|=k-1$ and $v_i(S) \geq \kthv{i}{2} + v_i(S')$.
Consider a partition $$(S_1,S_2)\in {\arg\max}_{(T_1,T_2)\ :\ T_1\bigcupdot T_2 = M}\min_{j\in \{1,2\}} v_i(T_j).$$ By definition, $\mu_i=\min_{j\in \{1,2\}} v_i(S_j)$.
We have, 
\begin{eqnarray*}
    \frac{\mu_i}{v_i(S) } & \leq & \frac{\mu_i}{\kthv{i}{2} + v_i(S') } 
    \\ &=&  \frac{\min_{j\in \{1,2\}} v_i(S_j)}{\kthv{i}{2} + v_i(S') } 
    \\ &\le& \frac{\min_{j\in \{1,2\}} v_i(S_j) - v_i(S')}{\kthv{i}{2}  } 
    \\ &\le & \frac{\min_{j\in \{1,2\}} v_i(S_j\setminus S')}{\kthv{i}{2}  }
\\ &\le& \frac{\min_{j\in \{1,2\}} \{|S_j \setminus S'| \cdot \max\{v_i(\ell): \ell \in S_j\setminus S'\}\}} {\kthv{i}{2}}.
\\   &\le& \frac{(m-k) \cdot \min_{j\in \{1,2\}} \max\{v_i(\ell): \ell \in S_j\}} {\kthv{i}{2}}
\\ &\le& m-k.
\end{eqnarray*} 

where the before last inequality is since if $S_j \subseteq S'$ for some $j$, then $\kthv{i}{2}+v_i(S') \geq \mu_i$; therefore $S_1 \setminus S'$ and $S_2 \setminus S'$ are two disjoint non empty subsets and $|S_1\setminus S'|+|S_2\setminus S'|=m-k+1$, hence the maximum number of elements in one of these subsets is $m-k$.
\end{proof}

We immediately get the following.
\begin{lemma}[Robustness Lemma]
    \label{lem:pas-robust}
    Let $\alg$ be an allocation rule guaranteeing $\min\{|A_1|,|A_2|\}\ge k$, then when \pas\ uses $\alg$, the resulting mechanism is $(m-k)$-robust.
\end{lemma}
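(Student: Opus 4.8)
The plan is to derive the Robustness Lemma as an immediate consequence of Lemmas~\ref{lem:pcasseond} and~\ref{lem:twosecondandkitems}, so the work is really just to track the size of each agent's final bundle and then invoke the approximation bound. First I would fix an arbitrary agent $i$ and examine the bundle $X_i$ that \pas\ returns. By Lemma~\ref{lem:pcasseond}, we already know that $X_i$ contains at least one of $i$'s two favorite items according to $v_i$, i.e.\ $\kthv{i}{1}\in X_i$ or $\kthv{i}{2}\in X_i$. What remains is to lower-bound $|X_i|$ using the hypothesis $\min\{|A_1|,|A_2|\}\ge k$.

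The key step is the size bookkeeping. Tracing through Mechanism~\ref{alg:pas}, the planting phase sends one item out of $A_i$ and brings one item in, so $|T_i|=|A_i|$; likewise the stealing phase swaps exactly one item in each direction, so $|X_i|=|T_i|=|A_i|$. Hence $|X_i|=|A_i|\ge k$ by assumption. I would then set $S=X_i$ in Lemma~\ref{lem:twosecondandkitems}: this is a subset of $M$ of size $|S|=|X_i|\ge k$ containing one of $i$'s top two items, so Lemma~\ref{lem:twosecondandkitems} yields $v_i(X_i)\ge \mu_i/(m-|X_i|)$. Since $|X_i|\ge k$ we have $m-|X_i|\le m-k$, so $v_i(X_i)\ge \mu_i/(m-k)$.

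Because this holds for every agent $i$ and for any valuation profile $\bv$ and any prediction $\bp$ (note that truthfulness, Lemma~\ref{lem:pas-truth}, lets us assume $\br=\bv$ throughout), the allocation $X(\bv,\bp)$ is by definition an $(m-k)$-approximate MMS allocation for all inputs, which is exactly the statement that \pas\ is $(m-k)$-robust.

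The one subtlety worth being careful about is the claim $|X_i|=|A_i|$, since the swaps only preserve cardinality when the planted and stolen items are genuinely distinct from what stays behind; this is guaranteed because $\hat{j}_1,\hat{j}_2,\tilde{j}_1,\tilde{j}_2$ are each single well-defined items and the set operations $+g$ and $-g$ are applied to disjoint roles (one outgoing, one incoming per phase per agent). I do not expect this to be a real obstacle, but it is the only place where one must confirm that the size is exactly preserved rather than merely bounded, and it is what makes the lemma follow so directly. There are no hard estimates here — the entire difficulty was already absorbed into Lemma~\ref{lem:twosecondandkitems}.
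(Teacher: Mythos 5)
Your proof is correct and follows exactly the paper's approach: the paper likewise derives the lemma by combining Lemma~\ref{lem:pcasseond} with Lemma~\ref{lem:twosecondandkitems} under the size condition on $\alg$. The only difference is that you spell out the cardinality bookkeeping $|X_i|=|T_i|=|A_i|\ge k$, which the paper leaves implicit in the phrase ``combining with the condition on $\alg$.''
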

\begin{proof}
    By Lemma~\ref{lem:pcasseond}, we are guaranteed that each agent gets one of their two favorite items according to their report. Combining with the condition on $\alg$ and Lemma~\ref{lem:twosecondandkitems}, the proof is finished.   
\end{proof}

\section{Ordering Predictions} \label{sec:ordering}
In this section, we consider the case of two agents, where the predictions (and in fact, also the reports) given to the mechanism are preference orders of agents over items. Our mechanisms makes use of the \pas\ framework instantiated by Round-Robin based allocation procedures. In Section~\ref{sec:rr-procs} we present our two round-robin allocation procedures, and give their approximation guarantees when the input is accurate. In Section~\ref{sec:rr-pas} we prove the  robustness and consistency guarantees. In Section~\ref{sec:noisy} we quantify the accuracy of the predictions using the Kendall tau distance, and obtain fine-grained approximation results, where the approximation smoothly degrades in the accuracy.

\citet{AmanatidisBM16} studied mechanisms where the preference orders of the agents over items are public (while valuations are private). They showed that no truthful mechanism can achieve a better approximation than $5/4$ in this setting. This implies that when the predictions are preference orders, no learning-augmented  mechanism can obtain consistency better than $5/4$, no matter if the robustness is bounded or not.

\begin{proposition}[Corollary of~\citet{AmanatidisBM16}]
    No mechanism that is given preference orders as predictions can obtain consistency $5/4-\epsilon$ for any $\epsilon>0$.
\end{proposition}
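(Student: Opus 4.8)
The plan is to derive this proposition as a direct corollary of the result from \citet{AmanatidisBM16} that no truthful mechanism achieves a better than $5/4$-approximation to the MMS in the \emph{public ranking model}, where the agents' preference orders over items are public information while their cardinal valuations remain private. The essential observation is that a learning-augmented mechanism receiving preference orders as predictions, when instantiated on valuation profiles whose induced orderings match the predictions (i.e., when the prediction function $f_{\mathcal{P}}$ is accurate), is \emph{exactly} a truthful mechanism in the public ranking model. Hence any consistency guarantee for the former would translate into an approximation guarantee for the latter, and the $5/4$ barrier carries over.

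More concretely, first I would recall what consistency demands: by the definition of $\alpha$-consistency, for every valuation profile $\bv$ and the accurate prediction $\bp = f_{\mathcal{P}}(\bv)$, the output $X(\bv,\bp)$ must be an $\alpha$-approximate MMS allocation. When the prediction is the agents' true preference orders, the prediction $\bp$ is a deterministic function of the rankings induced by $\bv$. Thus, fixing the predicted orders and letting the cardinal values range over all profiles consistent with those orders yields precisely a family of instances that a public-ranking mechanism must handle. Second, I would invoke truthfulness: by Definition of truthfulness for learning-augmented mechanisms, $X(\cdot,\bp)$ is truthful for \emph{every} fixed $\bp$, so in particular it is a truthful mechanism once the ranking $\bp$ is fixed as public information.

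The crux is then to define, from a hypothetical $(5/4-\epsilon)$-consistent learning-augmented mechanism, a mechanism in the public ranking model of \citet{AmanatidisBM16}: given public rankings, set $\bp$ to those rankings and run $X(\cdot,\bp)$. This mechanism is truthful (fixed-$\bp$ truthfulness) and, on every profile whose ranking agrees with the public one, achieves an MMS approximation of $5/4-\epsilon$ by consistency. This contradicts the $5/4$ lower bound of \citet{AmanatidisBM16}, completing the proof.

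The main subtlety to handle carefully is the matching between the prediction function and the public ranking: I would need to make sure that for \emph{any} valuation profile consistent with a given ranking, the accurate prediction $f_{\mathcal{P}}(\bv)$ indeed equals that ranking, so that the consistency guarantee applies to the full family of profiles the public-ranking lower bound ranges over. This is immediate when $f_{\mathcal{P}}$ outputs the ranking induced by $\bv$ (modulo a consistent tie-breaking convention), but it is the one place where the reduction could leak if ties or degenerate profiles were treated differently in the two models; I expect this to be the only real obstacle, and a routine one, since the lower-bound instances of \citet{AmanatidisBM16} can be taken with distinct values inducing unambiguous orders.
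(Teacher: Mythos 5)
Your proposal is correct and takes essentially the same route as the paper: the paper derives this proposition directly from the public-ranking lower bound of \citet{AmanatidisBM16}, observing (as you do) that a learning-augmented mechanism with ordering predictions, restricted to accurate predictions, is precisely a truthful mechanism in the public ranking model, so any $(5/4-\epsilon)$-consistency would contradict that lower bound. The paper states this reduction only informally in the text preceding the proposition; your writeup simply makes the same argument explicit, including the (routine) tie-breaking caveat.
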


\subsection{Round-Robin Allocation Procedures} \label{sec:rr-procs}

The two allocation procedures we use to instantiate the \pas\ framework take as input preference orders of agents over items:
\begin{itemize}
    \item \brr: the agents take turns, and at each turn, an agent takes their highest ranked remaining item. This results in a balanced allocation. 
    \item \otrr: the agents take turns, where we compensate the second agent, who might not get their favorite item, to take two items each turn.  
\end{itemize}

Consider the allocation procedure depicted in Algorithm~\ref{alg:Round-Robin}.

\begin{algorithm}[htb]
  \SetAlgoNoEnd\SetAlgoNoLine
  \SetKwInOut{Input}{Input}
  \SetKwInOut{Output}{Output}
  \DontPrintSemicolon
  \Input{Preference orders of agents over items  $\bv = (v_1,v_2)$.}
  \Output{An allocation  $A_1 \bigcupdot A_2 = M$.}
    $A_i \gets \emptyset$ for every agent $i\in \{1,2\}$\; 
    \For{$r=1,\ldots, \lceil |M|/2 \rceil$}
    {
         $A_1 \gets A_1 + \kthv{1}{*}(\Mot)$\;
         $A_2 \gets A_2 + \kthv{2}{*}(\Mot)$\;
    }
\caption{{\brr}}
\label{alg:Round-Robin}
\end{algorithm}

Notice that to implement the allocation procedure of \brr, it only needs to receive preference orders over items. Let $A_i = (a_{i}^1,\dots, a_{i}^{|A_i|})$ be agent $i$'s allocation by the algorithm, where $a_{i}^k$ is the $k$'th choice of agent $i$. We observe the following.

\begin{observation}
\label{obv:rr}
The output $(A_1,A_2)$ of the \brr\ procedure, satisfies:
\begin{enumerate}
    \item $|A_1| = \lceil\frac{m}{2}\rceil$, $|A_2|= \lfloor \frac{m}{2}\rfloor$.
    \item For each agent $i$ and round $k$, $a_{i}^k \in \{v^\ell_i\}_{\ell \in [2k]}$; that is, in round $k$ an agent gets one of their top $2k$ items. 
\end{enumerate}
\end{observation}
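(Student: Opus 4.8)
The plan is to prove both parts directly by tracking, round by round, how many items have already been removed from $M$ before each agent's pick, adopting throughout the convention that when $\Mot = \emptyset$ the corresponding pick is a no-op (which only ever happens for agent $2$ in the final round when $m$ is odd).

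For Part 1, I would simply count items allocated. The loop runs for $r = 1, \ldots, \lceil m/2 \rceil$ rounds, and within each round agent $1$ picks first and agent $2$ picks second. If $m$ is even, all $\lceil m/2\rceil = m/2$ rounds are complete, so each agent receives exactly $m/2 = \lceil m/2\rceil = \lfloor m/2\rfloor$ items. If $m$ is odd, the first $(m-1)/2$ rounds each allocate two items, consuming $m-1$ items; in the final round only one item remains, agent $1$ takes it and agent $2$'s set $\Mot$ is then empty, so agent $1$ ends with $(m-1)/2 + 1 = \lceil m/2 \rceil$ items and agent $2$ with $(m-1)/2 = \lfloor m/2\rfloor$ items.

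For Part 2, the key tool is the following elementary fact: if exactly $t$ items have been removed from $M$, then agent $i$'s favorite remaining item, $\kthv{i}{*}(\Mot)$, lies among their top $t+1$ items $\{\kthv{i}{\ell}\}_{\ell \in [t+1]}$. This holds because at most $t$ of agent $i$'s top $t+1$ items can have been removed, so at least one of them survives, and the favorite surviving item is at least as good as it, hence has rank at most $t+1$. I would then apply this fact with the correct count at each pick. Before agent $1$'s pick in round $k$, rounds $1,\dots,k-1$ have each removed two items, so $t = 2(k-1)$ and $a_1^k$ has rank at most $2k-1 \le 2k$. Before agent $2$'s pick in round $k$, those same $2(k-1)$ items plus agent $1$'s round-$k$ item have been removed, so $t = 2(k-1)+1 = 2k-1$ and $a_2^k$ has rank at most $2k$. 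In both cases $a_i^k \in \{\kthv{i}{\ell}\}_{\ell \in [2k]}$, as claimed.

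The obstacle is minor but worth flagging: the bound is tight precisely for agent $2$, who always faces one more removed item than agent $1$ in the same round, and this is exactly why the statement bounds the round-$k$ item by $2k$ rather than $2k-1$. Care is likewise needed in the odd-$m$ case of Part 1 to justify the no-op convention, so that the cardinalities come out as the asymmetric pair $\lceil m/2\rceil, \lfloor m/2\rfloor$ rather than splitting evenly.
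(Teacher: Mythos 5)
Your proof is correct, and since the paper states this as an Observation with no accompanying proof, your write-up simply supplies the direct argument that the paper treats as self-evident: item counting for Part 1, and the fact that after $t$ removals at least one of an agent's top $t+1$ items survives for Part 2 (applied with $t=2(k-1)$ for agent 1 and $t=2k-1$ for agent 2). This matches the intended reasoning — the paper itself relies on exactly this counting when it invokes the observation later (e.g., in Lemma~\ref{lem:Ai_Ri_approx}) — so there is nothing to correct.
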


\citet{AmanatidisMNS17} show that first allocating large items to agents, and then using a Round-Robin to allocate the remaining items to the remaining agents, gives a 2-approximation to the MMS. We observe that for two agents, Round-Robin \textit{as is}, without the initial step, achieves this approximation guarantee. The proof of the following Lemma is deferred to Appendix~\ref{app:orderings_missing_proofs}.

\begin{lemma} \label{lem:rr2agent}
Let $(A_1,A_2)$ be the allocation of \brr.  For every agent $i$,  $v_i(A_i)\geq \mu_i/2$. 
\end{lemma}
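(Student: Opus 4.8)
The plan is to lower-bound each agent's value $v_i(A_i)$ by the total value of their \emph{even-indexed} items $\sum_k \kthv{i}{2k}$, and then show that this quantity alone already exceeds $\mu_i/2$ using two elementary properties of the maximin share. First I would invoke Observation~\ref{obv:rr}: in round $k$ agent $i$ receives one of their top $2k$ items, so $v_i(a_i^k)\ge \kthv{i}{2k}$. Summing over the rounds in which $i$ participates gives $v_i(A_i)=\sum_k v_i(a_i^k)\ge \sum_k \kthv{i}{2k}$; i.e.\ agent $i$ is guaranteed at least $\kthv{i}{2}+\kthv{i}{4}+\cdots$. This reduction is symmetric in the two agents, so the asymmetry of the Round-Robin order need not be tracked separately at this stage.

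Next I would bound the even-indexed sum from below. Pairing $\kthv{i}{2k}\ge \kthv{i}{2k+1}$ shows that the even-indexed items collectively dominate the odd-indexed items of rank at least $3$, hence $\sum_k \kthv{i}{2k}\ge \tfrac12\sum_{\ell\ge 2}\kthv{i}{\ell}=\tfrac12\bigl(v_i(M)-\kthv{i}{1}\bigr)$. It then remains to prove $v_i(M)-\kthv{i}{1}\ge \mu_i$. Here the defining property of the MMS enters: in an optimal two-bundle partition achieving $\mu_i$, the most valuable item $\kthv{i}{1}$ lies in exactly one bundle, so the other bundle is contained in $M\setminus\{\kthv{i}{1}\}$ and has value at least $\mu_i$, giving $v_i(M)-\kthv{i}{1}\ge \mu_i$. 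Chaining the three inequalities yields $v_i(A_i)\ge \tfrac12\bigl(v_i(M)-\kthv{i}{1}\bigr)\ge \mu_i/2$.

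I expect the main obstacle to be the second agent. Because agent $2$ moves second, the rank bound only secures the even-indexed items, whose total can be an arbitrarily small fraction of $v_i(M)$ when the top item is very heavy; thus the naive route of comparing $\sum_k\kthv{i}{2k}$ to $v_i(M)/2\ge \mu_i$ is unavailable. The crucial point that rescues the argument is that the maximin share is insensitive to deleting a single (top) item, formalized as $v_i(M)-\kthv{i}{1}\ge \mu_i$; once this is in hand, the pairing bound on the even-indexed items supplies exactly the factor of $2$. Agent $1$, who moves first, is strictly easier: the same reduction (or the sharper bound $v_1(a_1^k)\ge \kthv{1}{2k-1}$ available since only $2(k-1)$ items are gone before their $k$-th pick) in fact gives $v_1(A_1)\ge \mu_1$, so no special treatment is needed there.
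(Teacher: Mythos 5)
Your proposal is correct and follows essentially the same route as the paper's own proof: lower-bounding $v_i(A_i)$ by the even-indexed sum $\sum_k \kthv{i}{2k}$ via Observation~\ref{obv:rr}, using the pairing $\kthv{i}{2k}\ge \kthv{i}{2k+1}$ to show this sum is at least $\tfrac12\bigl(v_i(M)-\kthv{i}{1}\bigr)$, and concluding with the observation that the top item is absent from one bundle of the MMS-defining partition, so $v_i(M)-\kthv{i}{1}\ge \mu_i$. Your side remark that the first agent in fact secures $\mu_1$ matches the paper's footnote and is likewise correct, though not needed for the lemma.
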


One can show that the agent that picks first actually gets a value at least as large as their MMS, while for the second agent  this analysis is indeed tight.\footnote{Consider the case where the agents' valuations are $(m-1,1,\dots,1)$. According to Round-Robing allocation, the first item will be assigned to agent 1, and agent 2 will have $m/2$ items of value 1, while $\mu_2=m-1$.} In order to compensate agent 2,  \otrr\ lets this agent pick \textit{two items} each round. See Algorithm~\ref{alg:Round-Robin-onetwo} for details.

\begin{algorithm}[h]
  \SetAlgoNoEnd\SetAlgoNoLine
  \SetKwInOut{Input}{Input}
  \SetKwInOut{Output}{Output}
  \DontPrintSemicolon
  \Input{Preference orders of agents over items  $\bv = (v_1,v_2)$.}
  \Output{An allocation  $A_1 \bigcupdot A_2 = M$.}
    $A_i \gets \emptyset$, for every agent $i\in N$\; 
    \For{$r=1,\ldots, \lceil |M|/3 \rceil$:}
    {
         $A_1 \gets A_1 + \kthv{1}{*}(\Mot)$\;
         $A_2 \gets A_2 + \kthv{2}{*}(\Mot)$\;
         $A_2 \gets A_2 + \kthv{2}{*}(\Mot)$\;
    }
\caption{\otrr}
\label{alg:Round-Robin-onetwo}
\end{algorithm} 

Let $a_{i}^k$ be agent $i$'s $k$th choice in \otrr, we observe the following.
\begin{observation}
\label{obv:rrot}
The output $(A_1,A_2)$ of the \otrr\ procedure, satisfies:
\begin{enumerate}
    \item $|A_1| = \lceil\frac{m}{3}\rceil$ and $|A_2| = \lfloor \frac{2m}{3}\rfloor$.
    \item $a_{1}^k \in \{v^\ell_1\}_{\ell \in [3 k-2]}$,
    $a_{2}^{2k-1} \in \{v^\ell_2\}_{\ell \in [3 k-1]}$ and $a_{2}^{2k} \in \{v^\ell_2\}_{\ell \in [3k]}$.
\end{enumerate}
\end{observation}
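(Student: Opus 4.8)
The plan is to prove both parts of the observation about the \otrr\ procedure by directly tracking how many items have been deleted from $M$ before each individual pick, and then invoking one elementary fact: if exactly $t$ items have been removed from $M$ before an agent selects their favorite remaining item, then the selected item lies among that agent's top $t+1$ items overall. This holds because at most $t$ of the agent's higher-ranked items can have been removed, so by pigeonhole at least one of $v_1^1,\dots,v_1^{t+1}$ (resp. for agent $2$) survives, and the favorite remaining item is the highest-ranked survivor. Throughout I use that in each iteration of the loop agent $1$ picks first and agent $2$ picks twice afterward, and that selecting from an empty set is a no-op.

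For the cardinality claim (part 1), I would observe that every full iteration removes exactly three items — one taken by agent $1$ and two by agent $2$ — and split into the residue classes $m\in\{3q,3q+1,3q+2\}$. When $m=3q$, all $\lceil m/3\rceil=q$ rounds are full, giving $|A_1|=q=\lceil m/3\rceil$ and $|A_2|=2q=\lfloor 2m/3\rfloor$. When $m=3q+1$, the final round leaves a single item, which agent $1$ takes while both of agent $2$'s picks are no-ops, giving $|A_1|=q+1$ and $|A_2|=2q$. When $m=3q+2$, the final round has two items, so agent $1$ takes one and agent $2$'s first pick takes the other while the second is a no-op, giving $|A_1|=q+1$ and $|A_2|=2q+1$. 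In each case a direct check confirms the counts match $\lceil m/3\rceil$ and $\lfloor 2m/3\rfloor$.

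For the rank claim (part 2), I would count the deletions preceding each of the three pick types in round $k$ and apply the elementary fact above. Agent $1$'s $k$th pick $a_1^k$ occurs after the complete rounds $1,\dots,k-1$ removed $3(k-1)=3k-3$ items, so $a_1^k\in\{v_1^\ell\}_{\ell\in[3k-2]}$. Agent $2$'s first pick in round $k$, namely $a_2^{2k-1}$, is preceded by agent $1$'s $k$ picks and agent $2$'s $2(k-1)$ earlier picks, i.e. $t=k+2(k-1)=3k-2$ deletions, so $a_2^{2k-1}\in\{v_2^\ell\}_{\ell\in[3k-1]}$. Finally $a_2^{2k}$ is preceded by one further deletion (agent $2$'s own first pick in round $k$), giving $t=3k-1$ and hence $a_2^{2k}\in\{v_2^\ell\}_{\ell\in[3k]}$.

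The statement is routine, so there is no genuinely hard step; the only points requiring care are the bookkeeping of the partial final round in part 1 (handled cleanly by the residue-class split) and respecting the within-round order — agent $1$ before agent $2$ — when counting deletions in part 2, since an off-by-one miscount there would shift every rank bound.
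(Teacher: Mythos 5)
Your proof is correct. The paper states this observation without any proof, treating it as routine; your deletion-counting argument (the fact that $t$ prior removals force the next pick to lie among the agent's top $t+1$ items, plus the residue-class case analysis for the cardinalities) is exactly the bookkeeping the paper implicitly relies on, including the necessary convention that a pick from an empty remaining set is a no-op.
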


\citet{AmanatidisBM16} show that \otrr\ guarantees each agent $2/3$ of their MMS.

\begin{lemma}[\citet{AmanatidisBM16}]
\label{lem:12rrapx}
Let $(A_1,A_2)$ be the allocation of \otrr.  For every agent $i$,  $v_i(A_i)\ge \nicefrac{2\mu_i}{3}$.  
\end{lemma}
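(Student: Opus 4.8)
The plan is to treat the two agents separately and reduce everything to each agent's own sorted values via Observation~\ref{obv:rrot}. Fix agent $i$, write $w_\ell := \kthv{i}{\ell}$ for their $\ell$th-highest value, and pad $M$ with zero-valued dummy items so that $m$ is a multiple of $3$ (this changes neither $\mu_i$ nor any valuation). Group the items into triples $\{w_{3k-2},w_{3k-1},w_{3k}\}$. Throughout I would lean on two facts: the per-triple averaging bound $w_{3k-2}\ge \tfrac13(w_{3k-2}+w_{3k-1}+w_{3k})$, valid since $w_{3k-2}$ is the triple's largest entry, and the global bound $v_i(M)\ge 2\mu_i$, which holds because both bundles of any MMS partition have value at least $\mu_i$.

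For agent $1$ this is immediate. By Observation~\ref{obv:rrot}, agent $1$'s $k$th pick satisfies $a_1^k \ge w_{3k-2}$, so $v_1(A_1)\ge \sum_k w_{3k-2}\ge \tfrac13 v_1(M)\ge \tfrac23\mu_1$, where the middle step is the averaging bound summed over all triples.

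The main obstacle is agent $2$, for whom the analogous averaging fails: agent $2$ receives the bottom two entries of each triple ($a_2^{2k-1}\ge w_{3k-1}$ and $a_2^{2k}\ge w_{3k}$ by Observation~\ref{obv:rrot}), and $w_{3k-1}+w_{3k}$ can be far below $\tfrac23$ of the triple's mass when the top entry is a single large item. So I would set $S:=\sum_k(w_{3k-1}+w_{3k})\le v_2(A_2)$ and $\mathcal L := \sum_k w_{3k-2}$ (so $S+\mathcal L=v_2(M)$) and split on the size of the largest item $w_1$, the culprit behind lumpiness.

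If $w_1\le \mu_2$, the key new observation is that consecutive triples interleave: $w_{3k-1}+w_{3k}\ge 2w_{3k+1}=2w_{3(k+1)-2}$, since both summands dominate $w_{3k+1}$ in sorted order. Summing this telescoping bound gives $S\ge 2(\mathcal L-w_1)$, hence $\mathcal L\le \tfrac{S}{2}+w_1$; plugging into $2\mu_2\le v_2(M)=S+\mathcal L\le \tfrac32 S+w_1$ and using $w_1\le\mu_2$ yields $\tfrac32 S\ge \mu_2$, i.e.\ $S\ge\tfrac23\mu_2$. If instead $w_1>\mu_2$, I would first note this forces $w_1>v_2(M)/2$ and $\mu_2=v_2(M)-w_1=\sum_{\ell\ge2}w_\ell$ (otherwise isolating $w_1$ in its own bundle already certifies $\mu_2\ge w_1$). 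Among the items $w_2,w_3,\dots$ agent $2$ loses only $w_4,w_7,\dots$, so regrouping the tail into triples $\{w_{3k-1},w_{3k},w_{3k+1}\}$ and applying the averaging bound gives $\sum_{k\ge1}w_{3k+1}\le \tfrac13\sum_{\ell\ge2}w_\ell=\tfrac13\mu_2$; hence $S=\mu_2-\sum_{k\ge1}w_{3k+1}\ge\tfrac23\mu_2$. Combining the two cases finishes agent $2$, and with it the lemma. The delicate point to get right is the case split on $w_1$: it is exactly the mechanism by which the averaging obstruction is absorbed, since a large top-of-triple item simultaneously caps how much value the MMS partition can hide in the other bundle.
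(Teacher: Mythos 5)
Your proof is correct, but for agent 2 it takes a genuinely different (and more roundabout) route than the paper. The paper's argument is a single unified one, with no case split: since agent 2's favorite item is absent from one bundle of any MMS partition, $v_2(M\setminus\{\kthv{2}{1}\})\ge\mu_2$; and agent 2's worst-case haul --- the items ranked $2,3,5,6,8,9,\dots$ --- consists of the top two elements of each consecutive triple $\{w_{3k-1},w_{3k},w_{3k+1}\}$ of $M\setminus\{\kthv{2}{1}\}$, hence is worth at least $\tfrac{2}{3}\,v_2(M\setminus\{\kthv{2}{1}\})\ge\tfrac{2}{3}\mu_2$. Your Case 2 ($w_1>\mu_2$) is exactly this regrouping argument, except that you deploy it only after establishing $\mu_2=\sum_{\ell\ge2}w_\ell$ under that hypothesis; your Case 1 ($w_1\le\mu_2$) is genuinely different, resting on the interleaving bound $w_{3k-1}+w_{3k}\ge 2w_{3k+1}$ combined with $v_2(M)\ge 2\mu_2$ and the cap on $w_1$. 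Both cases check out, so the proof stands. What the comparison reveals is that your case split is avoidable: the inequality $\sum_{\ell\ge2}w_\ell\ge\mu_2$ that powers your Case 2 holds \emph{unconditionally} (it needs only that one MMS bundle omits the top item, not that $w_1>\mu_2$), and then the regrouping bound $\sum_{k\ge1}w_{3k+1}\le\tfrac{1}{3}\sum_{\ell\ge2}w_\ell$ yields $v_2(A_2)\ge\tfrac{2}{3}\sum_{\ell\ge2}w_\ell\ge\tfrac{2}{3}\mu_2$ in all cases at once. So the paper's proof is strictly shorter; your version buys nothing extra in generality, though the Case 1 telescoping trick is a pleasant self-contained alternative when the top item is small. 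Agent 1 is handled identically in both proofs.
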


For completeness, We provide the proof of the approximation in Appendix~\ref{app:orderings_missing_proofs}.

We next use the two allocation procedures to instantiate the \pas\ framework.

\subsection{Round-Robin-Based Mechanisms} \label{sec:rr-pas}

We analyze the two mechanisms:
\begin{itemize}
    \item \brrpas: The mechanism which results from instantiating \pas\ with \brr\ as $\alg$.
    \item \otrrpas: The mechanism which results from instantiating \pas\ with \otrr\ as $\alg$.
\end{itemize}

We first show that if the predictions correspond to the preference orders of the real valuations, then both \brrpas\ and \otrrpas\ output the same allocation as \brr\ and \otrr.

\begin{lemma}
\label{lem:restotwo}
When predictions correspond to actual values, \brrpas\ (\otrrpas) outputs the same allocation as \brr\ (\otrr).
\end{lemma}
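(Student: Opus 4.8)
The plan is to show that when predictions are accurate (i.e., $p_i = v_i$ for both agents), the planting and stealing phases of \pas\ exactly cancel out, so that $X_i = A_i$ for both agents. Since $\br = \bv = \bp$ in this case, I would track the items $\hat{j}_1, \hat{j}_2$ planted and the items $\tilde{j}_1, \tilde{j}_2$ stolen, and argue that $\tilde{j}_1 = \hat{j}_2$ and $\tilde{j}_2 = \hat{j}_1$, which immediately gives $X_1 = T_1 + \hat{j}_1 - \hat{j}_2 = A_1$ and symmetrically $X_2 = A_2$.

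\medskip

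The key structural fact I would invoke is the second part of Observation~\ref{obv:rr} (respectively Observation~\ref{obv:rrot}) together with the Round-Robin ordering. For \brr, in the planting phase agent $1$ plants their predicted-favorite item of $A_1$, namely $\hat{j}_1 = a_1^1$ (agent $1$'s very first pick, which is globally their most preferred item), into $A_2$; symmetrically agent $2$ plants $\hat{j}_2 = a_2^1$ into $A_1$. So $T_1 = A_1 - a_1^1 + a_2^1$ and $T_2 = A_2 - a_2^1 + a_1^1$. In the stealing phase, agent $1$ takes their favorite item (according to their true value, which equals the prediction) from $T_2$. The crucial claim is that agent $1$'s favorite item among all of $T_2 = (A_2 - a_2^1) + a_1^1$ is precisely the planted item $a_1^1 = \hat{j}_1$. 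This is because $a_1^1$ is agent $1$'s globally most preferred item, so it dominates everything in $T_2$; hence $\tilde{j}_1 = a_1^1 = \hat{j}_1$. By the same argument $\tilde{j}_2 = a_2^1 = \hat{j}_2$. Substituting back, $X_1 = T_1 + \hat{j}_1 - \hat{j}_2 = (A_1 - \hat{j}_1 + \hat{j}_2) + \hat{j}_1 - \hat{j}_2 = A_1$, and likewise $X_2 = A_2$.

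\medskip

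The main obstacle is handling the \otrr\ case cleanly, since there the planted items are no longer simply each agent's global favorite. Here agent $i$ plants $\hat{j}_i = \kthp{i}{*}(A_i)$, the predicted favorite \emph{within} $A_i$. For agent $1$ this is still $a_1^1$, their global top item, so the argument that agent $1$ steals back exactly $\hat{j}_1$ goes through as before. For agent $2$, I must verify that agent $2$'s favorite item in $A_2$ is still the item that agent $1$ would most want to steal from $T_1$, and that agent $2$ most wants to steal $\hat{j}_1$ back from $T_2$ rather than some item already in $A_2$. The key point is that $\hat{j}_1 = a_1^1$ is agent $1$'s single most valuable item overall, so it necessarily beats every item of $T_2 \setminus \{\hat{j}_1\}$; and symmetrically, since $\hat{j}_2 = \kthp{2}{*}(A_2)$ is agent $2$'s favorite item of $A_2$ and by the Round-Robin property every item in $A_1 \setminus \{a_1^1\}$ was available but passed over by agent $2$ in favor of an $A_2$ item, agent $2$'s favorite in $T_1 = (A_1 - \hat{j}_1) + \hat{j}_2$ is exactly the replanted $\hat{j}_2$. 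Establishing this last comparison rigorously is the delicate step, and I would lean on the interleaving guarantee from Observation~\ref{obv:rrot} (that each item agent $2$ picked is preferred to all later-available items) to conclude $\tilde{j}_2 = \hat{j}_2$, yielding $X_2 = A_2$ and $X_1 = A_1$.
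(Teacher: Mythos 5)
Your proposal is correct and takes essentially the same approach as the paper: the paper likewise identifies the planted items with the agents' first Round-Robin picks ($\hat{j}_1 = a_1^1$, $\hat{j}_2 = a_2^1$) and argues that each agent steals back exactly their own planted item---$a_1^1$ because it is agent 1's global favorite and lies in $T_2$, and $a_2^1$ because $T_1 \subseteq M \setminus \{a_1^1\}$ and $a_2^1$ is agent 2's favorite in $M\setminus\{a_1^1\}$---so that $X_1 = A_1$ and $X_2 = A_2$, with the \otrr\ case handled by the identical argument. The only blemish is the slip in your opening paragraph, which states $\tilde{j}_1 = \hat{j}_2$ and $\tilde{j}_2 = \hat{j}_1$; the correct identities, which your later paragraphs argue and which your algebra actually uses, are $\tilde{j}_1 = \hat{j}_1$ and $\tilde{j}_2 = \hat{j}_2$.
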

\begin{proof}
    We prove the claim for \brrpas.  The proof for \otrrpas\ is identical.
    
    Let $j_1$ be the first item assigned in \brr\ to agent 1.
    By definition, $j_1$ is agent 1's favorite item in $M$ according to $p_1$. Clearly, in \pas, $j_1$ is also agent 1's favorite item in $A_1\subseteq M$ according to $p_1$. Hence, $\hat{j}_1 = j_1$. By the definition of \pas, $j_1\in T_2$. Since we assume the prediction corresponds to agent 1's actual value, $j_1$ is also agent 1's favorite item in $T_2\subseteq M$, which implies $\tilde{j}_1 = j_1$.
    
    Similarly Let $j_2$ be the first item assigned in \brr\ to agent 2. By definition, $j_2$ is agent 2's favorite item in $M\setminus\{j_1\}$ according to $p_2$. Since $j_1\in A_1$, $A_2 \subseteq M\setminus \{j_1\}$. Therefore, $j_2$ is also agent 2's favorite item in $A_2$ according to $p_2$. Hence, $\hat{j}_2 = j_2$. 
    Since we established that $\hat{j}_1=j_1$, we have that  $T_1  \subseteq M\setminus \{j_1\}$ and $j_2\in T_1$. Since we assume the prediction corresponds to agent 1's actual value, $j_2$ is also agent 1's favorite item in $T_1$, implying  $\tilde{j}_2 = j_2$. We get that $X_1=A_1$ and $X_2=A_2$ as required.  
\end{proof}

We are now ready to prove the performance guarantees of our mechanisms.

\begin{theorem} \label{thm:balanced_guarantees}
    Mechanism \brrpas\ is truthful, $2$-consistent and $\lceil \frac{m}{2}\rceil$-robust.
\end{theorem}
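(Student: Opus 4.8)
The plan is to establish the three claimed properties separately, each by invoking a result already set up in the excerpt, so the theorem becomes essentially a matter of verifying that \brr\ meets the hypotheses of the general \pas\ lemmas.

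\textbf{Truthfulness.} This is immediate: \brrpas\ is the \pas\ framework instantiated with $\alg = \brr$, so truthfulness follows directly from the Truthfulness Lemma (Lemma~\ref{lem:pas-truth}), which holds for \emph{any} allocation procedure $\alg$. No further work is needed here.

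\textbf{Robustness.} I would appeal to the Robustness Lemma (Lemma~\ref{lem:pas-robust}), which states that if $\alg$ guarantees $\min\{|A_1|,|A_2|\}\ge k$, then the resulting mechanism is $(m-k)$-robust. By Observation~\ref{obv:rr}(1), \brr\ produces $|A_1|=\lceil m/2\rceil$ and $|A_2|=\lfloor m/2\rfloor$, so $\min\{|A_1|,|A_2|\}=\lfloor m/2\rfloor$. Taking $k=\lfloor m/2\rfloor$ gives robustness $m-\lfloor m/2\rfloor=\lceil m/2\rceil$, as claimed. The only thing to double-check is the arithmetic identity $m-\lfloor m/2\rfloor=\lceil m/2\rceil$, which holds for both parities of $m$.

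\textbf{Consistency.} Here I would use the prediction function $f_{\mathcal P}$ that outputs each agent's true preference order, so that ``the prediction is accurate'' means exactly that $\bp$ corresponds to the actual valuations $\bv$. By Lemma~\ref{lem:restotwo}, in this case \brrpas\ outputs precisely the allocation $(A_1,A_2)$ of \brr. Then Lemma~\ref{lem:rr2agent} guarantees $v_i(A_i)\ge \mu_i/2$ for every agent $i$, which is exactly the statement that $(A_1,A_2)$ is a $2$-approximate MMS allocation, i.e.\ $2$-consistency.

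\textbf{Main obstacle.} There is essentially no deep obstacle, since every ingredient has been proved earlier; the theorem is a clean assembly of Lemmas~\ref{lem:pas-truth}, \ref{lem:pas-robust}, \ref{lem:restotwo}, \ref{lem:rr2agent} and Observation~\ref{obv:rr}. The one point requiring slight care is pinning down the prediction function relative to which consistency is measured: consistency is defined only with respect to a specific $f_{\mathcal P}$, so I must explicitly state that $f_{\mathcal P}(\bv)$ is the ordering induced by $\bv$, ensuring Lemma~\ref{lem:restotwo}'s hypothesis (``predictions correspond to actual values'') is met. The remaining work is only the two trivial floor/ceiling verifications.
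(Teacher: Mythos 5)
Your proposal is correct and follows exactly the paper's own proof: truthfulness via Lemma~\ref{lem:pas-truth}, robustness by combining Observation~\ref{obv:rr} with Lemma~\ref{lem:pas-robust}, and consistency via Lemmas~\ref{lem:restotwo} and~\ref{lem:rr2agent}. Your added care in pinning down the prediction function $f_{\mathcal{P}}$ and checking the identity $m-\lfloor m/2\rfloor=\lceil m/2\rceil$ is a harmless (and slightly more explicit) elaboration of the same argument.
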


\begin{proof}
    By \Cref{lem:pas-truth}, the mechanism is truthful. By \Cref{obv:rr}, each agent receives at least $\lfloor m/2 \rfloor$ items; combining with \Cref{lem:pas-robust}, we get that the mechanism is $\lceil \frac{m}{2}\rceil$-robust.
    Finally, if predictions correspond to valuations, by \Cref{lem:rr2agent} and \Cref{lem:restotwo}, the allocation is a $2$-approximation to the MMS. Thus, the mechanism is $2$-consistent.
\end{proof}

We note that by~\citet{AmanatidisBCM17}, our robustness guarantee matches the optimal obtainable approximation by any truthful mechanism (up to the rounding). 

We next show that in \otrrpas\ we are able to achieve a better consistency, while slightly weakening the robustness guarantee.  Due to similarity to the proof of Theorem~\ref{thm:balanced_guarantees}, we defer the proof of the following Theorem to Appendix~\ref{app:orderings_missing_proofs}.

\begin{theorem} \label{thm:ot_guarantees}
    Mechanism \otrrpas\ is truthful, $3/2$-consistent and $\lfloor \frac{2m}{3}\rfloor$-robust.
\end{theorem}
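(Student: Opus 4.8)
The plan is to mirror the proof of Theorem~\ref{thm:balanced_guarantees} exactly, substituting the \otrr-based quantities for the \brr-based ones, so the three claims decompose cleanly. Truthfulness is immediate and identical: it follows directly from \Cref{lem:pas-truth}, which holds for \emph{any} allocation procedure $\alg$, and in particular for \otrr. This step requires no new argument.

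For robustness, the plan is to invoke the structural guarantee of \otrr\ recorded in \Cref{obv:rrot}, namely $|A_1|=\lceil m/3\rceil$ and $|A_2|=\lfloor 2m/3\rfloor$. The relevant quantity for the Robustness Lemma (\Cref{lem:pas-robust}) is $\min\{|A_1|,|A_2|\}$. Here the smaller bundle is $A_1$ with $\lceil m/3\rceil$ items, so I would set $k=\lceil m/3\rceil$ and conclude an $(m-\lceil m/3\rceil)$-robustness. The only thing to verify is that $m-\lceil m/3\rceil=\lfloor 2m/3\rfloor$, which is a routine floor/ceiling identity. Note that, after the planting-and-stealing phase, the final bundle sizes $|X_i|$ equal the initial $|A_i|$ (each set loses one item and gains one item), so the size bound transfers correctly to the final allocation, exactly as in the balanced case.

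For consistency, the plan is to combine \Cref{lem:restotwo} with \Cref{lem:12rrapx}. By \Cref{lem:restotwo}, when the predictions correspond to the actual valuations, \otrrpas\ outputs precisely the \otrr\ allocation $(A_1,A_2)$; the lemma's proof is stated to be identical to the \brrpas\ case, the key point being that \otrr\ likewise assigns each agent their predicted-favorite remaining item first in each round, so the planted item is stolen right back and the allocation is unchanged. Then \Cref{lem:12rrapx} gives $v_i(A_i)\ge 2\mu_i/3$ for each agent, i.e.\ a $3/2$-approximation to the MMS, establishing $3/2$-consistency.

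The bulk of the work has been front-loaded into the supporting lemmas, so the main (minor) obstacle is simply confirming that \Cref{lem:restotwo} genuinely transfers to \otrr\ despite the asymmetric picking pattern—specifically, that agent $2$'s two stolen-back items coincide with the two items \otrr\ originally assigned. Since only one item is planted from each side, the stealing phase moves at most one item across the partition per agent, and the ``favorite according to truth equals favorite according to prediction'' reasoning of \Cref{lem:restotwo} applies verbatim to each agent's single reclaimed item; the second agent's additional picks are never disturbed. I therefore expect no genuine difficulty, only the bookkeeping of the asymmetric bundle sizes.
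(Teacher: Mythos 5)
Your proposal is correct and follows exactly the paper's own proof, which likewise combines \Cref{lem:pas-truth} for truthfulness, the bundle-size bound $\min\{|A_1|,|A_2|\}\ge\lceil m/3\rceil$ from \Cref{obv:rrot} with \Cref{lem:pas-robust} for $\lfloor 2m/3\rfloor$-robustness, and \Cref{lem:restotwo} with \Cref{lem:12rrapx} for $3/2$-consistency. Your extra bookkeeping (the identity $m-\lceil m/3\rceil=\lfloor 2m/3\rfloor$ and the check that \Cref{lem:restotwo} transfers to the asymmetric picking pattern) is sound and only makes explicit what the paper leaves implicit.
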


\subsection{Noisy Predictions}
\label{sec:noisy}
    We now analyze Mechanism \brrpas's performance under varying levels of noise. Consider the case where the Kendall tau distance between $\bv$ and $\bp$ is at most $d$. Our goal is to relate the value agent $i$ gets from the allocation, $v_i(X_i)$ to their maximin share $\mu_i$. To simplify the analysis, we compare what $i$ gets to the worst possible set of items $i$ might get  when running the Round-Robin procedure using the agents' real preferences $R_i=\{\kthv{i}{2j}\}_{j\in\{1,\ldots, \lfloor m/2\rfloor\}}$. In Eq.~\eqref{eq:secondchoice} of Lemma~\ref{lem:rr2agent}, we show that
\begin{eqnarray}
    v_i(R_i)\ge \mu_i/2 \label{eq:ri_lb}.
\end{eqnarray}
    We further simplify the analysis by  applying \textit{the zero-one principle}\footnote{Applied in~\cite{AzarR04}, for instance, in the context of packet routing.}. The zero-one principle basically let's us reduce to instances where the values are either 0's or 1's. For threshold $\tau\ge 0$, let  
    $$h_\tau(q) = \begin{cases} 1 \qquad q\ge \tau\\
    0 \qquad \mbox{otherwise}
                \end{cases}.$$
    Accordingly, let $v_i^\tau(S) = \sum_{j\in S} h_\tau(v_i(j))$.

By the zero-one principle, for two sets $S,T\subseteq M$, in order to show that $v_i(S)$ approximates $v_i(T)$, it is enough to show that $v_i^\tau(S)$ approximates $v_i^\tau(T)$ for every threshold $\tau\ge 0$.

\begin{lemma} \label{lem:zero_one}
    For $c>1$ and for any two sets $S,T\subseteq M$, if for every threshold $\tau\ge 0$, $v_i^\tau(S) \ge v_i^\tau(T)/c$, then $v_i(S) \ge v_i(T)/c$.    
\end{lemma}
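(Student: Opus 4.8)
The plan is to prove the statement by expressing both $v_i(S)$ and $v_i(T)$ as integrals of their thresholded versions, using the so-called ``layer-cake'' representation of the valuation. The key observation is that for any set $U \subseteq M$ and any nonnegative threshold $\tau$, we have $v_i^\tau(U) = \sum_{j \in U} h_\tau(v_i(j)) = |\{j \in U : v_i(j) \ge \tau\}|$, so that integrating over $\tau$ recovers the additive value: I would verify that
\begin{equation}
    v_i(U) = \sum_{j \in U} v_i(j) = \sum_{j \in U} \int_0^\infty h_\tau(v_i(j))\, d\tau = \int_0^\infty v_i^\tau(U)\, d\tau,
\end{equation}
where the second equality uses that $v_i(j) = \int_0^\infty \mathbf{1}[v_i(j) \ge \tau]\, d\tau$ for a nonnegative real $v_i(j)$, and the interchange of finite sum and integral is routine.

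With this representation in hand, the proof is immediate. By hypothesis, $v_i^\tau(S) \ge v_i^\tau(T)/c$ holds pointwise for every $\tau \ge 0$. First I would integrate both sides of this inequality over $\tau \in [0,\infty)$, using monotonicity of the integral, to obtain
\begin{equation}
    \int_0^\infty v_i^\tau(S)\, d\tau \ge \frac{1}{c}\int_0^\infty v_i^\tau(T)\, d\tau.
\end{equation}
Then I would substitute the layer-cake identity on each side, which turns the left-hand integral into $v_i(S)$ and the right-hand integral into $v_i(T)$, yielding exactly $v_i(S) \ge v_i(T)/c$ as desired.

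There is essentially no hard step here; the statement is a standard instance of the zero-one principle. The only point requiring a little care is making sure the layer-cake identity is stated for the right object: the thresholded value $v_i^\tau$ counts items whose value meets the threshold, and integrating the count over all thresholds reconstructs the total value precisely because each item $j$ contributes $v_i(j)$ to the integral (its value being the measure of thresholds it survives). Since all values are finite and nonnegative and $M$ is a finite set, every integral is finite and the exchange of sum and integral is unproblematic, so I do not anticipate any technical obstacle. If one prefers a fully discrete argument avoiding integrals, an equivalent route is to let $\tau$ range only over the finite set of distinct item values $\{v_i(j) : j \in M\}$ and write $v_i(U)$ as a finite sum of $v_i^\tau(U)$ weighted by the gaps between consecutive thresholds; summing the pointwise inequality against these nonnegative weights gives the same conclusion.
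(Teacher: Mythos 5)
Your proof is correct and is essentially identical to the paper's own argument: both write $v_i(\cdot)=\int_0^\infty v_i^\tau(\cdot)\,d\tau$ using the identity $\int_0^\infty h_\tau(q)\,d\tau = q$, exchange the finite sum with the integral, and then integrate the pointwise hypothesis $v_i^\tau(S)\ge v_i^\tau(T)/c$ over $\tau$. Nothing further is needed.
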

\begin{proof}
    Let $S=\{s_1,\ldots, s_k\}$ ($|S|=k$) and $T=\{t_1,\ldots, t_\ell\}$ ($|T|=\ell$). We have the following.

\begin{eqnarray*}
 v_i(S)  & = &  \sum_{j=1}^{k} v_i(s_j) \nonumber 
 =   \sum_{j=1}^{k} \int^{\infty}_{0} h_\tau(v_i(s_{j})) d\tau \nonumber
 =  \int^{\infty}_{0} \sum_{j=1}^{k}  h_\tau(v_i(s_{j})) d\tau \nonumber 
 =   \int^{\infty}_{0}  v_i^\tau(S) d\tau\nonumber \\
 & \geq &  \int^{\infty}_{0}  v_i^\tau(T)/c\ d\tau \nonumber 
 =  \frac{1}{c} \int^{\infty}_{0} \sum_{j=1}^{\ell} h_\tau(t_j) d\tau \nonumber 
= \frac{1}{c}  \sum_{j=1}^{\ell} \int^{\infty}_{0} h_\tau(t_j)d\tau\nonumber 
 =  
\frac{1}{c} \sum_{j=1}^{\ell} t_j\nonumber
 =   v_i(T)/c, 
\end{eqnarray*}
where we use the identity $\int^{\infty}_{0} h_\tau(q) d\tau = q$.

\end{proof}

Thus, we will show that when the Kendall tau distance is $d$, for every threshold $\tau\ge 0$, $v_i^\tau(X_i)\ge v_i^\tau(R_i)/c$ for some $c=O(\sqrt{d})$.  Recall that $A_i$ is the set of items assigned to $i$ after running the Round-Robin procedure on the predictions $\bp$. We  first show that for Kendall tau distance $d$, the \textit{additive} approximation $v_i^\tau(A_i)$ gives to $v_i^\tau(R_i)$ is $\sqrt{d}$.

\begin{lemma} \label{lem:Ai_Ri_approx}
    If the Kendall tau distance between $\bp$ and $\bv$ is at most $\kdis$, then for any threshold $\tau\ge 0$, we have that $v_i^\tau(A_i) \ge v_i^\tau(R_i)-\sqrt{\kdis}$.
\end{lemma}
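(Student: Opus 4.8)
The plan is to fix an arbitrary threshold $\tau \ge 0$ and work entirely in the $0/1$ world, so that everything reduces to counting items. Call an item \emph{good} (for agent $i$) if $v_i(j) \ge \tau$ and \emph{bad} otherwise, and let $k_0 = v_i^\tau(M)$ be the number of good items. The good items are exactly the top $k_0$ items of agent $i$'s true order, and since $R_i = \{\kthv{i}{2j}\}_{j}$ collects the even-true-ranked items, exactly the even-indexed ones up to $k_0$ are good, so $v_i^\tau(R_i) = \lfloor k_0/2 \rfloor$. The goal thus reduces to showing that $A_i$ contains at least $\lfloor k_0/2\rfloor - \sqrt{\kdis}$ good items.

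The crux is to control how the good/bad split sits inside agent $i$'s \emph{predicted} order $p_i$. Let $x$ be the number of bad items among agent $i$'s top $k_0$ predicted items. Since there are exactly $k_0$ good items in total, exactly $x$ good items must lie at predicted positions beyond $k_0$. Now every pair consisting of one bad item in the top $k_0$ predicted slots together with one good item outside them is a Kendall disagreement: the bad item has the smaller predicted position (so $p_i$ ranks it higher) yet has strictly smaller true value (below $\tau$, while the good item is at least $\tau$). These $x \cdot x = x^2$ pairs are distinct inversions, so $x^2 \le K_d(v_i,p_i) \le \kdis$, giving $x \le \sqrt{\kdis}$. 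Hence agent $i$'s top $k_0$ predicted items contain at most $\sqrt{\kdis}$ bad items.

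It remains to show that enough of agent $i$'s Round-Robin picks fall inside this top-$k_0$ predicted set. Since \brr\ is run on the predictions, Observation~\ref{obv:rr} applied to $p_i$ gives that agent $i$'s $k$-th pick lies within their top $2k$ predicted items (and within top $2k-1$ for the first-moving agent). Taking $q = \lfloor k_0/2\rfloor$, and $q = \lceil k_0/2 \rceil$ for the first mover, one checks that the first $q$ picks $a_i^1,\ldots,a_i^{q}$ are $q$ distinct items all lying in agent $i$'s top $k_0$ predicted items. At most $x \le \sqrt{\kdis}$ of them can be bad, so at least $q - \sqrt{\kdis} \ge \lfloor k_0/2\rfloor - \sqrt{\kdis}$ are good. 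Therefore $v_i^\tau(A_i) \ge \lfloor k_0/2\rfloor - \sqrt{\kdis} = v_i^\tau(R_i) - \sqrt{\kdis}$, which is the claim.

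The main obstacle, and the step responsible for the $\sqrt{\kdis}$ loss rather than a linear one, is the quadratic inversion count $x^2 \le \kdis$: the key observation is that each displaced bad item is inverted against \emph{every} displaced good item, yielding $x^2$ disagreements rather than $x$. A secondary technical point is the off-by-one discrepancy between the first and second mover in Observation~\ref{obv:rr} (top $2k-1$ versus top $2k$), which must be verified to guarantee that the first $\lfloor k_0/2\rfloor$ picks genuinely remain within the top $k_0$ predicted items for both agents; the floor/ceiling rounding works out in each case.
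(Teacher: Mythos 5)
Your proof is correct and follows essentially the same route as the paper's: both hinge on the quadratic inversion count (each displaced bad item forms an inversion with each displaced good item, giving $x^2 \le \kdis$, hence $x \le \sqrt{\kdis}$) combined with Observation~\ref{obv:rr} to place the early Round-Robin picks inside the relevant predicted prefix. The differences are cosmetic: you argue directly with the boundary $k_0$ (the total number of above-threshold items) where the paper argues by contradiction with the boundary $2L_\tau$, and your good/bad value classification makes the strictness of the counted inversions slightly cleaner than the paper's rank-based formulation.
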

\begin{proof}
    Let $\lfloor\frac{m}{2}\rfloor \le m_i \le \lceil\frac{m}{2}\rceil$ be the number of items agent $i$ gets by Mechanism \brrpas. Let $A_i = \{a_{i}^1,a_{i}^2,\dots, a_{i}^{m_i}\}$ be the items assigned to agent $i$ in the Round-Robin according to the predicted orderings $\bp$, where $a_{i}^\ell$ is the  item  allocated to $i$ in the $\ell$th round of Round-Robin. First, by \Cref{obv:rr}, we have:
\begin{equation}
\label{eq:noiserr}
a_{i}^\ell \in \{ \kthp{i}{j} \}_{j\in\{1,\ldots, 2\ell\}}.
\end{equation}

For a fixed $\tau\ge 0$, let $L_\tau = v_i^\tau(R_i)$ be the number of values larger than threshold $\tau$ in $R_i$. We show that if the Kendall tau distance is at most $\kdis$, then it must be the case that 
\begin{eqnarray}
    v_i^\tau(A_i) \ge L_\tau-\sqrt{\kdis}. \label{eq:alloc_round_distance}
\end{eqnarray}

Note that $h_\tau(\kthv{i}{k}) = 1$ for $k \leq 2\cdot L_\tau$ since $R_i$ gets every second item by the sorted values of agent $i$. This implies that if $h_\tau(v_1(a_{i}^\ell)) = 0$ then $a_{i}^\ell = \kthv{i}{k}$ for $k > 2\cdot L_\tau$. Moreover, if $\ell \leq L_\tau$ then  $a_{i}^\ell = \kthp{i}{k}$ for $k \leq 2\cdot L_\tau$ by Eq.~\eqref{eq:noiserr}. Thus, if $ \sum_{k=1}^{L_\tau} h_\tau(v_1(a_{i}^{k})) < L_\tau-\sqrt{\kdis}$
there are \textit{strictly} more than  $\lceil \sqrt{\kdis} \rceil$ items whose rank according to the true valuation is at most $2\cdot L_\tau$, and their rank according to the prediction is at least $2\cdot L_\tau+1$. We show that this implies that the Kendall tau distance is larger than $\kdis$, yielding a contradiction.
Formally, let $$G_1 = \{\kthv{1}{k}\}_{k \in \{1,\ldots,2\cdot L_\tau\}} \setminus
\{\kthp{1}{k}\}_{k \in \{1,\ldots,2\cdot L_\tau\}}$$ be the set of items whose rank is at most $2\cdot L_\tau$ according to the real values but not according to the predictions, and let 
$$G_2 = \{\kthv{1}{k}\}_{k \in \{2\cdot L_\tau+1,\dots, m\}} \setminus
\{\kthp{1}{k}\}_{k \in \{2\cdot L_\tau+1,\dots, m\}}$$ 
be the set of items whose rank is strictly larger than $2\cdot L_\tau$ according to the real values but not according to the predictions.
By the above, $|G_1|=|G_2| > \lceil \sqrt{\kdis} \rceil$,
and for each pair $j\in G_1, j'\in G_2$,
\begin{enumerate}
    \item  $j$ rank according to $v_i$ is at most $2\cdot L_\tau$ and
$j'$ rank according to $v_i$ is at least $2 \cdot L_\tau + 1$; 
\item $j'$ rank according to $p_i$ is at most $2 \cdot L_\tau$ and
$j$ rank according to $p_i$ is at least $2 \cdot L_\tau + 1$.
\end{enumerate}
That is, $j$ and $j'$ are ordered oppositely in the ordering according to $p_i$ and $v_i$. Since there are $|G_1|\cdot |G_2| > d$ such pairs, we get that the Kendall tau distance is \textit{strictly} greater than $d$, a contradiction.
\end{proof}

We note that although $v_i^\tau(A_i)$ gives an additive approximation to $v_i^\tau(R_i)$, it can still be the case that the Kendall tau distance is constant, yet $v_i(A_i)$ does not give any multiplicative approximation to $\mu_i$.\footnote{Indeed, consider the case where there are four goods which both agents value at $(1,1,0,0)$. If agent $i$'s prediction orders the last two items higher then the first two items, we will get that $v_i(A_i)=0$, while $\mu_i=1$.} Therefore, we must use the fact that agent $i$ gets to ``steal'' an item according to their \textit{true} valuation in the \pas\ procedure in order to get our approximation guarantee. We now prove our approximation guarantees.

\begin{theorem}
    Consider a prediction $\bp$ and valuations $\bv$ such that $K_d(\bv,\bp)=d$, then Mechanism \brrpas\ gives a $(2\sqrt{\kdis}+6)$-approximation to the MMS.
\end{theorem}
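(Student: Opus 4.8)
The plan is to bound, for each agent $i$ separately, the value $v_i(X_i)$ against the reference set $R_i=\{\kthv{i}{2j}\}_{j\in\{1,\ldots,\lfloor m/2\rfloor\}}$, and then invoke Eq.~\eqref{eq:ri_lb}, which gives $v_i(R_i)\ge \mu_i/2$. Concretely, I aim to show $v_i(X_i)\ge v_i(R_i)/(\sqrt{\kdis}+3)$; combined with Eq.~\eqref{eq:ri_lb} this yields $v_i(X_i)\ge \mu_i/(2\sqrt{\kdis}+6)$, as required. By the zero-one principle (Lemma~\ref{lem:zero_one}) it suffices to prove the per-threshold inequality $v_i^\tau(X_i)\ge v_i^\tau(R_i)/(\sqrt{\kdis}+3)$ for every $\tau\ge 0$. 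Writing $L_\tau=v_i^\tau(R_i)$ as in Lemma~\ref{lem:Ai_Ri_approx}, the goal becomes $v_i^\tau(X_i)\ge L_\tau/(\sqrt{\kdis}+3)$.

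I would establish two separate lower bounds on $v_i^\tau(X_i)$. The first is \emph{additive}: tracing the plant-and-steal operations, $X_i$ is obtained from $A_i$ by removing the planted item $\hat{j}_i$ and the item $\tilde{j}_j$ stolen by the other agent, and adding $\hat{j}_j$ together with the item $\tilde{j}_i$ that agent $i$ steals from $T_j$. Since $\hat{j}_i\in T_j$ and $\tilde{j}_i$ is $i$'s \emph{true} favorite in $T_j$, we have $v_i(\tilde{j}_i)\ge v_i(\hat{j}_i)$, hence $h_\tau(\tilde{j}_i)\ge h_\tau(\hat{j}_i)$; these two items cancel favorably, and at most one unit can be lost (the removal of $\tilde{j}_j$), giving $v_i^\tau(X_i)\ge v_i^\tau(A_i)-1$. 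Combining with Lemma~\ref{lem:Ai_Ri_approx} yields $v_i^\tau(X_i)\ge L_\tau-\sqrt{\kdis}-1$. The second bound handles the bottom regime: $L_\tau\ge 1$ forces $\kthv{i}{2}\ge \tau$, and by Lemma~\ref{lem:pcasseond} the bundle $X_i$ contains $\kthv{i}{1}$ or $\kthv{i}{2}$, so $X_i$ holds an item above the threshold and $v_i^\tau(X_i)\ge 1$.

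With these in hand I would split on the size of $L_\tau$. If $L_\tau=0$ the claim is vacuous; if $1\le L_\tau\le \sqrt{\kdis}+3$, then $L_\tau/(\sqrt{\kdis}+3)\le 1\le v_i^\tau(X_i)$ by the second bound; and if $L_\tau>\sqrt{\kdis}+3$, the additive bound suffices, since $L_\tau-\sqrt{\kdis}-1\ge L_\tau/(\sqrt{\kdis}+3)$ is equivalent to $L_\tau(\sqrt{\kdis}+2)\ge(\sqrt{\kdis}+1)(\sqrt{\kdis}+3)$, which holds because $L_\tau>\sqrt{\kdis}+3\ge(\sqrt{\kdis}+1)(\sqrt{\kdis}+3)/(\sqrt{\kdis}+2)$. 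Thus the per-threshold inequality holds in all cases, and the zero-one principle together with Eq.~\eqref{eq:ri_lb} completes the argument for each agent.

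The main obstacle is exactly the gap flagged in the remark preceding the theorem: Lemma~\ref{lem:Ai_Ri_approx} only controls $v_i^\tau(A_i)$ \emph{additively}, and an additive-error guarantee yields no multiplicative ratio when $L_\tau$ is small. The idea that closes the gap is that the stealing phase secures agent $i$ one of their two true-favorite items, so whenever the reference set contributes anything above the threshold ($L_\tau\ge 1$) the output bundle already contributes at least one unit; the additive guarantee is then only invoked in the regime where $L_\tau$ is large enough that an additive loss of $\sqrt{\kdis}+1$ costs at most a $(\sqrt{\kdis}+3)$-factor. I would also take care with degenerate overlaps among $\hat{j}_i,\hat{j}_j,\tilde{j}_i,\tilde{j}_j$ (for instance when $\tilde{j}_j=\hat{j}_j$ or $\tilde{j}_i=\hat{j}_i$) in justifying $v_i^\tau(X_i)\ge v_i^\tau(A_i)-1$, but each such coincidence only strengthens the bound.
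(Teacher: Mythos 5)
Your proposal is correct and follows essentially the same route as the paper's own proof: the zero-one principle (Lemma~\ref{lem:zero_one}), the additive bound of Lemma~\ref{lem:Ai_Ri_approx}, the top-two-item guarantee of Lemma~\ref{lem:pcasseond}, and a case split on the magnitude of $v_i^\tau(R_i)$ against the breakpoint $\sqrt{\kdis}+3$. The only difference is your slightly sharper accounting of the plant-and-steal step, $v_i^\tau(X_i)\ge v_i^\tau(A_i)-1$ via $v_i(\tilde{j}_i)\ge v_i(\hat{j}_i)$, where the paper uses $|A_i\setminus X_i|\le 2$ and loses $2$; both are valid and lead to the same final constant.
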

\begin{proof}
    We use the zero-one principle to show that Lemma~\ref{lem:zero_one} holds for sets $X_i$ and $R_i$ with $c=\sqrt{\kdis}+3$. The proof then follows by Eq.~\eqref{eq:ri_lb}.

    Notice that $|A_i \setminus X_i|\leq 2$, because in the ``stealing'' phase, agent $i$  might not take the ``planted''  item from $A_i$ back, and the other agent might take one item from $A_i$.\footnote{In fact, this holds for any noise in the valuations of the other agent.} Moreover, by Lemma \ref{lem:pcasseond}, either $\kthv{i}{1}$ or $\kthv{i}{2}$ are  in $X_i$.
Therefore, for every threshold $\tau\ge 0$, 
\begin{eqnarray}
v_i^\tau(X_i)  & \geq & \max \{h_\tau(\kthv{i}{2}), v_i^\tau(A_i)-2\} \nonumber \\ 
& \ge& \max \{h_\tau(\kthv{i}{2}), v_i^\tau(R_i)-\sqrt{\kdis}-2\}, \label{eq:h_Xi_lb}
\end{eqnarray}
where the inequality follows Lemma~\ref{lem:Ai_Ri_approx}.

If $h_\tau(\kthv{i}{2})=0$,  then $v_i^\tau(R_i)\le |R_i|\cdot h_\tau(\kthv{i}{2}) = 0,$ and Lemma~\ref{lem:zero_one} holds with $c = 0$. Therefore, the interesting case is when $h_\tau(\kthv{i}{2})=1$. Consider the ratio $\frac{v_i^\tau(R_i)}{v_i^\tau(X_i)}$ which we want to bound. Since $v_i^\tau(X_i)\ge h_\tau(\kthv{i}{2}) =1,$ $v_i^\tau(R_i)\in [1,\sqrt{\kdis}+3]$ implies that $$\frac{v_i^\tau(R_i)}{v_i^\tau(X_i)}\le v_i^\tau(R_i) \le \sqrt{\kdis}+3.$$ On the other hand, by Eq.~\eqref{eq:h_Xi_lb}, setting $v_i^\tau(R_i) = \sqrt{\kdis}+3+\delta$ for $\delta > 0$ implies that $v_i^\tau(X_i)\ge v_i^\tau(R_i) - \sqrt{\kdis}-2 \ge 1+\delta$, which yields
$$\frac{v_i^\tau(R_i)}{v_i^\tau(X_i)} \ \le\ \frac{\sqrt{\kdis}+3+\delta}{1+\delta}\ \le\ \sqrt{\kdis}+3.$$
We get that Lemma~\ref{lem:zero_one} holds for $X_i$ and $R_i$ with $c=\sqrt{\kdis}+3$. Thus, $$v_i(X_i)\ge v_i(R_i)/(\sqrt{\kdis}+3)\ge \mu_i/(2\sqrt{\kdis}+6),$$
where the last inequality follows Eq.~\eqref{eq:ri_lb}.
\end{proof}

We note that a similar analysis for Mechanism \otrrpas\ will show a similar dependence in $\sqrt{d}$ (up to constant factors). 

    

\section{Non-ordering Predictions} \label{sec:non-ordering}
In this Section, we consider the case where predictions are not necessarily preference orders over items. In Section~\ref{sec:general_lb}, we show that for any prediction the mechanism might get, consistency is bounded away from 1. Sections~\ref{sec:logn_space},~\ref{sec:logn_eps_space}, we study \textit{succinct} predictions, i.e. predictions about general structure of the preferences of two agents. Section~\ref{sec:logn_space} presents a $4$-consistent and $\lceil m/2 \rceil$-robust mechanism, whose consistency relies on the correctness of only a $\log m$-bit prediction about the preferences of the two agents. In Section~\ref{sec:logn_eps_space}, we show that a  $2+\epsilon$-consistent and $\lceil m/2 \rceil$-robust mechanism exists, whose consistency relies on correctly predicting only $O(\log m/\epsilon)$ bit about the preferences of the two agents.

\subsection{No Mechanism with $< 6/5$ Consistency and Bounded Robustness}~\label{sec:general_lb}

In Appendix~\ref{app:lb}, we show that no mechanism can simultaneously achieve a consistency guarantee \textit{strictly lower} than $6/5$ and any bounded robustness guarantee \textit{no matter which prediction is given}. We use the elegant characterization of~\cite{AmanatidisBCM17} for 2-agent mechanisms and show that in any truthful mechanism with finite approximation ratio, if the valuations are identical, then each agent gets at least one of the two largest items. Thus in the instance where the prediction is $p_1=p_2=(1/2,1/2,1/3,1/3,1/3)$, each agent gets one item of value $1/2$. This implies that there is an agent with an allocation of value $1/2+1/3 = 5/6$ (and an agent with value $7/6$), while $\mu_1=\mu_2=1$.

\begin{restatable}{theorem}{predlb}
\label{thm:pred_lb}
For any $\epsilon>0$, there is no truthful  a mechanism with  consistency $6/5-\epsilon$ and bounded robustness.
\end{restatable}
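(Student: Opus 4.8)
The plan is to prove the lower bound by reasoning directly about what any truthful, finitely-robust mechanism must do on a carefully chosen prediction and valuation profile, exploiting the characterization of two-agent truthful mechanisms from \cite{AmanatidisBCM17}. Following the sketch in Section~\ref{sec:general_lb}, I would fix the prediction to be arbitrary and then consider the specific valuation profile where both agents have identical values $v_1 = v_2 = (1/2, 1/2, 1/3, 1/3, 1/3)$ over five items, so that $\mu_1 = \mu_2 = 1$ (the optimal maximin partition groups the two $1/2$-items against the three $1/3$-items). The key structural claim to establish is that \emph{any} truthful mechanism with finite approximation ratio must, on this identical-valuation instance, give each agent at least one of the two items of value $1/2$. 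The intuition is that finite robustness forces each agent to receive a positive-value item, and the \cite{AmanatidisBCM17} characterization restricts the allowed allocation rules enough that, when valuations are symmetric, neither agent can be denied access to a top item without violating either truthfulness or the finite-approximation requirement.

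The first concrete step is to invoke the \cite{AmanatidisBCM17} characterization to pin down the admissible form of the allocation as a function of the reports (with the prediction held fixed, a learning-augmented mechanism is, for each fixed $\bp$, an ordinary truthful two-agent mechanism, so the characterization applies). I would then argue that finite robustness rules out the degenerate allocations in which one agent receives no high-value item: if some agent could be left with total value at most, say, the three $1/3$-items summing to $1$ while being structurally barred from a $1/2$-item across \emph{all} nearby profiles, one constructs a companion profile on which that agent's MMS is positive but its allocated value is zero, contradicting boundedness. Combining this with the symmetry of the two agents' reports yields the structural claim that each agent ends up with exactly one of the two $1/2$-items.

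Granting the structural claim, the arithmetic is immediate and matches the sketch: the two $1/2$-items are split one per agent, and the three remaining $1/3$-items must be distributed between the two agents. By pigeonhole, one agent receives at most one $1/3$-item, hence has value at most $1/2 + 1/3 = 5/6$, while $\mu_i = 1$. Therefore that agent's approximation ratio is at least $\mu_i / v_i(X_i) \ge 1/(5/6) = 6/5$, so no consistency strictly below $6/5$ is attainable while keeping robustness bounded. Since the prediction $\bp$ was arbitrary, the bound holds uniformly over all predictions, which is exactly what the theorem asserts.

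The main obstacle I anticipate is the structural claim that each agent must receive a $1/2$-item, since this is where the full force of the \cite{AmanatidisBCM17} characterization and the finite-robustness hypothesis must be combined correctly. In particular, one must be careful that the characterization is applied on the instance family obtained by perturbing the reports around the symmetric profile (not merely at the symmetric profile itself), and that truthfulness is used to propagate the allocation behavior from these perturbed profiles back to the symmetric one; the arithmetic pigeonhole step afterwards is routine. I would therefore concentrate the detailed argument in Appendix~\ref{app:lb} on verifying that the only allocation rules surviving both truthfulness and finiteness on this instance are those splitting the two top items between the agents.
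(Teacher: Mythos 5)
Your setup matches the paper's: the same instance $v_1=v_2=(1/2,1/2,1/3,1/3,1/3)$ with $\mu_1=\mu_2=1$, the same observation that once the prediction is fixed, a truthful learning-augmented mechanism with bounded robustness is an ordinary truthful two-agent mechanism with bounded approximation (so the characterization of \cite{AmanatidisBCM17} applies), and the same pigeonhole arithmetic $1/2+1/3=5/6$ at the end. The genuine gap is in how you propose to establish the structural claim that the two $1/2$-items must be split between the agents. You argue by contradiction via robustness: if an agent were left with only the three $1/3$-items, you would build a ``companion profile'' on which that agent gets zero value, contradicting bounded robustness. This cannot work, because the allocation you need to exclude --- one agent receiving both $1/2$-items and the other all three $1/3$-items --- gives \emph{both} agents value exactly $1=\mu_i$; it is the unique exact-MMS partition of this instance and is in no way degenerate, so there is no low-value event, on this profile or nearby ones, from which a robustness contradiction could be extracted. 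The claim to be proven is not that this allocation harms an agent (it does not); it is that the structure of truthful, boundedly-approximating mechanisms makes it unreachable.

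In the paper's proof (Appendix~\ref{app:lb}), robustness is used exactly once, to invoke the characterization that the mechanism must be a singleton picking-exchange mechanism; everything afterwards is a purely combinatorial case analysis over the possible configurations of picking sets $N_1,N_2$ and exchange singletons $E_1,E_2$ (both picking sets; one picking set and one exchange singleton, recalling that for $m>2$ at most one exchange set exists and no exchange ever occurs). In every configuration, whichever set contains the large items, the picking rule routes exactly one large item to each agent, so the split is forced by structure alone, uniformly over all singleton picking-exchange mechanisms. Your proposal defers precisely this content to the unproven assertion that an agent can be ``structurally barred from a $1/2$-item across all nearby profiles,'' and the repair mechanism you suggest (perturbing reports around the symmetric profile and propagating by truthfulness) is neither needed nor sufficient. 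To complete your proof, replace the companion-profile argument with the direct case analysis on $N_1,N_2,E_1,E_2$.
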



\subsection{$4$-Consistent, $(m-1)$-Robust Mechanism Using a $\log m+1$-Space Prediction} \label{sec:logn_space}
Let us formally define a mechanism that uses a space-$s$ prediction
\begin{definition}
    A learning-augmented mechanism is a space-$s$ mechanism if the prediction space $\mathcal{P}$ can be represented by the elements of $\{0,1\}^s$.
\end{definition}

We first give a simple mechanism that only requires $\log m+1$ bits of information about the valuations $v_1$ and $v_2$. It will only need to know an index $j_0$ in $[m]$ together with a bit $b$. The mechanism will utilize the \pas\ framework in conjunction with the well-known water-filling allocation procedure:

\begin{mechanism}
  \SetAlgoNoEnd\SetAlgoNoLine
  \SetKwInOut{Input}{Input}
  \SetKwInOut{Output}{Output}
  \DontPrintSemicolon
\Input{Preference orders of agents over items  $\bv = (v_1,v_2)$ on a set of items $M=[m]$}
  \Output{Allocations  $A_1 \bigcupdot A_2 = M$  }
    $j\gets 1$\\
    \For{$j=1,\ldots, m$:}
    {
        \lIf{$\frac{v_1([j])}{v_1([m])} \geq \frac{1}{2}$}
            {
            Output $(A_1, A_2)\gets ([j], [m]\setminus [j])$ and terminate
            }
            \lIf{$\frac{v_2([j])}{v_2([m])} \geq \frac{1}{2}$}
            {
            Output $(A_1, A_2)\gets ( [m]\setminus [j], [j])$ and terminate
            }
    }
\caption{\texttt{Water-Filling}}
\label{alg:water-filling}
\end{mechanism}
We see that, in order to predict the behaviour of the mechanism above, one only needs to predict accurately the index $j_0$ on which the mechanism terminates, as well as a bit $b\in\{1,2\}$ that encodes whether the algorithm terminates due to the condition $\frac{v_1([j])}{v_1([m])} \geq \frac{1}{2}$ being satisfied or due to the condition $\frac{v_2([j])}{v_2([m])} \geq \frac{1}{2}$ being satisfied. This can be encoded using $\log m+1$ bits.

We also see that the
    The \pas\ framework 
    when used with the \texttt{Water-Filling} allocation procedure gives a truthful $4$-consistent and a $ m-1$-robust\footnote{Note that $\min(|A_1|, |A_2|)\leq m-1$ which implies that the algorithm is $(m-1)$-robust.} allocation mechanism. The truthfulness and robustness follow immediately from Lemmas \ref{lem:pas-truth} and \ref{lem:pas-robust} respectively. 

The $4$-consistency holds for the following reason. It is a well-known fact (see i.e. \cite{AmanatidisABFLMVW23}) that the partition $(A_1, A_2)$ given by the water-filling algorithm satisfies $v_1(A_1)\geq \mu_1/2$ and $v_2(A_2)\geq \mu_2/2$. By inspecting the \pas\ framework (Algorithm \ref{alg:pas}), we see that both agent 1 and agent 2 will either (i) retain their most preferred item in $A_1$ and $A_2$ respectively or (ii) Lose this item, but obtain an item that they prefer even more. Overall, this implies that in the worst case the difference $v_1(A_1)-v_1(X_1)$ will equal to the value of the second-most favorite item of Agent 1 in $A_1$. This implies that $v_1(X_1)\geq \frac{1}{2}v_1(A_1)\geq \frac{\mu_1}{4}$. Analogously, we see that $v_1(X_2)\geq \frac{1}{2}v_1(A_2)\geq \frac{\mu_2}{4}$. 
    
\subsection{$2+\epsilon$-Consistent, $\lceil \frac{m}{2}\rceil$-Robust Mechanism Using a $O(\log m / \epsilon)$-Space Prediction} \label{sec:logn_eps_space}

We now show that a better consistency of $2+\epsilon$ can be achieved at the cost  predicting $O(\log m / \epsilon)$ bits of information about the valuations $v_1$ and $v_2$. We will also obtain a better robustness of $\lceil \frac{m}{2}\rceil$.
To do this, we will use the \pas\ framework in conjunction with the \texttt{Cut-and-Balance} allocation procedure. 
\begin{algorithm}[h]
  \SetAlgoNoEnd\SetAlgoNoLine
  \SetKwInOut{Input}{Input}
  \SetKwInOut{Output}{Output}
  \DontPrintSemicolon
  \Output{Allocations  $A_1 \bigcupdot A_2 = M$  }
    Consider a partition $S_1 \bigcupdot S_2=M$  satisfying $|S_1| \geq |S_2|$ and
    \[
    \min_{j\in \{1,2\}} v_1(S_j) \geq (1-\epsilon) {\max}_{T_1\bigcupdot T_2 = M}\min_{j\in \{1,2\}} v_1(T_j)=(1-\epsilon)\mu_1\] \\
 \noindent Let $S'\subset S_1$ be a set of $\lfloor m/2 \rfloor -|S_2|$ items satisfying 
 \begin{itemize}
     \item $v_1(S') \leq v_1(S_1)/2$
     \item if $|S_2|>1$ additionally satisfying $v_1(S')\leq v_1(S_1\setminus\{\hat{j}, \hat{j}'\})/2$, for some \\ $\hat{j} \in \arg \max_{j \in S_1} v_1(j)$ and $\hat{j}' \in \arg \max_{\ell \in S_1\setminus \hat{j}} v_1(\hat{j}) $
     \end{itemize}
\noindent
 Set $\tilde{S}_1 \gets S_1 \setminus S'$ and $\tilde{S}_2 \gets S_2 \cup S'$ \;
 Let $i_2 \gets \arg\max_{i\in\{1,2\}} p_2(\tilde{S}_i)$ and let $i_1$ be the index of the other bundle\\
 Set $A_1 \gets S_{i_1}$ and $A_2 \gets S_{i_2}$, and output the allocation $(A_1, A_2)$ \;
\caption{\texttt{Cut-and-Balance}}
\label{alg:Cut-and-Balance}
\end{algorithm}
We first explain how the mechanisms above can be implemented by only obtaining $O(\log m /\epsilon)$ bits of information about the valuations $v_1$ and $v_2$. This follows from the following proposition, the proof of which is given in Appendix \ref{subsec: proof of succinctness}.
\begin{proposition}
\label{prop: there are succinct descriptions}
Suppose $M=[m]$.  
    There is a partition $M=L_1 \bigcupdot L_2 \bigcupdot S$ and indices $\alpha_1,\beta_1, \alpha_2$ and $\beta_2$ with $|L_1|+|L_2|\leq O\left( \frac{1}{\epsilon} \right)$, such that the partition $M= S_1 \bigcupdot S_2$ defined as $S_1 = L_1 \bigcup (S \bigcap [\alpha_1, \beta_1])$ and $S_2 = L_2 \bigcup (S \bigcap [\alpha_2, \beta_2])$ satisfies $|S_1|\geq |S_2|$ and $\min(v_1(S_1),v_1(S_2))\geq (1-\epsilon/4)\mu_1$. 
    
    Additionally, there exist integers $\alpha_3, \beta_3, \alpha_4$ and $\beta_4$ such that the set $S'=S\bigcap \left( [\alpha_3, \beta_3] \bigcup [\alpha_4, \beta_4] \right)$ satisfies $|S'|=\lfloor m/2\rfloor -|S_2|$, $S'\subset S_1$, $v_1(S')\leq v_1(S_1)/2$ and if $|S_2|>1$ then  $S'$ also satisfies $v_1(S')\leq v_1(S_1\setminus\{\hat{j}, \hat{j}'\})/2$, where $\hat{j} \in \arg \max_{j \in S_1} v_1(j) $ and $\hat{j}' \in \arg \max_{j \in S_1\setminus \hat{j}} v_1(j) $.
\end{proposition}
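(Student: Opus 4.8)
The plan is to prove \Cref{prop: there are succinct descriptions} by showing that the partition guaranteed by the \texttt{Cut-and-Balance} procedure can always be chosen to have a \emph{contiguous} structure when items are indexed appropriately, which in turn makes it describable by $O(1/\epsilon)$ integers. The key idea is to sort the items by value according to $v_1$ and argue that near-optimal maximin partitions only need to separate out a constant number of ``large'' items exactly, while the remaining ``small'' items can be handled in contiguous blocks. First I would fix a notion of large items: call an item large if its $v_1$-value exceeds $\epsilon\cdot v_1(M)/C$ for an appropriate constant $C$, and observe that there are at most $O(1/\epsilon)$ large items. These large items form the sets $L_1, L_2$; the remaining small items form $S$, which I would relabel in sorted order of $v_1$-value so that intervals $[\alpha, \beta]$ pick out contiguous runs of comparably-valued items.

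The main technical content is to show that among all near-optimal partitions, there exists one in which each side's small-item component is a union of a bounded number of value-contiguous intervals. Here I would invoke an exchange argument: starting from any partition $(T_1, T_2)$ achieving $\min_j v_1(T_j)\ge (1-\epsilon/4)\mu_1$, I would repeatedly swap small items of nearly-equal value between the two sides. Because small items each contribute at most $\epsilon v_1(M)/C \le O(\epsilon \mu_1)$ to either bundle, any such swap perturbs $\min_j v_1(T_j)$ by only $O(\epsilon\mu_1)$, so by choosing $C$ large enough we can guarantee the perturbed partition remains $(1-\epsilon/4)$-near-optimal while also enforcing that the small items on each side occupy a single contiguous interval in the sorted order (or at most two, once the cardinality constraint is taken into account). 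This gives the indices $\alpha_1, \beta_1, \alpha_2, \beta_2$. I would also need to ensure $|S_1|\ge |S_2|$, which can be arranged by relabeling the two sides.

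For the second part, concerning $S'$, I would similarly argue that the required subset of $S_1$ of size exactly $\lfloor m/2\rfloor - |S_2|$ with $v_1(S')\le v_1(S_1)/2$ (and, when $|S_2|>1$, $v_1(S')\le v_1(S_1\setminus\{\hat j,\hat j'\})/2$) can be taken to be a union of two contiguous intervals $[\alpha_3,\beta_3]\cup[\alpha_4,\beta_4]$ of small items. The idea is that to control both cardinality and value simultaneously, it suffices to take a prefix and a suffix of the sorted small items in $S_1$: moving items of nearly-equal value in or out changes the value by $O(\epsilon\mu_1)$ per item, so a greedy selection of the smallest-valued items up to the cardinality bound, adjusted by a contiguous block, achieves $v_1(S')\le v_1(S_1)/2$. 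The existence of \emph{some} valid $S'$ with the halving property is exactly what the \texttt{Cut-and-Balance} procedure requires, and the constant-interval structure is what yields the $O(\log m/\epsilon)$-bit encoding (each of the $O(1/\epsilon)$ endpoints and each large item's index costs $O(\log m)$ bits).

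The hardest part will be making the exchange argument fully rigorous while simultaneously respecting \emph{both} the value constraint (near-optimality, and the halving condition for $S'$) \emph{and} the exact cardinality constraint $|S'| = \lfloor m/2\rfloor - |S_2|$, since these can pull in opposite directions: enforcing the cardinality might force the inclusion of an item whose value disturbs the halving inequality. I expect the resolution to hinge on the fact that small items are genuinely small (value $O(\epsilon\mu_1)$), so that a single extra small item never breaks the $(1-\epsilon)$-type slack; the large items, which could break it, are already isolated in $L_1, L_2$ and handled by explicit index storage rather than by interval selection. Verifying that the special items $\hat j, \hat j'$ (the two largest in $S_1$) are excluded cleanly from the interval selection for $S'$ is the one place requiring care, but since $\hat j, \hat j'$ are large and hence not in $S$, the contiguous small-item intervals automatically avoid them.
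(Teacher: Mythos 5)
There is a genuine gap, and it sits at the heart of what the proposition is for. Your construction relabels the small items ``in sorted order of $v_1$-value'' and produces intervals in that sorted order. But the intervals $[\alpha_i,\beta_i]$ in \Cref{prop: there are succinct descriptions} must be intervals of the \emph{given} labels $M=[m]$: the whole point (see the paragraph following the proposition) is that the mechanism reconstructs $S_1 = L_1\cup(S\cap[\alpha_1,\beta_1])$ from $O(\log m/\epsilon)$ bits alone, without knowing $v_1$. If the intervals live in the value-sorted order, decoding them requires the sorted order of the items under $v_1$, which is $\Theta(m\log m)$ bits --- exactly the information the proposition is supposed to avoid transmitting. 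The paper's proof is careful to work with an arbitrary fixed labeling: for the first half it fixes $(L_1,L_2)$ to be an optimal split of the large items and runs a two-pointer water-filling pass over the small items \emph{in the given index order} (\Cref{prop: good approximation to MMS using small space}), whose invariant $|v_1(S_\ell)-v_1(S_r)|\le\epsilon\mu_1/4$ yields contiguity in the canonical order; for the second half it proves an averaging lemma (\Cref{lemma: dividing set in two contiguously}): a uniformly random cyclic window of the prescribed cardinality has expected value equal to the proportional share, so some window --- a union of at most two intervals in the given order --- has average at most the overall average, and combining this with ``every item of $L_1$ exceeds every item of $S\cap[\alpha_1,\beta_1]$'' gives $v_1(S')\le v_1(S_1)/2$ and the $\{\hat j,\hat j'\}$ variant. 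Your suffix-of-smallest-items idea for $S'$ is correct, but again only as an interval of the sorted order.

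Two further problems would bite even if sorting were allowed. First, you declare an item large when $v_1(x)>\epsilon\, v_1(M)/C$; since $v_1(M)/\mu_1$ is unbounded (one huge item plus many unit items), your ``small'' items can have value far exceeding $\epsilon\mu_1$, so the claimed bound $\epsilon\, v_1(M)/C\le O(\epsilon\mu_1)$ is false and the perturbation estimates collapse. The threshold must be relative to $\mu_1$ (the paper uses $\epsilon\mu_1/4$), and then the $O(1/\epsilon)$ count bound needs the pigeonhole argument (more than $\lceil 8/\epsilon\rceil$ such items would yield a partition whose minimum exceeds $\mu_1$), not a total-mass argument. Second, the exchange argument is not sound as stated: converting an arbitrary near-optimal partition into a contiguous one can require $\Omega(m)$ swaps, and per-swap errors of up to the small-item threshold accumulate, so the final partition need not remain $(1-\epsilon/4)$-near-optimal; choosing ``$C$ large enough'' does not help, because shrinking the threshold inflates $|L_1|+|L_2|$ beyond $O(1/\epsilon)$. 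This is precisely why the paper never perturbs a given partition: it constructs the contiguous one directly in a single pass whose final imbalance is bounded by a \emph{single} small-item value, with a separate case analysis when all small items end up on one side.
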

The main ideas for proving Proposition \ref{prop: there are succinct descriptions} are: (i) using the sets $L_1$ and $L_2$ to handle elements $x$ whose value $v(x)$ is large, and separate the remaining items into the set $S$ (ii) Showing that the remaining items can be separated into well-behaved subsets of the form $S\bigcap [\alpha_i, \beta_i]$.   

The proposition above implies that the sets $S_1, S_2$ and $S'$ can be represented exactly via sets $L_1$ and $L_2$, together with the indices $\{\alpha_1,\cdots, \alpha_4, \beta_1, \cdots \beta_4\}$. We will also need to know the index $i_2 \in \{1,2\}$. Since the sets $L_1$ and $L_2$ have a size of $O(1/\epsilon)$, all this information amounts to $O(\log m/\epsilon)$ bits as claimed.

The following proposition implies the truthfulness, the robustness and the consistency of the mechanism that combines the \texttt{Cut-and-Balance} allocation procedure with the \pas\ framework.
\begin{theorem}
    The \pas\ framework, when used with \texttt{Cut-and-Balance} allocation procedure, gives a truthful, $2+\epsilon$-consistent and a $\lceil m/2 \rceil$-robust allocation mechanism. 
\end{theorem}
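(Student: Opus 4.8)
The plan is to verify the three properties separately, leveraging the general \pas\ lemmas for truthfulness and robustness, and doing the real work for consistency. Truthfulness is immediate from \Cref{lem:pas-truth}, which holds for \emph{any} allocation procedure $\alg$, so in particular for \texttt{Cut-and-Balance}. For robustness, I would invoke \Cref{lem:pas-robust}: it suffices to show $\min\{|A_1|,|A_2|\}\ge \lfloor m/2\rfloor$, since then the mechanism is $(m-\lfloor m/2\rfloor)=\lceil m/2\rceil$-robust. This follows from the construction: the procedure sets $\tilde S_1=S_1\setminus S'$ and $\tilde S_2=S_2\cup S'$ where $|S'|=\lfloor m/2\rfloor-|S_2|$, so $|\tilde S_2|=\lfloor m/2\rfloor$ and $|\tilde S_1|=m-\lfloor m/2\rfloor=\lceil m/2\rceil$; since $(A_1,A_2)$ is just a relabeling of $(\tilde S_1,\tilde S_2)$, both bundles have size at least $\lfloor m/2\rfloor$.

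The heart of the proof is consistency. When the prediction is accurate, the procedure produces a balanced partition $\tilde S_1,\tilde S_2$ of $M$ with $\min_j v_1(\tilde S_j)$ close to $\mu_1$ (up to the $(1-\epsilon)$ slack), and then assigns the bundles to agents using agent $2$'s predicted preference $p_2$. I would first argue that the base allocation $(A_1,A_2)$ is already a good approximation for \emph{both} agents: agent $1$ is indifferent between the two balanced bundles up to the $(1-\epsilon)$ factor (both exceed $(1-\epsilon)\mu_1$), and agent $2$ receives the bundle she prefers according to $p_2$, which when the prediction is accurate is her genuinely preferred half, so $v_2(A_2)\ge v_2(M)/2\ge \mu_2/2$. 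The subtlety is that \pas\ then perturbs $(A_1,A_2)$ in the plant-and-steal phase, so I must bound the loss each agent incurs. As in the analysis sketched for \texttt{Water-Filling}, each agent either keeps her favorite item in her bundle or trades it for something she values at least as much; the worst-case loss to agent $i$ is therefore at most the value of her \emph{second}-favorite item within her bundle. The two technical conditions imposed on $S'$ — namely $v_1(S')\le v_1(S_1)/2$ and, when $|S_2|>1$, $v_1(S')\le v_1(S_1\setminus\{\hat j,\hat j'\})/2$ — are exactly what is needed to guarantee that after removing one large item the remaining value in each bundle is still at least half of the bundle's value, so the stealing phase costs each agent at most a factor of $2$.

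Concretely, for agent $1$ I would show $v_1(X_1)\ge \tfrac12 v_1(A_1)\ge \tfrac12(1-\epsilon)\mu_1$, absorbing the constant into $2+\epsilon$ by rescaling $\epsilon$; the case analysis on whether $|S_2|=1$ or $|S_2|>1$ (and hence whether agent $1$'s bundle is $\tilde S_1$ or $\tilde S_2$) is where the second condition on $S'$ is used, since it controls the value of the two largest items that could be displaced. For agent $2$, the planting/stealing again costs at most a factor $2$ relative to $v_2(A_2)\ge \mu_2/2$, and combined with \Cref{lem:pcasseond} (guaranteeing agent $2$ retains one of her top two items) this yields $v_2(X_2)\ge \mu_2/(2+\epsilon)$. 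The main obstacle I anticipate is the bookkeeping in this last step: one must track which of $\tilde S_1,\tilde S_2$ each agent ends up with after the $p_2$-based assignment, verify that the planted item and the stolen item are handled consistently with the balance conditions, and confirm that the $(1-\epsilon)$ slack in defining $S_1,S_2$ together with the two halving conditions on $S'$ compose to give exactly $2+\epsilon$ rather than a worse constant. Because this mirrors the \texttt{Water-Filling} argument but with a more delicately balanced partition, I expect the proof to be a careful but routine elaboration of that earlier sketch rather than a conceptually new argument.
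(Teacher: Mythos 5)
Your truthfulness and robustness arguments are correct and essentially identical to the paper's (your size accounting, $|\tilde S_2|=\lfloor m/2\rfloor$ and $|\tilde S_1|=\lceil m/2\rceil$, is if anything stated more precisely than in the paper). The gap is in consistency, and it is precisely the difficulty the paper identifies as the heart of this theorem. Your argument rests on the claim that the base allocation is worth at least $(1-\epsilon)\mu_1$ to agent 1 whichever bundle she receives (``agent 1 is indifferent between the two balanced bundles up to the $(1-\epsilon)$ slack''), after which you allow the plant-and-steal phase to cost a further factor of $2$ and conclude $v_1(X_1)\ge \tfrac12 v_1(A_1)\ge \tfrac12(1-\epsilon)\mu_1$. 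The indifference claim is false: the guarantee $\min\{v_1(S_1),v_1(S_2)\}\ge(1-\epsilon)\mu_1$ applies to the partition $(S_1,S_2)$ \emph{before} rebalancing, not to $(\tilde S_1,\tilde S_2)$. After $S'$ is moved across, all that is guaranteed is $v_1(\tilde S_1)\ge v_1(S_1)/2$, so $v_1(\tilde S_1)$ can be as small as roughly $\mu_1/2$; for instance, with values $(2,1,1,0)$ one may have $S_1=\{2,3,4\}$, $S_2=\{1\}$, $S'=\{2\}$, giving $v_1(\tilde S_1)=1=\mu_1/2$. Since agent 2 takes whichever of $\tilde S_1,\tilde S_2$ \emph{she} prefers under $p_2$, she may well take $\tilde S_2$ and leave agent 1 with $\tilde S_1$; your chain then yields only $v_1(X_1)\ge\tfrac12\cdot\tfrac12(1-\epsilon)\mu_1$, i.e., $4$-consistency, and no rescaling of $\epsilon$ can absorb a factor of $2$. (A smaller slip of the same kind appears for agent 2: from $v_2(A_2)\ge\mu_2/2$ and a claimed factor-$2$ stealing loss you would get only $\mu_2/4$; the correct starting point is $v_2(A_2)\ge v_2(M)/2\ge\mu_2$, since $v_2(M)\ge 2\mu_2$.)

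The missing idea is that the two potential factor-$2$ losses must be shown to be \emph{anti-correlated}, which is exactly the insight the paper states and then implements in its case analysis. When agent 1 ends up with the depleted bundle $\tilde S_1$ --- the case where Cut-and-Balance has already spent its factor of $2$ --- the paper proves that plant-and-steal loses essentially nothing: if agent 2 steals back the planted item ($\tilde j_2=\hat j_2$) then $v_1(X_1)\ge v_1(\tilde S_1)$ outright (agent 1's stolen item is at least as valuable as her planted one); otherwise one needs the second condition on $S'$, namely $v_1(S')\le v_1(S_1\setminus\{\hat j,\hat j'\})/2$, together with the counting bound $|S_1\setminus S'\setminus\{\hat j_1,\hat j'_1\}|\ge |S'|$, to show that what remains to agent 1 plus the item she steals is still at least $v_1(S_1)/2\ge(1-\epsilon/4)\mu_1/2$. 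Conversely, when agent 1 receives $\tilde S_2$ (worth roughly $\mu_1$ to her), a factor-$2$ stealing loss is affordable. So your reading of the conditions on $S'$ --- as capping the stealing loss at a factor of $2$ --- is not their actual role; they exist to eliminate the stealing loss in the one case where the allocation procedure has already lost a factor of $2$. Without this case analysis, composing worst-case losses per phase cannot give anything better than $4$.
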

\begin{proof}
Truthfulness follows from Lemma \ref{lem:pas-truth}.
Since the sets $A_1$ and $A_2$ both have size at most $\lceil m/2 \rceil$, the robustness follows via Lemma \ref{lem:pas-robust}. 

The proof of $(2+\epsilon)$-consistency is deferred to Appendix \ref{sec: proof of consistency}. The main challenge for showing the bound on consistency is the fact that both the \texttt{Cut-and-Balance} allocation procedure  and the  \pas\ framework  can reduce the consistency by a factor of $2$. Naively, one would expect the overall consistency to be close to $4$, given that each stage can lose a factor of $2$ in consistency. However, our insight is that for the instances, on which the \texttt{Cut-and-Balance} allocation procedure  loses a factor of $2$ in consistency, the \pas\ framework will have consistency close to $1$, and vice versa. This allows us to prove a tighter bound of $2+\epsilon$ on the consistency of our overall algorithm.
\end{proof}

\section{Mechanisms for $n$ agents} \label{sec:general}

In this section we provide a learning-augmented mechanism for $n>2$ agents, \texttt{Learning-Augmented-MMS-for-$n$-Agents}. 
The mechanism we devise ensures that if the predictions are accurate, 
then each agent gets an allocation with value at least $\mu^n_i/2$ (2 consistency). On the other hand, we show that for any prediction, 
every agent gets at least $\mu^{\lceil 3n/2 \rceil}_i/\alpha$ for $\alpha= m-\lceil 3n/2\rceil-1$ (robustness).

\begin{restatable}{theorem}{thmNAgents}\label{thm:n-agents-MMS}
The \texttt{Learning-Augmented-MMS-for-$n$-Agents} Mechanism (Mechanism~\ref{alg:n-agents-MMS}) is truthful, 2-consistent and $\mu^{\lceil 3n/2\rceil}_i/\alpha$-robust for $\alpha=m-\lceil 3n/2\rceil -1$.
\end{restatable}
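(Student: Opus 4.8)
The plan is to construct the mechanism \texttt{Learning-Augmented-MMS-for-$n$-Agents} as a recursive generalization of the two-agent \pas\ framework, and then establish truthfulness, consistency, and robustness as three independent arguments that mirror the structure already developed for $n=2$. For the initial allocation I would use the modified Round-Robin procedure from~\cite{AmanatidisMNS17} applied to the predictions $\bp$ to partition the items into $n$ bundles $A_1,\dots,A_n$, one tentatively assigned to each agent; by their result this gives a $2$-approximation to $\mu_i^n$ when predictions are accurate, which will be the source of $2$-consistency. The recursive plant-and-steal then proceeds by repeatedly splitting the current set of agents into two groups, having each group plant its members' current favorite items (according to predictions) into the other group's combined bundle, and then letting agents steal back one item each (according to reports), in an internal order that matches the Round-Robin round order. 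The key design decision is to choose the splitting and the internal stealing order carefully so that, on the one hand, accurate predictions cause every planted item to be stolen back by its original owner (restoring the Round-Robin allocation, as in Lemma~\ref{lem:restotwo}), and on the other hand, every agent is guaranteed to retain a highly-ranked item under their true valuation.

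For \textbf{truthfulness}, the argument is the same local one as Lemma~\ref{lem:pas-truth}: the only place an agent's report is used is in selecting which single item to steal from a set that is determined entirely by predictions and by other agents' reports, so reporting truthfully maximizes the stolen item's value and hence the agent's utility; I would make this precise by induction over the recursion levels, showing that at each level the set an agent steals from is independent of that agent's own report.

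For \textbf{consistency}, I would argue that when $\bp$ equals the true valuation order, every stealing step returns exactly the item that was planted, so the recursion unwinds to the original Round-Robin allocation $(A_1,\dots,A_n)$; combined with the $2$-approximation guarantee of the \cite{AmanatidisMNS17} procedure this yields $v_i(X_i)\ge \mu_i^n/2$. This is the direct analogue of combining Lemma~\ref{lem:restotwo} with Lemma~\ref{lem:rr2agent}.

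The \textbf{robustness} argument is where I expect the main obstacle to lie, and it is the crux of the theorem. The goal is to show that, for \emph{any} prediction, each agent $i$ ends up holding one of their $\lceil 3n/2\rceil$ most valuable items according to $v_i$; this is the $n$-agent generalization of Lemma~\ref{lem:pcasseond}, which guaranteed one of the top two items for $n=2$. Once this is established, an argument in the spirit of Lemma~\ref{lem:twosecondandkitems} — bounding $\mu_i^{\lceil 3n/2\rceil}$ against the value of a bundle containing a top-$\lceil 3n/2\rceil$ item plus at most $m-\lceil 3n/2\rceil-1$ other items — delivers the $(m-\lceil 3n/2\rceil-1)$-approximation to the relaxed benchmark $\mu_i^{\lceil 3n/2\rceil}$. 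The hard part will be the combinatorial bookkeeping of the recursion: I must track, across all levels of the recursive partitioning, how many items each agent can be forced to give up to agents in the other group, and prove that the carefully chosen internal stealing order at each Round-Robin round limits the rank of the worst item an agent can be left holding to $\lceil 3n/2\rceil$. I would prove this by a potential/counting argument over the recursion tree, charging each item an agent loses to a distinct higher-preference item they are guaranteed to keep, and using the specific order of the Round-Robin rounds to ensure the $\lceil 3n/2\rceil$ bound rather than a weaker one; verifying that the chosen order is exactly the one needed for both consistency (planted items returning) and robustness (rank guarantee) simultaneously is the delicate point.
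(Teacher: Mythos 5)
Your plan follows the paper's own route step for step: the same mechanism (modified Round-Robin of \cite{AmanatidisMNS17} on predictions, then a recursive split/plant/steal phase), the same truthfulness argument (an agent's report is only ever used to pick one item from a set that their report cannot influence, and stolen-from sets are never revisited), the same consistency argument (accurate predictions make every steal return the planted item, so the recursion unwinds to the Round-Robin allocation, as in \Cref{lem:restotwo}, and \Cref{thm:Aman-RR-MMS/2} gives the factor $2$), and the same two-step robustness decomposition (a rank-$\lceil 3n/2\rceil$ guarantee, generalizing \Cref{lem:pcasseond}, followed by a rank-to-MMS conversion, which is the paper's \Cref{lem:lb-mu_i}). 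On truthfulness and consistency your sketch is essentially complete.

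The genuine gap is in the robustness crux, which you correctly flag as the hard part but do not execute, and the two ideas you offer in its place would not suffice. First, the charging scheme you describe --- ``charging each item an agent loses to a distinct higher-preference item they are guaranteed to keep'' --- proves the wrong statement: to conclude that agent $i$ ends up with an item of rank at most $\lceil 3n/2\rceil$, one must bound the \emph{total number of items ever allocated to others before $i$ has a chance to take them} (the ``gray'' items in the paper's \Cref{lem:3n/2}) by $\lceil 3n/2\rceil-1$; an injection from lost items into kept items bounds neither that count nor the rank of the best kept item. Second, the constant $\lceil 3n/2\rceil$ cannot be obtained by ``carefully choosing'' the stealing order in the abstract; it hinges on a concrete design feature you never identify, namely that the agent ordering is \emph{reversed} between the first and all subsequent levels of the recursion (the reversal step in Algorithm~\ref{alg:part-plant-steal-recurse}). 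With a fixed order, the agent in the last position $\ell=n$ loses roughly $\lceil n/2^k\rceil$ items to own-group predecessors at every level $k$, summing to about $n$; added to the $n-1$ items lost to opposite-group agents (the bound $\sum_k |N^k_{\neg b^k}|\le n-1$, proved via the recursion $T(\ell)\le \ell-1$ of \Cref{clm:sum-recursion}), this yields only a rank-$2n$ guarantee. The reversal makes the first-level own-group loss $\lceil\ell/2\rceil-1$ and the later-level losses (roughly $\lceil n/2\rceil-\lceil\ell/2\rceil$ in total) complementary, so the own-group total is about $\lceil n/2\rceil-1$ \emph{independently of $\ell$}, and only then does the grand total come to $\lceil 3n/2\rceil-1$. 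Without this reversal and the accompanying bookkeeping, the specific constant in the theorem is unsupported.
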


\subsection{An Overview}
\paragraph{The Mechanism.} The mechanism works in three phases. In the first phase, 
it uses the predictions in order to obtain a partial allocation to agents with  high predicted items (which are then removed from the set of active agents, so that we can now that for all agents, all predicted values are small). 
Then, in the second stage, the mechanism 
uses the predictions in order to obtain  a \emph{tentative allocation}, by running a Round-Robin procedure, where items are tentatively allocated to agents according to their predictions. 
In the third and final phase, the tentative allocation is used to implement a \textit{recursive} plant and steal procedure, where the ``planting'' is done from the tentative allocations according to predictions, but the ``stealing'' is done according to the agents' reports and results in a \emph{final allocation}.

\begin{mechanism}
\caption{\texttt{Learning-Augmented-MMS-for-$n$-Agents}}
\label{alg:n-agents-MMS}
  \SetKwInOut{Input}{Input}
  \SetKwInOut{Output}{Output}
\Input{Set of agents $N$, set of items $\M$, reports $\br_N$, predictions $\bp_N$}
\Output{ A partition of the items $\bigcupdot_{i\in N} X_i$}
     Invoke Algorithm~\ref{alg:allocate-large}, $X \gets $ \texttt{Allocate-Large}$(N, M, \br_N, \bp_N)$\\
     Invoke Algorithm~\ref{alg:tentative-alloc}, $A \gets $\texttt{Tentative-Allocation-Round-Robin}$(N,M,\bp_N)$\\
     Invoke Algorithm~\ref{alg:part-plant-steal-recurse}, $X \gets $\texttt{Split-Plant-Steal-Recurse}$(N,A,$\textit{first-level-flag} = \textbf{True}$, \aee{X,\br_N, \bp_N} )$
\end{mechanism}

\begin{figure}[h]
        \centering
    \begin{minipage}{0.9\textwidth}
        \centering
        \includegraphics[width=\linewidth,height=.9\textheight,keepaspectratio]{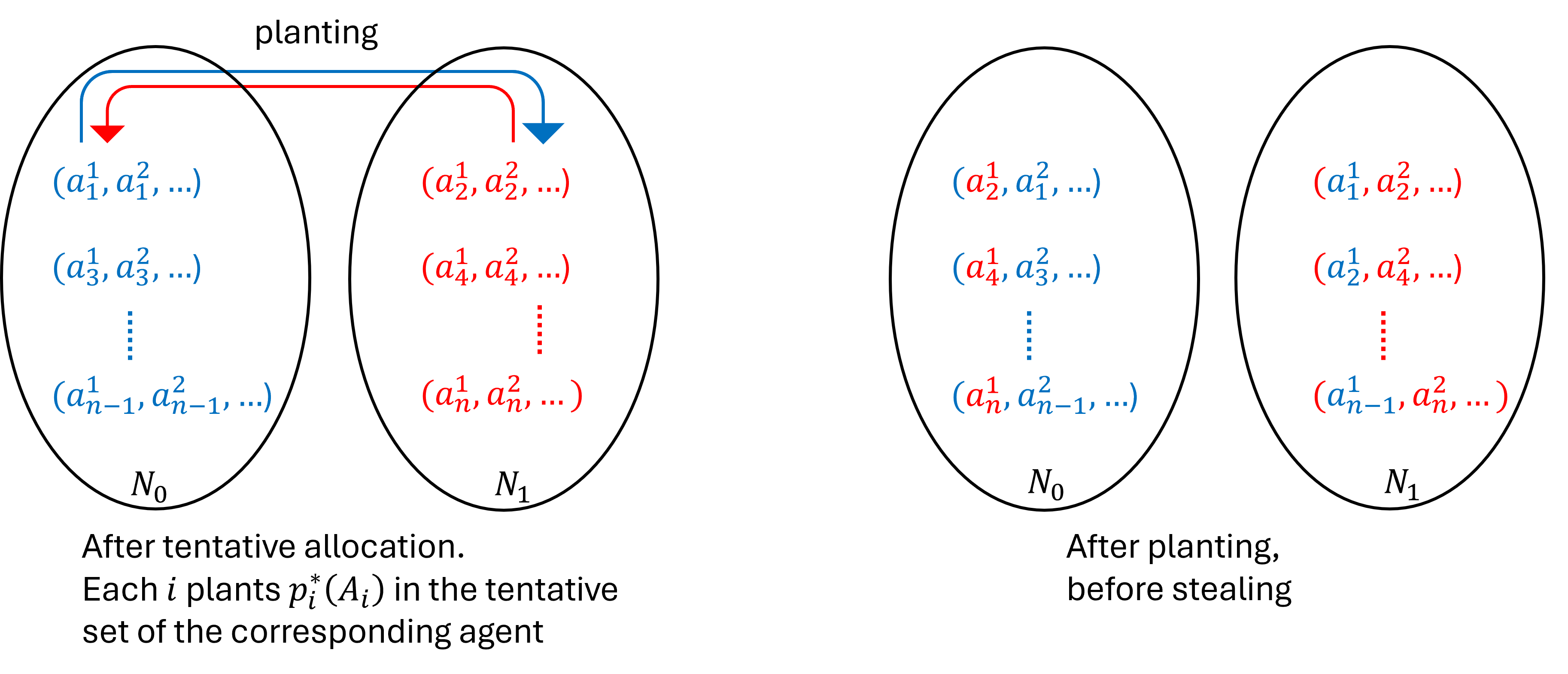} 
    \end{minipage}
    \begin{minipage}{0.9\textwidth}
        \centering
        \includegraphics[width=\linewidth,height=.9\textheight,keepaspectratio]{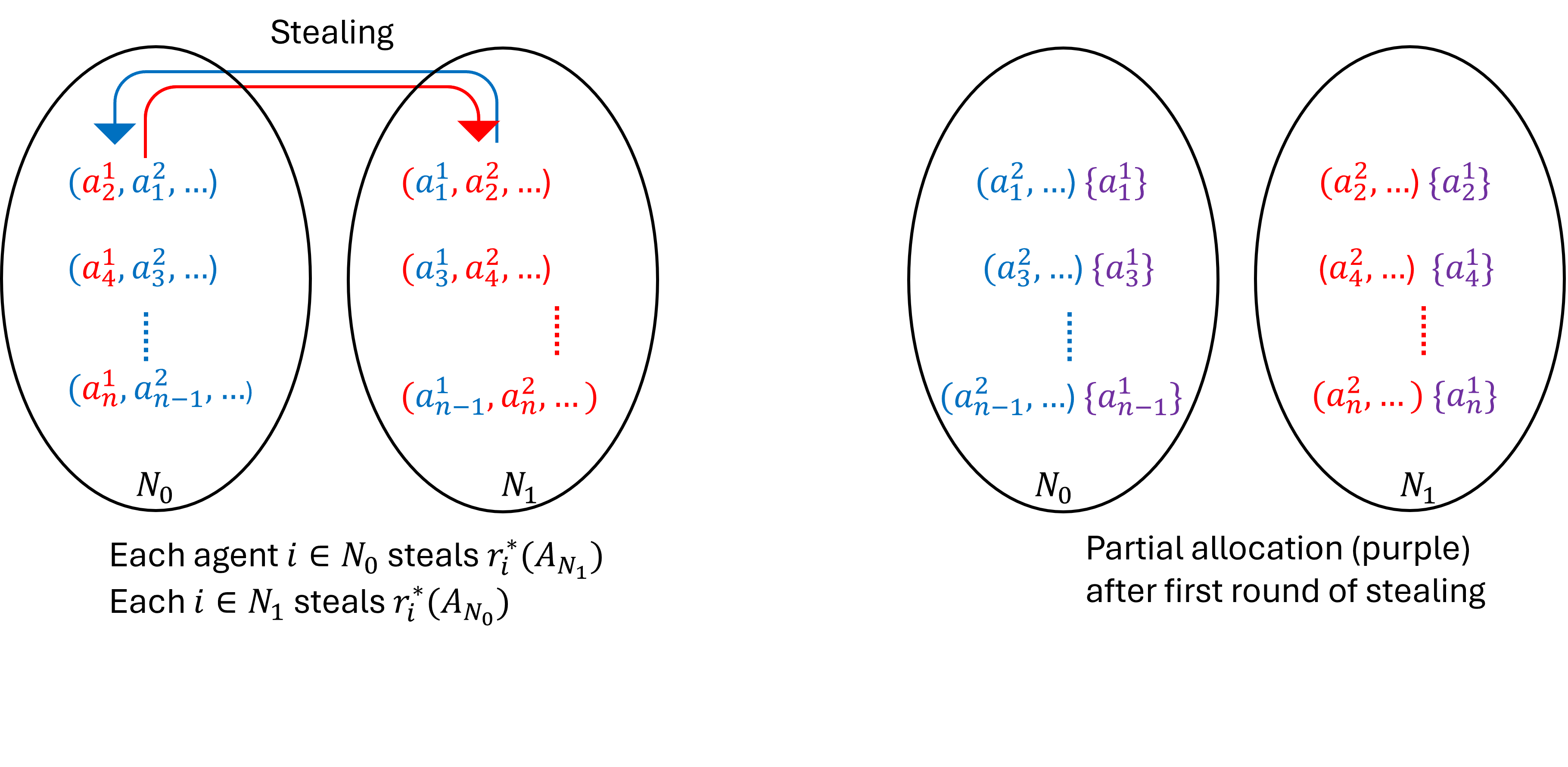} 
    \end{minipage}
    \vspace{-1.2cm}
    \caption{Illustration of a single round of the recursive planting and stealing phase (Algorithm~\ref{alg:part-plant-steal-recurse}), for the case where predictions are accurate (so that each agent steals back their planted item). Note that the stealing is done from the union of items of  agents in the opposite set (and not just from the corresponding agent).}
    \label{fig:1-planting}
\end{figure}

\paragraph{Consistency.} In the case the predictions are accurate, the initial allocation phase will take care of agents with high valued items (of value larger than $\mu_i^n/2$). Then, in the second phase, the tentative allocation will be exactly identical to a Round-Robin allocation (made according to true valuations).  
Finally, in the third phase, since 
 agents steal in the same order they were allocated the items in the Round-Robin allocation, and since the predictions are accurate, the agents ``steal'' back the same item the mechanism plants. Since a Round-Robin allocation achieves $\mu^{n}_i/2$ when there are no agents with high valued items~\cite{AmanatidisMNS17}, correctness follows.


\paragraph{Robustness.} In the case the predictions are inaccurate, we show that every agent still gets 
at least $\mu^{\lceil 3n/2\rceil}_i/\alpha$. Here we rely on the plant-and-steal phase to ensure that each agent gets at least their $\lceil 3n/2 \rceil$ highest-valued item according to their true valuation. This property provides our robustness guarantee. We notice that reversing the order between the first and subsequent rounds of the Round-Robin procedure (and thus, the stealing phases) gives an enhanced robustness guarantee.

\paragraph{Prediction.} In the description of the mechanism, we assume the mechanism is given a prediction of agents valuations. We note that in order to implement the mechanism it is enough to be given access to agents' preference order over items, and an additional information indicating which items are worth more than $\mu^n_i/2$ for each agent $i$.

Below, we first give a detailed description of the mechanism, and then we conclude by proving Theorem~\ref{thm:n-agents-MMS}.

\subsection{Implementation Details}

As discussed, in order to utilize the Round-Robin mechanism, we first allocate a single item to each  agent with a high predicted value.

\begin{algorithm}[htb!]
  \SetAlgoNoEnd\SetAlgoNoLine\DontPrintSemicolon
  \SetKwInOut{Input}{Input}
  \SetKwInOut{Output}{Output}
\Input{Set of agents $N$, set of items $\M$, reports $\br_N$, predictions $\bp_N$}
\Output{A partial allocation $\bigcupdot_{i\in B} X_i$, updated sets of agents and items $N, M$, respectively}
\caption{\texttt{Allocate-Large}}

  \ForEach{$i\in N$}{
    Compute $\mu^n_i$ based on $p_i$\;
  }
  \While{ exists $i\in N$ such that $\kthp{i}{*}(M) \geq  \mu^{n}_i/2$}
  {
     $X_i \gets \{ r^1_i(M) \}$\; 
     $\M \gets \M \setminus X_i$ \; 
     $N\gets N \setminus \{i\}$   \;
    }
\label{alg:allocate-large}
\end{algorithm}




Before describing the tentative allocation mechanism, we first give 
 a  procedure, \texttt{Allocate-Best}, which performs a single round of Round-Robin according to a specific order, and preferences (either predictions or reports), denote $\bo$.

\begin{proced}[htb!]
  \SetAlgoNoEnd\SetAlgoNoLine\DontPrintSemicolon
  \SetKwInOut{Input}{Input}
  \SetKwInOut{Output}{Output}
\caption{\texttt{Allocate-Best} (One-Round-RR)}
\label{alg:one-round-rr}
\Input{ Ordered set of agents $N$, set of items $\M$, valuation $\bv_N$}
\Output{$|N|$ singletons  $X_{i}\in M$}  
\ForEach{$i \in N$}{
$X_{i} \gets \kthv{i}{*}(M)$ \label{step:alloc-large}\;
$\M \gets \M \setminus X_{i}$\;
}
\end{proced}

The tentative allocation mechanism repeatedly invokes \texttt{Allocate-Best} according to given predictions, until all items are tentatively allocated.
As previously mentioned, the first round of the tentative allocation  is performed according to the given order, and in all subsequent rounds, the order is reversed (recall that reversing the order enhances the robustness guarantees). 

\begin{algorithm}[htb!]
  \SetAlgoNoEnd\SetAlgoNoLine\DontPrintSemicolon
  \SetKwInOut{Input}{Input}
  \SetKwInOut{Output}{Output}
\caption{\texttt{Tentative-Allocation-Round-Robin}}
\label{alg:tentative-alloc}
\Input{Ordered set of  agents $N=(i_1, \ldots, i_{|N|})$, set of items $\M$, predictions $\bp_N$} 
\Output{ A tentative  allocation $\bigcupdot_{i\in N} A_i=M$}
 $A \gets \texttt{Allocate-Best}(N,M,\bp_N)$\;
  $M \gets M \setminus \cup_{i\in N} A_i$\;
     \BlankLine    \tcc*[f]{Reverse the order for the allocation of the rest of the items}\;
$N^r = (i_{|N|}, \dots, i_1)$\;
\For{$k=2, \dots, \lceil m/n \rceil$}{
 $\tilde{A} = \texttt{Allocate-Best}(N^r,M, \bp_{N^r})$\;
 $A_{i} \gets  A_{i}\cup \tilde{A}_i$ for $i\in N$\;
 $M \gets M \setminus \cup_{i\in N} \tilde{A}_i$\;
}

\end{algorithm}


The final phase in the mechanism is a recursive plant and steal algorithm. 
The input to this algorithm is an ordered set of agents  $N$, along with their predictions, reports, and  a tentative allocation for each agent. At each recursive invocation, the algorithm splits the set of agents into two (almost) equal-size ordered sets $N_0$ and $N_1$. Then the mechanism ``plants'' for the $i\th$ agent in each set $N_{b}$  their highest (according to predictions) valued item in their tentative allocation in the tentative set of the $i\th$ agent in $N_{\neg b}.$
Then we perform one round of Round-Robin, where the items available to the agents of set $N_b$ are 
those tentatively allocated to the agents of $N_{\neg b}$ (after the planting phase), and the allocations are determined according to agents reports.
See Figure~\ref{fig:1-planting} for an illustration of a single round of plant and steal.
 The algorithm then recurses on each of the sets $N_0$ and $N_1$, until all sets are of size $1$.
 At this point, the single agent in the set is further allocated its remaining tentatively allocated items, and the process terminates.

\begin{algorithm}[htb!]
\caption{\texttt{Split-Plant-Steal-Recurse}}
\label{alg:part-plant-steal-recurse}
  \SetAlgoNoEnd\SetAlgoNoLine\DontPrintSemicolon
  \SetKwInOut{Input}{Input}
  \SetKwInOut{Output}{Output}
  
\Input{Ordered set of agents $N=(i_1, \ldots, i_{|N|})$, tentative allocations $A$,    partial allocations $X_N$, $\textit{first-level-flag}$ indicating if this is the first level of the recursion, reports $\br_N$,  predictions $\bp_N$}
\BlankLine \tcc*[f]{Halting condition - Allocate all remaining items} \;
\lIf{$N=\{i\}$}{set $X_i=X_i \cup  A_i$ and  $\textit{halt}$ \label{step:rec-halt}}
 \BlankLine    \tcc*[f]{Split the agents into two almost-equal parts}\;
\aee{
$par= N\mod{2}$ \;
$N_0 \gets (i_1, i_3, \ldots, i_{N-1+par})$ \;
$N_1 \gets (i_2, i_4, \ldots, i_{N-par})$ \;
}

  \BlankLine    \tcc*[f]{Plant according to predictions}\;
  \For{i = 1,\dots,  $\lfloor |N|/2\rfloor$}
  {
  Let $i^{0}, i^{1}$ denote the $i\th$ agent in $N_0, N_1$ respectively. \;
   $j^*_{0} = p^*_{i^0}(A_{i_{0}})$\;
   $j^*_{1} = p^*_{i^1}(A_{i_{1}})$\;
   \label{step:plant}
   $A_{i^0}=A_{i^0} + j^*_{1} - j^*_0$ \;
    $A_{i^1}=A_{i^1} + j^*_{0} - j^*_1$\;
  }

\aee{
\BlankLine \tcc*[f]{Plant $i_n$'s favorite item in a tentative set}\;
\If{$par = 1$ }
{
 $i^0 = i_{n}$,$i^1 = i_2$\;
 $j^*_{0} = p^*_{i^0}(A_{i^{0}})$\;
 $A_{i^1}=A_{i^1} + j^*_{0} $, $A_{i^0}=A_{i^0} - j^*_0$ \;
} 
}

\BlankLine    \tcc*[f]{Steal from the opposite set according to reports}\;
\ForEach{$b\in \{0,1\}$}{$\hat X$=\texttt{Allocate-Best}$(N_b,  A_{N_{\neg b}},\br)$ \label{step:invoke-one-RR} }
\ForEach{$i\in N$}{$X_i \gets X_i \cup \hat{X}_i$}
\BlankLine \tcc*[f]{Reverse the order after the first level of recursion}\;
\If{$\textit{first-level-flag}$}{
\ice{
$N_0 \gets (i_{N-1+par},\ldots,i_3, i_1)$ \;
$N_1 \gets (i_{N-par},\ldots i_4, i_2)$ \;
}
 \label{step:reverse-order}
} 
\BlankLine    \tcc*[f]{Recursively invoke Split-Plant-Steal-Recurse on each set}\;
\ForEach{$b\in \{0,1\}$}{\texttt{Split-Plant-Steal-Recurse}$(N_b,A_{N_b},X_{N_b},$\textit{first-level-flag} = \textbf{False}$ )$}\;

\end{algorithm}

Given the above implementation details, it remains to prove Theorem~\ref{thm:n-agents-MMS} regarding truthfulness, consistency and robustness of the mechanism. The proof is given below. 

\subsection{Proof of Theorem~\ref{thm:n-agents-MMS}}

In this section we prove Theorem~\ref{thm:n-agents-MMS}, which we now recall.

\thmNAgents*

First, we give a simple  observation regarding Algorithm~\ref{alg:allocate-large}.
\begin{observation}\label{clm:alloc-large-true}
 The followings hold for  Algorithm \texttt{Allocate-Large}.
\begin{enumerate}
    \item 
    If the reports equal the true valuations, and 
    agent $i$ is allocated an item $j$, then $v_i(j)\ge v_i^n/2$. 
    \item After the algorithm completes its run, there are no remaining agents in $N$ with large predicted values for the remaining items in $M$.
\end{enumerate}
\end{observation}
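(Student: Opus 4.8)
The plan is to prove both parts as elementary structural facts about the \texttt{while} loop of \texttt{Allocate-Large}, using only the hypothesis that reports coincide with true valuations; crucially, no assumption on the accuracy of the predictions is needed. Part (2) will follow directly from the loop's exit condition, while Part (1) will follow from a counting argument that bounds how many items can have been removed by the time agent $i$ is served.

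For Part (1), I first note that since $\br=\bv$, the item handed to agent $i$ is $j = r_i^1(M) = v_i^*(M)$, i.e.\ $i$'s most valuable item (by its true valuation) among the items $M$ still present when $i$ is processed. The key observation is that each iteration of the loop removes exactly one agent from $N$ and exactly one item from $M$, and every removed agent is distinct. Hence, at the moment $i$ is served, at most $n-1$ items have been removed from the original item set (one for each of the at most $n-1$ agents processed before $i$). Consequently, among agent $i$'s $n$ highest-valued items by true value, at least one, say $g$, still lies in the current $M$, and by definition $v_i(g) \ge v_i^n$. Since $j$ is the true-favorite item in the current $M$, we conclude $v_i(j) \ge v_i(g) \ge v_i^n \ge v_i^n/2$, which is the stated bound (in fact the argument gives the stronger $v_i(j)\ge v_i^n$).

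For Part (2), I first observe that the loop terminates: each iteration strictly decreases $|N|$ (and $|M|$), so it runs at most $\min(n,m)$ times. The loop continues precisely while there is an agent $i \in N$ with $p_i^*(M) \ge \mu_i^n/2$, where $\mu_i^n$ is the MMS computed from the prediction $p_i$. Therefore, upon termination its guard is false, meaning that for every agent $i$ remaining in $N$ the predicted value of $i$'s predicted-favorite remaining item is strictly below $\mu_i^n/2$; this is exactly the assertion that no remaining agent has a ``large'' predicted item. This part uses only the semantics of the loop and holds for any reports and any predictions.

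The only genuinely subtle point is that one must resist bounding $v_i(j)$ by any \emph{MMS}-type quantity of the allocated item. Because the predictions are arbitrary, an agent may be served even when its current-favorite item is worthless relative to its true MMS (e.g.\ when its truly valuable items have already been taken by other agents, or when an inflated prediction triggers the guard), so neither the predicted MMS nor the true MMS yields a valid lower bound under arbitrary predictions. The correct and robust benchmark is $v_i^n$, the $n$th highest \emph{true} value, which is pinned down purely by the ``at most $n-1$ removed items'' count and is therefore insensitive to the predictions. I should also flag the notational clash that $\mu_i^n$ inside the algorithm denotes the \emph{predicted} MMS used in the guard, whereas the bound in Part (1) is stated against the true-value quantity $v_i^n$; keeping these two separate is exactly what lets the argument go through with no accuracy hypothesis.
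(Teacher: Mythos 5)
Your proof is correct and matches the paper's reasoning: the paper states this as an observation without an explicit proof, and where it needs Part (1) (in the proof of Lemma~\ref{lem:3n/2}) it uses exactly your counting argument --- at most $n-1$ items are removed before agent $i$ is served, so with truthful reports $i$ receives an item of value at least $v_i^n$ (which gives the stated $v_i^n/2$ with slack) --- while Part (2) is, as you say, just the negation of the \texttt{while}-loop guard. Your remark distinguishing the predicted quantity $\mu_i^n$ in the guard from the true-value benchmark $v_i^n$, so that no prediction-accuracy assumption is needed, is also consistent with how the paper deploys the observation in its robustness analysis.
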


We continue to  prove each of the properties specified in  Theorem~\ref{thm:n-agents-MMS} separately, starting with truthfulness.
\begin{lemma}[Truthfulness]\label{clm:n-agents-true}
    Mechanism \texttt{Learning-Augmented-MMS-for-$n$-agents} (Mechanism ~\ref{alg:n-agents-MMS}) is truthful.
\end{lemma}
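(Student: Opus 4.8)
The plan is to prove truthfulness by the same principle that underlies Lemma~\ref{lem:pas-truth}: I will show that whenever the mechanism consults agent $i$'s report, it merely has $i$ select a single most-valued item out of a set --- a ``pool'' --- that is fixed independently of $i$'s own report, and that every other item $i$ ends up with (the leftover tentative bundle handed out at the base of the recursion) is likewise determined without reference to $r_i$. Granting this, reporting $r_i=v_i$ makes $i$'s chosen item the true favorite of \emph{every} pool simultaneously while leaving the leftover bundle untouched, so truthtelling weakly dominates any misreport. Throughout I fix $i$, the other agents' reports $\br_{-i}$, and the prediction $\bp$, and examine the three phases in turn; the tentative-allocation phase (Algorithm~\ref{alg:tentative-alloc}) uses predictions only, so it contributes nothing to the argument.

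For \texttt{Allocate-Large} (Algorithm~\ref{alg:allocate-large}), the key point is that $i$'s report is consulted exactly once, at the instant $i$ is served its reported favorite $r^1_i(M)$, after which $i$ is deleted from $N$. Hence the entire prefix of the while-loop up to and including the decision of whether $i$ is ever served depends only on $\bp$ and $\br_{-i}$: both whether $i$ meets the predicted threshold $\kthp{i}{*}(M)\ge \mu_i^n/2$ and the item set $M$ available when $i$ is served are independent of $r_i$. If $i$ is classified large, $i$'s entire allocation is the single favorite item of a fixed set, so reporting truthfully is optimal; if $i$ is never large, then $r_i$ is never read in this phase and $i$ enters the recursion with an $i$-report-independent residual item set. (The downstream effect of $i$'s own removal on whether \emph{later} agents are large is irrelevant, as it cannot alter $i$'s own bundle.)

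For \texttt{Split-Plant-Steal-Recurse} (Algorithm~\ref{alg:part-plant-steal-recurse}) I would establish three report-independence facts. First, the split of $N$ into $N_0,N_1$, the internal stealing order within each \texttt{Allocate-Best} call, and the first-level order reversal are all index-based, hence report-independent. Second, the planting step selects the planted item by prediction $p^*$, so the pools $A_{N_{\neg b}}$ from which the agents of $N_b$ steal are fixed by $\bp$ and earlier removals only. Third, and most delicately, $i$'s own tentative bundle $A_i$ is only ever shrunk by planting (a prediction-based choice) and by members of the \emph{opposite} set stealing from it (their reports) --- never by $r_i$ --- while $i$'s own steals are drawn from the opposite set, which then recurses \emph{separately} from $i$ and so never feeds back into a later pool that $i$ draws from. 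Consequently, when $i$ steals, the set available to it is $A_{N_{\neg b}}$ reduced by the picks of co-set agents preceding $i$ (using their reports, not $r_i$), so each steal by $i$ is a favorite-pick from an $i$-report-independent pool, and the leftover $A_i$ assigned at the halting step ($N=\{i\}$) is $i$-report-independent as well.

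Combining the phases, $i$'s realized bundle is the union of a fixed leftover set and one favorite item from each of a collection of pairwise-disjoint, $r_i$-independent pools; since for any report and any pool $P$ we have $v_i(r_i\text{-favorite of }P)\le v_i(v_i\text{-favorite of }P)$, truthful reporting maximizes every summand at once, proving the lemma. I expect the main obstacle to be the third fact: carefully tracking, through the interleaving of planting and stealing across recursion levels, that the recursive modifications to the tentative allocations never cause one of $i$'s \emph{future} pools or $i$'s final leftover to depend on $r_i$. Isolating the two structural invariants ``$A_i$ shrinks only via prediction-based planting and opposite-set steals'' and ``$i$'s steals target a subtree that $i$ never revisits'' is exactly what makes the favorite-from-a-fixed-pool reduction go through.
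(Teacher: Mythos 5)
Your proposal is correct and takes essentially the same approach as the paper's proof: both argue that all structural decisions (who is classified large, the split into $N_0,N_1$, the planting, the stealing order) are prediction-based, that each consultation of $r_i$ is a single favorite-pick from a pool independent of $r_i$, and that $i$'s picks never feed back into its future pools or leftover bundle because $i$ steals from the opposite subtree, which it never revisits. Your write-up is somewhat more explicit than the paper's (e.g., in noting that the residual item set in \texttt{Allocate-Large} and the leftover bundle at the recursion's base are $r_i$-independent), but the underlying argument is identical.
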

\begin{proof}
    Algorithm \texttt{Tentative-Allocation-Round-Robin} (Algorithm~\ref{alg:tentative-alloc}) only depends on agents predictions and not their reports. Hence, we only need to consider the use of the reports in Algorithms~\ref{alg:tentative-alloc} and~\ref{alg:part-plant-steal-recurse}.

    For every agent $i$, either they are allocated a single item in Algorithms~\ref{alg:tentative-alloc}, or $i$ participates in the recursive plant ant steal, and this is determined according to the predictions, so in particular $r_i$ has no affect on this. Thus, we can consider the two independent events separately. In the first case, where $i$ is allocated a single item, it is the item that maximizes their report over remaining items at that point, so that $i$ has no incentive to lie.

    In the second case, $i$ participates in the plant and steal phase. Observe that in this case, whenever $i$ chooses an item from some set $A'$, it will have no future interaction with this set. That is, fix a recursive call and assume without loss of generality that $i\in N_0$. Then after the planting step, $i$ is allocated the item in $A_{N_1}$ that maximizes their reports. Then, in following recursive steps, $i$ only continues to interact with items in $A_{N_0}$, so $i$'s choice does not affect  the identity of the items from which $i$ will be able to choose from in future rounds. Hence, $i$'s only incentive is to maximize the value of its allocated value in each round, implying truthfulness.
\end{proof}

Due to the above lemma, from now on we assume  agents report  truthfully, i.e., that for every agent $i$, $r_i=v_i$. We turn to show the mechanism is consistent, we rely on the following theorem.

\begin{theorem}[Lemma 2 in~\cite{AmanatidisABFLMVW23} (based on Theorem 3.5 in~\cite{AmanatidisMNS17})]\label{thm:Aman-RR-MMS/2}
    If for every $i\in N$ and $j\in M$, $v_i(j)\leq\frac{1}{2}\mu^n_i$, then the Round-Robin algorithm returns an allocation that is $MMS/2$. 

Furthermore, their analysis holds  when changing the order of allocation between the different rounds of the Round-Robin.
\end{theorem}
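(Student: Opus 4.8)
The plan is to fix one agent $i$ and show $v_i(A_i)\ge \mu_i^n/2$, reducing everything to two facts about the maximin share. Since the partition defining $\mu_i^n$ splits $M$ into $n$ bundles of value at least $\mu_i^n$ each, summing gives $v_i(M)\ge n\,\mu_i^n$; and the small-items hypothesis gives $\kthv{i}{\ell}\le \mu_i^n/2$ for every item, in particular $\kthv{i}{1}\le \mu_i^n/2$. I would index the items in non-increasing $v_i$-order, so $\kthv{i}{1}\ge\kthv{i}{2}\ge\cdots$, and adopt the convention (already in the preliminaries) that $\kthv{i}{\ell}=0$ for $\ell>m$.

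The key structural claim is that the item $i$ picks in round $k$ has value at least $\kthv{i}{kn}$. Indeed, a complete round removes exactly $n$ items, so before $i$'s $k$-th turn at most $(k-1)n+(n-1)=kn-1$ items have been removed, regardless of $i$'s position within the rounds. Hence at least one of $i$'s top $kn$ items survives, and since $i$ greedily takes the best available item, its value is at least $\kthv{i}{kn}$ (when $kn>m$ this reads as the vacuous bound $0$). I would stress that this argument uses \emph{only} the round structure --- that each round is one complete pass in which every agent picks once --- and not the within-round order; this is precisely why the ``furthermore'' statement holds verbatim when the order is reversed or otherwise permuted between rounds.

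Summing $i$'s picks and bounding each block of $n$ items by the next then closes the proof. For every $k\ge 1$ the smallest item of the $k$-th block dominates the largest of the $(k+1)$-th block, so $\kthv{i}{kn}\ge \frac1n\sum_{\ell=kn+1}^{(k+1)n}\kthv{i}{\ell}$; telescoping over rounds yields
\begin{align*}
v_i(A_i)\ \ge\ \sum_{k\ge 1}\kthv{i}{kn}\ \ge\ \frac1n\sum_{\ell>n}\kthv{i}{\ell}\ =\ \frac1n\Bigl(v_i(M)-\sum_{\ell=1}^{n}\kthv{i}{\ell}\Bigr)\ \ge\ \frac1n\bigl(n\mu_i^n-n\cdot\tfrac{\mu_i^n}{2}\bigr)\ =\ \frac{\mu_i^n}{2},
\end{align*}
where the last inequality combines $v_i(M)\ge n\mu_i^n$ with $\sum_{\ell=1}^n\kthv{i}{\ell}\le n\,\kthv{i}{1}\le n\mu_i^n/2$.

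The only real obstacle is the bookkeeping when $m$ is not divisible by $n$: the final round is partial and the indices $kn$ may exceed $m$. I would dispatch this exactly through the convention $\kthv{i}{\ell}=0$ for $\ell>m$, so that the terms for rounds $i$ does not play are zero and the telescoping lower bound is only loosened, leaving the displayed chain intact. I would also remark that an earlier position can only help: at position $p\le n$ the $k$-th pick already beats $\kthv{i}{(k-1)n+p}\ge\kthv{i}{kn}$, so the analysis above (effectively the last-position case) is the worst case and the bound $v_i(A_i)\ge\mu_i^n/2$ holds for every agent.
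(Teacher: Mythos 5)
Your proof is correct. Note first that the paper never proves this statement itself: it is imported as Lemma 2 of~\cite{AmanatidisABFLMVW23} (in turn based on Theorem 3.5 of~\cite{AmanatidisMNS17}), so there is no in-paper argument to compare against. What you have written is a clean, self-contained derivation of that cited result, and it is the natural $n$-agent generalization of the argument the paper does spell out for two agents (Lemma~\ref{lem:rr2agent}, via item 2 of Observation~\ref{obv:rr}): there the round-$k$ pick is bounded below by $v_i^{2k}$ and the blocks have size $2$, while you bound it by $v_i^{kn}$ with blocks of size $n$. The one genuine difference lies in how the top block is discarded. For $n=2$ the paper uses the fact that $i$'s favorite item must be absent from some bundle of the MMS partition, so $\sum_{k=2}^{m} v_i^{k}\ge \mu_i$ and no small-items hypothesis is needed; for general $n$ you instead invoke the hypothesis $v_i(j)\le \mu_i^n/2$ to get $\sum_{\ell=1}^{n} v_i^{\ell}\le n\mu_i^n/2$, which is exactly where the assumption enters and cannot be dropped. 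You also isolate precisely the property that justifies the ``furthermore'' clause: your round-$k$ bound uses only that at most $kn-1$ items are gone before $i$'s $k$-th pick, which holds for any within-round order; this order-obliviousness is the point the paper relies on when its tentative allocation (Algorithm~\ref{alg:tentative-alloc}) reverses the agent order after the first round, so your proof serves the paper's purpose exactly.
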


We are now ready to prove the mechanism is consistent.

\begin{lemma}[Consistency]\label{clm:n-agents-consist}
    If the set of predictions is accurate, then for every $i$, $v_i(X_i)\geq \mu^n_i/2$.
\end{lemma}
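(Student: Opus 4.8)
The plan is to separate the agents handled in the first phase from those that survive into the Round-Robin, and then to show that with accurate predictions the recursive plant-and-steal phase is a net no-op, so every surviving agent ends up with exactly their tentative Round-Robin bundle. Throughout I would use \Cref{clm:n-agents-true} to assume $r_i=v_i$, and use that accuracy of the predictions means the ordering induced by $p_i$ agrees with $v_i$ and the ``large'' flags are correct; in particular $p_i^*(S)=v_i^*(S)$ for every $S$, and the value $\mu_i^n$ computed inside \texttt{Allocate-Large} equals the true maximin share. For any agent $i$ removed during \texttt{Allocate-Large}, \Cref{clm:alloc-large-true}(1) gives that its single item has value at least $\mu_i^n/2$, and since such an agent takes no further part, $v_i(X_i)\ge\mu_i^n/2$. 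It remains to treat the agents $N'$ (with item pool $M'$) that enter \texttt{Tentative-Allocation-Round-Robin}.

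For these agents I would reduce the claim to two independent statements: (a) the tentative allocation $(A_i)_{i\in N'}$ produced by Round-Robin already satisfies $v_i(A_i)\ge\mu_i^n/2$; and (b) with accurate predictions, \texttt{Split-Plant-Steal-Recurse} returns $X_i=A_i$ for every $i\in N'$. For (a), the termination condition of \texttt{Allocate-Large} together with \Cref{clm:alloc-large-true}(2) guarantees that every remaining item is small, i.e.\ $v_i(j)<\mu_i^n/2$ for all $i\in N'$ and $j\in M'$. I would then use the standard monotonicity fact that removing one agent together with a single item cannot decrease any other agent's maximin share: taking an optimal $n$-partition for $i$, removing the deleted item $j$ from its part $S$ and merging $S\setminus\{j\}$ into a different part yields an $(n-1)$-partition of the reduced instance each of whose parts has value at least $\mu_i^n$, so $\mu_i^{|N'|}(M')\ge\mu_i^n(M)$. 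Hence all items of $M'$ remain small relative to $\mu_i^{|N'|}(M')$, and \Cref{thm:Aman-RR-MMS/2} --- whose analysis, as noted, is insensitive to reversing the order between rounds, which is exactly the variant used in \texttt{Tentative-Allocation-Round-Robin} --- gives $v_i(A_i)\ge\mu_i^{|N'|}(M')/2\ge\mu_i^n/2$.

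Statement (b) is where the real work lies, and I expect it to be the main obstacle. The key structural observation is that within an agent's own Round-Robin bundle the values are non-increasing in pick order, so the planted favorite $p_i^*(A_i)=v_i^*(A_i)$ is exactly $i$'s first pick $a_i^1$, and every item $i$ strictly prefers to $a_i^1$ must have been taken by an agent preceding $i$ in the round-$1$ order. I would prove, by induction on recursion depth and, within a single call, by induction along the ordered set $N_b$, that each agent steals back precisely the item planted out of its own current bundle and that the call's net effect is to restore the pre-planting bundles minus the reclaimed favorites. The planting step moves $i$'s current favorite into the opposite pool and its partner's favorite in; since \texttt{Allocate-Best} processes $N_b$ in list order --- which is designed to coincide with the Round-Robin pick order (round $1$ at the top level, and the reversed order for later rounds, enforced by the \textit{first-level-flag} reversal at \Cref{step:reverse-order}) --- each earlier agent removes from the opposite pool exactly the high-valued items a later agent would otherwise prefer to its own planted item. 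Thus when $i$'s turn comes, every item it values above its planted favorite has already left the pool (either stolen by an earlier same-side agent, or planted into the same-side pool), so $i$'s favorite in the remaining pool is its own planted item, which it reclaims.

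Tracking the bundles then shows the planting/stealing of a single call is a net identity: $i$ loses its favorite to the opposite pool and reclaims it, while its partner's favorite is carried out of $i$'s bundle by the symmetric steal, leaving $A_i$ minus its favorite for the recursion and adding that favorite to $X_i$. Unwinding the recursion --- with the leaf case at \Cref{step:rec-halt} handing each singleton agent its remaining tentative items --- then yields $X_i=A_i$, since every item of $A_i$ is either reclaimed at some internal level or survives to the leaf. I would additionally verify that the odd-size branch ($par=1$, where $i_n$'s favorite is planted into $i_2$'s pool) preserves this invariant, which holds because $i_n$ is last in the steal order and its favorite is the lowest-threat item to the others. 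Combining (a) and (b) gives $v_i(X_i)=v_i(A_i)\ge\mu_i^n/2$ for every surviving agent, which together with the first-phase agents completes the proof.
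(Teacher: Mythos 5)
Your proposal is correct and follows essentially the same route as the paper's proof: handle the \texttt{Allocate-Large} agents directly, invoke \Cref{thm:Aman-RR-MMS/2} for the tentative Round-Robin bundles, and then argue by induction over recursion depth and agent order that each agent steals back exactly its planted item, so that $X_i=A_i$. Your explicit monotonicity step showing $\mu_i^{|N'|}(M')\ge\mu_i^n(M)$ after removing an agent together with its allocated item is a welcome addition that the paper's proof leaves implicit, but it does not change the overall structure of the argument.
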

\begin{proof}
    First consider  agents that were allocated an item in Algorithm \texttt{Allocate-Large} (Algorithm~\ref{alg:allocate-large}).
    If the predictions are accurate, then each such agent $i$  is allocated an item $j$ such that $v_i(j)\geq \mu^n_i/2$ and so the statement holds. Moreover, at the end of this step, there are no remaining agents with large predicted values, hence, no agents with large values remain.

    If the set of predictions is accurate, then the tentative allocation determined according to agents' predictions in Algorithm \texttt{Tentative-Allocation-Round-Robin} ( Algorithm~\ref{alg:tentative-alloc}) is identical to a Round-Robin mechanism according to valuations, with reversing the order between the first and all subsequent rounds.
    Furthermore, by the above, there are no agents with large values when the Round-Robin is invoked.  
    Therefore, by Theorem~\ref{thm:Aman-RR-MMS/2},
    it holds that for every $i$, $v_i(A_i)\geq \mu^n_i/2$.   
    We shall prove that for every agent $i$, its final allocation equals its tentative allocation, $X_i=A_i$, concluding the proof.

    We prove that in depth $k$ of the recursion, every agent $i$ is allocated the $k\th$ item in $A_i$. We prove the claim by induction on the depth $k$ of the recursion, and the $\ell\th$ agent in that round that is allocated some value. 
    
    We first prove for $k=1$, $\ell=1$. 
    In the plant phase, $\ell^0 (=1)$ plants $j=p_{\ell^0}^*(A_{\ell^0})$ in $A_{\ell^1}$. Then, in the stealing phase, during the invocation of Algorithm~\ref{alg:one-round-rr}, agent $\ell^0$ is the first to choose an item from $A_{N_1}$, which in particular contains $j$. Hence, the first item in $A_1$ is allocated into $X_1$. We now assume the claim holds for $k=1$ and $\ell-1$ and prove it for $\ell$. Assume without loss of generality that $\ell$ is odd so that $i_\ell\in N_0$. 
    
    In step $\ell$ of the planting phase, the mechanism plants $\ell^0$'s (the proof for $\ell^1$ is identical) first (according to value $p_{\ell^0}$) item in $A_{\ell^0}$. Then, during the tentative allocation phase, agent $\ell^0$ is the $\ell\th$ to choose among the items in $A_{N_1}$ minus the items that were allocated to the $\ell-1$ agents that were before her in the tentative Round-Robin. By the induction hypothesis, every agent preceding her chose the item the mechanism planted for them previously in that round. Therefore, the item $j$ that the mechanism planted for agent $\ell^0$ is still available. 
    Moreover, let $M^{\ell-1}$ denote the set of items after $\ell-1$ rounds of the tentative Round-Robin in Algorithm~\ref{alg:tentative-alloc}. Further let $A_{N_1}^{\ell-1}$ denote the set of items after $\ell-1$ rounds of the \texttt{Allocate-Best} algorithm invoked in the stealing phase with the set $N_0$, i.e.,    
     $A_{N_1}^{\ell-1} = A_{N_1}\setminus \bigcup_{j \in N_0, j<\ell} \{X_j\}$.
    Since the order in which the agents plant and steal in each round of the recursion is equivalent to the order in which the corresponding tentative allocation round was performed, it holds
    that $A_{N_1}^{\ell-1}\subset M^{\ell-1}$. Since $j = p^*_{\ell^0}(M^{\ell-1})$, and $p_{\ell^0}=r_{\ell^0}$, it holds that  $r^*_{\ell^0}(A^{\ell-1}_{N_1})$ equals $j$. Therefore $\ell^0$ will choose $j$ to $X_{\ell^0}$ as claimed.

    Proving the claim for a general $k$ is almost identical. At the planting phase of the $k\th$ round, the mechanism plants for every agent ${\ell}^0  \in N^k_0$  their $k\th$ item of $A_i$ in $A_{N^k_1}$ and vice versa. A similar argument to the one above, shows that this item will remain available until its their turn to choose an item for allocation, as by the recursion hypothesis, all agents preceding $i$ in the Round-Robin will select the items the mechanism planted for them. Hence, the $k\th$ item in $A_{\ell^0}$ will be allocated to $X_{\ell^0}$.

    Finally, once the set agent $i$ belongs to becomes a singleton, by our halting condition, $X_{i}\gets X_{i} \cup A_{i}$, so together with the previous argument, we get that for every $\ell$, $X_{i}=A_{i}$ as needed.
\end{proof}

We continue to prove that the mechanism is robust.
We first prove in Lemma~\ref{lem:3n/2} that for each agent $i$, $v_i(X_i)\geq v_i^{\lceil 3n/2\rceil}$, and then prove in Lemma~\ref{lem:lb-mu_i} that the value of this item is not too small compared to $\mu^{\lceil3n/2\rceil}_i$.

\begin{lemma}\label{lem:3n/2}
For every agent $i$, $v_i(X_i)\geq v_i^{\lceil 3n/2\rceil}$. 
\end{lemma}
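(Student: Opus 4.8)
The plan is to first dispose of the agents handled by \texttt{Allocate-Large} (Algorithm~\ref{alg:allocate-large}) and then prove the statement for the agents reaching the recursive plant-and-steal phase by a counting argument; throughout I assume truthful reports ($r_i=v_i$), as justified by Lemma~\ref{clm:n-agents-true}. If agent $i$ receives its item inside \texttt{Allocate-Large}, then at most $n-1$ items are removed before $i$ is processed (one per previously handled agent), so $i$ gets its favourite among at least $m-(n-1)$ items, of value at least $v_i^n\ge v_i^{\lceil 3n/2\rceil}$, and such agents are immediately fine. Let $N'$ (with $n'=|N'|\le n$) be the agents entering the plant-and-steal phase, fix $i\in N'$, write $K=\lceil 3n/2\rceil$, and let $g_1,\dots,g_K$ be $i$'s top $K$ items by true value. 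It suffices to show at least one $g_t$ ends up in $X_i$, since then $v_i(X_i)\ge v_i(g_t)\ge v_i^{K}=v_i^{\lceil 3n/2\rceil}$.

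I would argue by contradiction, assuming $X_i$ contains none of $g_1,\dots,g_K$, and bound the number of $i$'s top items that can be \emph{lost}. The structural facts I would rely on are: (i) along $i$'s root-to-leaf path in the recursion, at every level $i$ lies in one part $C_b$ of its current group $C=C_b\sqcup C_{\neg b}$ and performs exactly one steal via \texttt{Allocate-Best}, taking its favourite available item of the opposite pool $A_{C_{\neg b}}$; (ii) the opposite parts $C_{\neg b}$ over the levels are pairwise disjoint and telescope, so $\sum_{\text{levels}}|C_{\neg b}|=n'-1$; and (iii) planting only relocates items between the two sides while keeping them in the system, so at each level every item is either on $i$'s side (where only the sibling part $C_{\neg b}$ may remove it) or on the opposite side (where $i$ itself can steal it).

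Using these I would classify each lost top item $g_t$. If at some level $g_t$ lies in the opposite pool and survives to $i$'s turn, then $i$ would take $g_t$ or an item it values even more, in either case a top-$K$ item, contradicting the assumption; hence $g_t$ must either (a) be taken from an opposite pool by an agent preceding $i$ in one of $i$'s steals, or (b) be stolen from $i$'s side by a sibling part $C_{\neg b}$, or (c) have been removed earlier in \texttt{Allocate-Large}. The number of type-(c) items is at most $n-n'$; type-(b) items at most $\sum_{\text{levels}}|C_{\neg b}|=n'-1$ by (ii); and type-(a) items at most the total number of agents preceding $i$ across its steals, $\sum_{\text{levels}}(q_k-1)$, where $q_k$ is $i$'s position in its part at level $k$. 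The crucial point is that the single order reversal after the first level (Line~\ref{step:reverse-order}, mirroring the reversal in \texttt{Tentative-Allocation-Round-Robin}) turns an agent that is late in the first steal into an early one in all subsequent steals and vice versa, which I would show forces $\sum_k(q_k-1)\le\lceil n'/2\rceil$, whereas without reversal a worst-positioned agent could accumulate $\approx n'$. Summing, the number of lost top items is at most $(n-n')+(n'-1)+\lceil n'/2\rceil=n-1+\lceil n'/2\rceil\le n-1+\lceil n/2\rceil=\lceil 3n/2\rceil-1<K$, the desired contradiction.

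The main obstacle is establishing $\sum_k(q_k-1)\le\lceil n'/2\rceil$: this is exactly where the single reversal is used, and it requires tracking how $i$'s position transforms under the split $p\mapsto\lceil p/2\rceil$ and under the one-time reversal $p\mapsto|C|-p+1$ down the recursion, and then summing the resulting geometric-like series (so that the large first term $q_0-1$ is compensated by small later terms, and vice versa). The secondary technical points I would still need to verify are the odd-size (``$par=1$'') branch of the planting loop, where the odd agent plants without stealing --- this merely relocates one item and is absorbed into the dichotomy (iii) --- and the ceiling bookkeeping in the telescoping count of (ii).
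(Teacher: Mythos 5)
Your skeleton is exactly the paper's proof: it too disposes of the \texttt{Allocate-Large} agents via the ``at most $n-1$ items already taken'' observation, and for the remaining agents it runs your counting argument (phrased there as a coloring process rather than a contradiction), splitting the unavailable items into those taken by same-part predecessors stealing from the opposite pool (your type (a), the paper's $G^k_{b^k}$) and those taken by the sibling part from $i$'s own pool (your type (b), the paper's $G^k_{\neg b^k}$), with the same final arithmetic $n-1+\lceil n'/2\rceil\le\lceil 3n/2\rceil-1$. Two of your justifications are fine and even slightly cleaner than the paper's: the pairwise-disjointness/telescoping argument for $\sum_k|C_{\neg b}|=n'-1$ (the paper instead proves an induction bound $T(n)\le n-1$ in Claim~\ref{clm:sum-recursion}), and the explicit type-(c) accounting of the $n-n'$ items removed by \texttt{Allocate-Large}, which the paper leaves implicit.

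The genuine gap is the inequality you yourself flag as the ``main obstacle'': $\sum_k(q_k-1)\le\lceil n'/2\rceil$ is asserted, not proven, and it is the step on which the entire lemma---and the design choice of reversing the order exactly once---rests. The paper establishes it by explicit position tracking: at the first level $i_\ell$ has $\lceil\ell/2\rceil-1$ same-part predecessors; the one-time reversal maps her position to roughly $|N^1_{b^1}|-\lceil\ell/2\rceil$, and since the order is preserved from then on, her position ceil-halves at every later level, giving $\bigl\lceil(\lceil n/2\rceil-\lceil\ell/2\rceil)/2^{k-1}\bigr\rceil-1$ predecessors at level $k\ge2$; summing, the per-level $-1$'s absorb the round-ups of the ceilings, so the tail contributes at most $\lceil n/2\rceil-\lceil\ell/2\rceil+1$ and the total is at most $\lceil n/2\rceil$, independent of $\ell$. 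Your sketch names all of these ingredients (the split map $p\mapsto\lceil p/2\rceil$, the reversal map, the geometric cancellation between the first term and the tail), so the hole is fillable exactly as you describe; but as written the proposal does not establish the one quantitative fact that separates the bound $\lceil 3n/2\rceil$ from something weaker, and a complete proof must carry out this position-tracking computation.
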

\begin{proof}

    We first prove the claim for agents that were allocated a value during the invocation of Algorithm~\ref{alg:allocate-large}. By the definition of the algorithm and its truthfulness when agent $i$ is allocated an item, at most $n-1$ items were previously allocated to other agents. Hence, she can always choose her $n\th$ highest valued item. Therefore, we have $v_i(X_i)\geq v_i^{n}\geq v_i^{\lceil 3n/2\rceil}$, as claimed. 


    We continue to prove the claim for the set of agents with no large predicted values.
    Consider the $\ell\th$ agent in $N$, $i_{\ell}$, and consider the following coloring process. Initially, color all items in $M$ black. We will then color all items $i_{\ell}$ was able to choose from \emph{green}, and items allocated before she had the chance  to choose from \emph{gray} (note that these colors are unrelated to  the ones in the figure). 
    Note that an item turns green when it belongs to the tentative allocation of  opposite set to $i_{\ell}$'s and has not been taken by agents preceding her in the allocation order.
    We claim that by the time no black items remain, at most $\lceil 3n/2\rceil-1$ have turned gray, implying that at some point during the recursion, $i_{\ell}$ could have chosen their $\lceil\frac{3n}{2}\rceil\th$ highest valued item (according to $r_{i_{\ell}}$). 
    
     We let $N^k$ denote the set of agents to which $i_{\ell}$ belongs to at depth $k$ of the recursion, starting with $N^1=N$. At each recursive call, $N^k$ is partitioned into $N^k_{0}, N^k_{1}$.     We further let $b^k\in \{0,1\}$ denote the index of the set to which $i_\ell$ belongs to: $i_{\ell}\in N^k_{b^k}$.
     We will separately bound the number of items turned gray due to agents in $N^k_{b_k}$ and $N^k_{\neg b_k}$.

     In the first iteration, for $k=1$, let  $A^1_{N^1_{b^0}}, A^1_{N^1_{\neg b^0}}$  denote the tentative sets allocated to the agents of $N^1_0$ and $N^1_1$ after the planting phase (i.e., at the beginning of the stealing phase).
     
     The number of items that turn gray due to agents in $N^1_{b^1}$ is $G^1_{b^1}=\lceil \ell/2\rceil-1$, since $i_{\ell}$ has access to all items in $A^1_{N_{\neg b^1}}$ excluding the  $\lceil \ell/2\rceil-1$ items that were allocated to the agents in her set preceding her in the ordering. (The rest of the items in $A^1_{N_{\neg b^1}}$ turn green.)
     
     Turning to $G^1_{\neg b^1}$,   each agent in  the opposite set to hers, $N^1_{\neg b^1}$, is allocated a single item (from $A^1_{N^1_{b^1}}$) before continuing to the next round of the recursion. Therefore, $G^1_{\neg b^1}=|N^1_{\neg b^1}|$ (and no item turns green).
     
       The recursion then continues with $N^2=N^1_{b^1}$ and in  reversed order
       (due to the order being reversed). Therefore, at the beginning of the second iteration, $i_{\ell}$ is in location $|N^1_{b^1}| -\lceil \ell/2\rceil$ in $N^2$. After the partition phase, $i_{\ell}$ is in set $N^2_{b^2}$ and 
       in location $\lceil \frac{|N^1_{b^1}|-\lceil \frac{\ell}{2}\rceil}{2}\rceil$. Hence, $G^1_{b^1}=\lceil \frac{|N^1_{b_1}|-\lceil \frac{\ell}{2}\rceil}{2}\rceil-1$ due to agents in her set preceding here in the ordering.  Also, $G^1_{\neg b^1}=|N^1_{\neg b^1}|$ due to allocations  to agents in the opposite set to hers. 
       
       From now on, the order is preserved, so for every $k\geq 3$, $G^k_{b^k}=|N^k_{b^k}|$ and $G^k_{\neg b^k}= \lceil \frac{|N^1_{b^1}|-\lceil\ell/2\rceil}{2^{k-1}}\rceil-1$.

        We continue by bounding $\sum_{k=1}^{\lceil\log n\rceil} G^k_{\neg b^k}=\sum_{k=1}^{\lceil \log n\rceil} |N^k_{\neg b^k}|$. Observe that if $N^k$ is even then $N^{k}_{b^k}=N^k_{\neg b^k}=N^k/2$, and if $N^k$ is odd, then either $N^k_{b^k}$ is odd and $N^k_{\neg b^k}$ is even or vice versa. In the first case, $G^k_{\neg b^k} = \lceil N^k/2 \rceil$ and we recurse with $N^k_{b^k}$ which is of size $\lfloor N^K/2 \rfloor$. In the second case,  $G^k_{\neg b^k}= \lfloor N^K/2 \rfloor$ and we recurse with $N^k_{b^k}$ of size $\lceil N^k/2 \rceil$.
        Hence, we have the following recursion formula. For even $\ell$, $T(\ell)=\ell/2+T(\ell/2)$, and for odd $\ell,$ either (a) $T(\ell)=\lceil \ell/2 \rceil + T(\lfloor \ell/2\rfloor)$ or (b) $T(\ell)= \lfloor \ell/2\rfloor+ T(\lceil \ell/2 \rceil)$.
        In Claim~\ref{clm:sum-recursion} below, we prove that for such a function, if it also holds that $T(1)=0$ and $T(2)=1$, then  $T(\ell)\leq \ell-1$. Therefore, we get that $\sum_{k=1}^{\lceil\log n\rceil} G^k_{\neg b^k} \leq n-1$.

        We continue to bound $\sum_{k=2}^{\lceil \log n\rceil }G^k_{\neg b^k}= \sum_{k=2}^{\lceil \log n\rceil } \lceil \frac{|N^1_{b^1}|-\lceil\ell/2\rceil}{2^{k-1}}\rceil-1$.
     The sum $\sum_{k=1}^{\lceil \log X\rceil} \lceil \frac{X}{2^k} \rceil$ can be bounded by $\left( \sum_{k=1}^{\lceil \log X\rceil} \frac{X}{2^k}\right) + L$, where $L$ is the number of indices $k$ for which the fraction $X/2^k$ is rounded up. Observe that for every $X$,  $L$ can be bounded above by $\lceil \log X\rceil$ as $L$ exactly equals the number of 1 bits in the binary representation of $X$. 
Hence, the overall number of items that turn gray  can be bounded as follows:
\begin{align}
G^{\lceil \log n\rceil } &= \sum_{k=1}^{\lceil \log n\rceil} \left( G^k_{\neg b^k} + G^k_{ b^k}\right) \\ 
& \leq n-1 + \lceil \ell/2\rceil -1 + \sum_{k=2}^{\lceil \log n\rceil} \left(\left\lceil\frac{\lceil n/2 \rceil -\lceil \ell/2\rceil}{2^{k-1}}\right\rceil-1\right) \\ 
& \leq n-1 +  \lceil \ell/2 \rceil -1  +\lceil n/2\rceil  - \lceil \ell/2 \rceil + \lceil    \log n\rceil - \lceil \log n \rceil +1 \\
&\leq \lceil 3n/2\rceil -1 
\end{align}
     Therefore, the number of items that turn gray by the end of the recursion is at most $\lceil 3n/2\rceil-1$, and so $i_\ell$ get their $\lceil 3n/2\rceil$ highest valued item $v_{i_{\ell}}^{\lceil 3n/2\rceil}$.
\end{proof}

We now prove the claim regarding the cost of the recursion that was used in the previous lemma.
\begin{lemma}\label{clm:sum-recursion}
    Let $T(n)$ be such that $T(n)=n/2 + T(n/2)$ if $n$ is even and  either (a) $T(n)=\lceil n/2 \rceil + T(\lfloor n/2\rfloor)$ or (b) $T(n)= \lfloor n/2\rfloor+ T(\lceil n/2 \rceil)$ for odd $n$. Also assume $T(1)=0, T(2)=1$. Then $T(n)\leq n-1$.
\end{lemma}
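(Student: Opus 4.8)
The plan is to prove the bound $T(n) \le n-1$ by strong induction on $n$, verifying the two base cases that are given ($T(1)=0 \le 0$ and $T(2)=1 \le 1$) and then handling the inductive step by case analysis on the parity of $n$. The key observation is that in every branch of the recurrence, the additive term contributes exactly the size of the subset that is \emph{not} recursed upon, while the recursion is applied to the complementary subset; since the two subsets partition a set of size $n$, the additive term and the recursed size always sum to $n$.

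More concretely, in the even case we have $T(n) = n/2 + T(n/2)$. By the induction hypothesis $T(n/2) \le n/2 - 1$, so $T(n) \le n/2 + n/2 - 1 = n-1$, as required. For the odd case, both recurrence options have the same structure: the additive term plus the argument of $T$ always sum to $n$, since $\lceil n/2\rceil + \lfloor n/2\rfloor = n$. In option (a), $T(n) = \lceil n/2\rceil + T(\lfloor n/2\rfloor)$, and by induction $T(\lfloor n/2\rfloor) \le \lfloor n/2\rfloor - 1$, giving $T(n) \le \lceil n/2\rceil + \lfloor n/2\rfloor - 1 = n-1$. Option (b) is symmetric: $T(n) = \lfloor n/2\rfloor + T(\lceil n/2\rceil) \le \lfloor n/2\rfloor + \lceil n/2\rceil - 1 = n-1$.

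The main subtlety to watch for is the applicability of the induction hypothesis at the boundary. When $n=2$, the even recurrence would formally call $T(1)$, and the induction hypothesis $T(1)\le 0$ does hold; but since $T(2)=1$ is given as a base case, we simply invoke that directly rather than unfolding. For odd $n \ge 3$ we recurse on arguments $\lfloor n/2\rfloor$ and $\lceil n/2\rceil$, both of which lie in $[1, n-1]$, so the strong induction hypothesis applies cleanly. The proof is essentially a one-line telescoping argument once the parity cases are set up, so there is no real obstacle; the only thing to be careful about is to treat $n=1$ and $n=2$ as base cases so that every recursive call in the inductive step lands on a strictly smaller argument covered by the hypothesis.

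Putting this together, the complete argument is short: establish the base cases, then in each parity case combine the identity $\lceil n/2\rceil + \lfloor n/2\rfloor = n$ with the induction hypothesis to conclude $T(n) \le n-1$. Since this bound is exactly what is needed to conclude that $\sum_{k=1}^{\lceil \log n\rceil} G^k_{\neg b^k} \le n-1$ in the proof of Lemma~\ref{lem:3n/2}, no stronger or more refined estimate is required.
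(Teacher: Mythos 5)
Your proof is correct and follows essentially the same argument as the paper: strong induction on $n$ with the given base cases, then a parity case split using the identity $\lceil n/2\rceil + \lfloor n/2\rfloor = n$ together with the induction hypothesis in each branch. No discrepancies or gaps.
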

\begin{proof}
We prove the claim by induction on $n$. By $T(1)=0$ and $T(2)=1$ so the induction basis holds. We now assume correctness for all values smaller than $n$ and prove for $n$.

If $n$ is even then $T(n)=n/2 + T(n/2)\leq n/2 + n/2-1=n-1$, so the claim holds. 

If $n$ is odd, then in case  (a), $T(n) = \lceil n/2\rceil+ T(\lfloor n/2 \rfloor)\leq \lceil n/2\rceil+ \lfloor n/2 \rfloor-1 =n-1$, and in case (b), $T(n) = \lfloor n/2\rfloor+ T(\lceil n/2 \rceil)-1 \leq \lfloor n/2\rfloor+ \lceil n/2 \rceil -1=n-1$.
\end{proof}

Finally, we prove that the highest valued  item allocated to each agent $i$ is not too small compared to their MMS.

\begin{lemma}\label{lem:lb-mu_i}
    Consider an MMS for agent $i$, and let $j^*$ be the highest valued item of $i$ in her allocation.
Then $$v_i(j^*)\geq \mu^{\lceil 3n/2\rceil}_i/\alpha \text{\;\;\; for \;\;\;} \alpha=m-\lceil 3n/2\rceil-1.$$
\end{lemma}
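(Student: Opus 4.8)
The plan is to reduce the statement to a purely combinatorial inequality about an MMS partition, and then attack that inequality by pigeonholing the largest items across the bundles. First I would record the one allocation fact I actually need: by Lemma~\ref{lem:3n/2} (whose proof shows that agent $i$ is able to take some item of rank at most $\lceil 3n/2\rceil$, and always steals their best available item), the highest-valued item $j^*$ in $X_i$ satisfies $v_i(j^*)\ge v_i^{\lceil 3n/2\rceil}$. Write $k=\lceil 3n/2\rceil$ for brevity, fix an optimal MMS partition $S_1\bigcupdot\cdots\bigcupdot S_k=M$ so that $\mu^{k}_i=\min_{j}v_i(S_j)$, and dispose of the degenerate cases immediately: if some $S_j=\emptyset$ then $\mu^k_i=0$ and the bound is trivial, so I may assume all $k$ bundles are nonempty. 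The goal becomes to exhibit a bundle all of whose items have value at most $v_i(j^*)$ and whose cardinality is at most $m-k-1$, since then $\mu^{k}_i\le v_i(S^*)\le |S^*|\cdot v_i(j^*)\le (m-k-1)\,v_i(j^*)$.

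The core step is the pigeonhole. Consider the $k+1$ highest-valued items of $i$. If some bundle $S^*$ of the MMS partition contains \emph{none} of them, then every item in $S^*$ has rank at least $k+2$ and hence value at most $v_i^{\,k+2}\le v_i^{\,k}\le v_i(j^*)$; moreover all $k+1$ top items lie outside $S^*$, so $|S^*|\le m-(k+1)=m-k-1$. This is exactly the desired bundle, and it yields $\mu^{k}_i\le (m-k-1)\,v_i(j^*)$, completing this case. The same computation with the \emph{top $k-1$} items (which must miss at least one bundle by pigeonhole, since $k-1<k$) always works and gives the weaker denominator $m-k+1$; the point of using $k+1$ items is precisely to save the two units needed to reach $\alpha=m-k-1$, at the cost of no longer being guaranteed a large-item-free bundle.

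The main obstacle is therefore the complementary case, where \emph{every} one of the $k$ bundles contains at least one of the $k+1$ largest items (so, by counting, exactly one bundle holds two of them). Here naive counting only recovers $m-k+1$, so this case cannot be closed using $v_i(j^*)\ge v_i^{\,k}$ alone; indeed that bound is genuinely too weak when a few items are enormous. My plan to handle it is to argue that in this regime the large items are valuable enough that the stealing phase forces agent $i$'s best item $j^*$ to itself be one of the top items — strengthening Lemma~\ref{lem:3n/2} along the lines of the two-agent guarantee of Lemma~\ref{lem:pcasseond}, where an agent always secures one of their very top items. Once $v_i(j^*)$ is known to dominate all but a small number of items, the set of items exceeding $v_i(j^*)$ occupies few bundles, so one of the remaining bundles is again large-item-free and has size at most $m-k-1$ (bounding its cardinality by the averaging estimate $\lfloor (m-|B|)/(k-|B|)\rfloor$, where $B$ is the set of items above $v_i(j^*)$). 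Reconciling this allocation-side strengthening with the partition-side counting is the crux; the remaining ingredients — the $\mu^k_i=0$ and nonempty-bundle reductions, and the rank/value bookkeeping in the spirit of Lemma~\ref{lem:twosecondandkitems} — are routine.
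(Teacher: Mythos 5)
Your pigeonhole over the top $k+1$ items (writing $k=\lceil 3n/2\rceil$) is correct, and your fallback observation---that pigeonholing the top $k-1$ items always yields the denominator $m-k+1$---is already a complete proof of that weaker bound. But the gap you flag in the complementary case is not something you failed to see how to close: it cannot be closed. Read as the implication ``$v_i(j^*)\ge v_i^{k}$ implies $v_i(j^*)\ge \mu_i^{k}/(m-k-1)$''---which is all that Lemma~\ref{lem:3n/2} supplies, and is how both you and the paper use the hypothesis---the statement is false. Take $k-1$ items of value $V\ge m$ and $m-k+1$ items of value $1$. Then $v_i^{k}=1$, while $\mu_i^{k}=m-k+1$: placing each huge item alone and all unit items together achieves this, and no partition does better since some bundle must avoid all $k-1$ huge items. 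Hence $\mu_i^{k}/(m-k-1)=(m-k+1)/(m-k-1)>1$ whenever $m\ge k+2$, so an item $j^*$ of value exactly $v_i^{k}=1$ violates the claimed inequality. Your proposed repair---strengthening Lemma~\ref{lem:3n/2} so that in this regime $j^*$ is itself one of the top items---is likewise unavailable: the other agents' true valuations are also adversarial, so they can covet exactly those $k-1$ huge items, and the proof of Lemma~\ref{lem:3n/2} explicitly permits that many ``gray'' items to be removed before agent $i$ can take anything; there is no $n$-agent analogue of the two-agent guarantee of Lemma~\ref{lem:pcasseond}.

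This is not a deficiency of your approach relative to the paper's: the paper's proof breaks at exactly the same point, only silently. It takes $A_i$ to be the minimum bundle of an optimal $k$-partition, asserts $v_i(j^*)\ge \max_{g\in A_i}v_i(g)$ (itself problematic---for values $(5,4,3)$ and $k=2$ the optimal partition is $\{5\},\{4,3\}$, whose minimum bundle is $\{5\}$, and an item of value $v_i^{2}=4$ does not dominate it; one must instead pick a bundle avoiding the top items, as you do), and then claims $|A_i|\le m-k-1$ ``as at least $k-1$ items must be allocated to the $k-1$ additional agents.'' That justification yields only $|A_i|\le m-(k-1)=m-k+1$; the two missing items are never accounted for, and by the instance above they cannot be. So the gap in your proposal coincides with an off-by-two error in the paper: the correct form of Lemma~\ref{lem:lb-mu_i}, and of the robustness constant in Theorem~\ref{thm:n-agents-MMS}, has $\alpha=m-\lceil 3n/2\rceil+1$, and for that statement the part of your argument you called ``weaker'' is a complete and correct proof, leaving the paper's qualitative robustness message intact.
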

\begin{proof}
    Consider an  MMS allocation of $M$ for $k=\lceil3n/2\rceil$, and let $A_i$ be the set such that $v_i(A_i)=\mu^{k}_i$.
    By the assumption on $j^*$,  its value is higher then the highest valued item in $A_i$, $v_i(j^*)\geq v_i^1(A_i)$.
    Therefore,   $v_i(A_i)\leq |A_i|\cdot v_i(j^*)$, implying  $v_i(j^*)\geq v_i(A_i)/|A_i|=\mu^{k}_i/|A_i|.$
    Since $|A_i|\leq m-k-1$ (as at least $k-1$ items must be allocated to the $k-1$ additional agents, it holds that $v_i(j^*)\geq \mu^{3n/2}_i/(m-\lceil3n/2\rceil-1)$.
\end{proof}

\begin{proof}[Proof of Theorem~\ref{thm:n-agents-MMS}]
The theorem follows by Lemmas~\ref{clm:n-agents-true}, ~\ref{clm:n-agents-consist}, ~\ref{lem:3n/2}, and~\ref{lem:lb-mu_i}.
\end{proof}

\section{Experimental Results} \label{sec:experiments}
In this section, we give experiments which illustrate the role of different components of our framework for two players under various noise levels of the predictions.\footnote{The experiments, reproducible via Matlab (2022b) at https://tinyurl.com/PlantStealExperiments, were performed on a standard PC (Intel i9, 32GB RAM) in about 30 minutes.} The predictions we use for our experiments are the predicted values of the items. The noise we introduce permutes the vectors of values to match the instance's Kendall tau distance, and uses the permuted vector as prediction. We show that our framework is almost optimal for small amounts of noise while still showing resilience for higher noise levels. Moreover, we study the performance of variants which only use specific components of our framework. 

When using predictions, our initial allocation procedure is a cut-and-choose procedure, implemented as follows:
\begin{itemize}
    \item We use the first player's prediction to implement a water filling algorithm which sorts the items by values, and then partitions  the items into two sets using a greedy procedure that assigns each item to the set with current lowest value.
    \item We use the second player's prediction to allocated the agent the set with the higher predicted value of the two.
\end{itemize}
This allocation ensures that the second agent obtains their MMS value according to the prediction. In the data we generates, we observe that in a sampled valuation, the two sets chosen by the water filling algorithm gives the two sets the same value, up to 0.5\%, which ensures that the lowest valued set obtains a $1.026$-approximation to the MMS.

We inspect the following mechanisms:
\begin{enumerate}
    \item \emph{Random}: a mechanism that ignores reports and predictions and randomly partitions the items into two sets of size $m/2$.
    \item \emph{Random-Steal}: a mechanism that ignores predictions, randomly partitions the items into two sets of size $m/2$, and then implements the stealing phase where each player takes their favorite item from the other player's set according to reports.
    \item \emph{Partition}: a mechanism that ignores reports, and partitions the items according to predictions, using the cut-and-choose procedure described above.
    \item \emph{Partition-Steal}: a mechanism that partitions the items according to predictions, using the cut-and-choose procedure described above, and then implements the stealing phase where each player takes their favorite item from the other player's set according to reports.
    \item \emph{Partition-Plant-Steal}: a mechanism that implements the \pas\ framework. partitions the items according to predictions, using the cut-and-choose procedure described above, ``plants" each player's favorite item according to predictions, and then ``steals" each player's favorite item from the other player's set according to reports.
\end{enumerate}



\paragraph{Experiments.} We consider two-player scenarios with $m=100$ items. For each distance measure, we generate $1000$ valuation profiles. For each pair of valuation profiles and corresponding Kendall tau distance, we generate $100$ predictions based on the distance. We then assess the performance of the mechanisms described earlier on these instances. We examine two distinct cases regarding the relationship between the players' preference orders: the \emph{Correlated} case, where both players have identical preference orders, although their valuation magnitudes differ, and the \emph{Uncorrelated} case, where the preference orders of the players are generated independently and chosen uniformly at random. Further details on the procedures used to generate the valuations and predictions are provided in Appendix~\ref{sec:generated_vals}.


\paragraph{Benchmark.} We plot the percentage of these instances where both players get at least $(1-\epsilon)$ of their MMS value for $\epsilon=0.1,0.05,0.02.$

\paragraph{Results.} The results are shown in Figure~\ref{fig:expresults}. We first examine the performance of the two mechanisms that do not use predictions, \emph{Random} and \emph{Random-Steal}. Scenarios with correlated values perform significantly worse, as there is a non-negligible probability of an unbalanced partition of the relatively few high and medium valued items in a random partition. For $\epsilon$ values of $0.02, 0.05, 0.1$, the \emph{Random} strategy success rate is $11\%, 25\%,$ and $43\%$, respectively, under correlated preferences, compared to $33\%, 43\%,$ and $60\%$ under  uncorrelated preferences. Moreover, adding the stealing component significantly improves the success rate only in the uncorrelated case, as \emph{Random-Steal} achieves success rates of $66\%, 75\%,$ and $87\%$. In the correlated case, as each player has a highly valuable item stolen, their obtained value is not expected to increase. 

In the mechanisms that use predictions, \emph{Partition}, \emph{Partition-Steal} and \emph{Partition-Plant-Steal}, the performance degrades as a function of noise, as expected. When comparing the performance of \emph{Partition}, which only relies on the prediction component of our framework, and \emph{Random-Steal}, which only relies on the stealing component of our framework, we notice that in the uncorrelated case, for small amount of noise guarantee a higher success rate, while as the noise increases, the stealing component becomes more instrumental to the performance. This is in tact with the theoretical results, where using the prediction is crucial to achieve the consistency guarantees, which take place when the prediction is accurate, while stealing is important to achieve robustness guarantees in case the prediction is inaccurate. As described above, in the case where the valuations are correlated, stealing is not expected to help. Interestingly, on fully noisy input, even \emph{Random} outperforms \emph{Partition} as \emph{Partition} might partition the items into unequally-sized sets, which performs worse than the equally-sized sets \emph{Random} outputs.


Our experiments show that \emph{Partition-Plant-Steal} performs as well as the \emph{Partition} strategy for small amounts of noise and outperforms it on uncorrelated instances for large amounts of noise. Moreover, for any amount of noise, it outperforms \emph{Random-Steal} and converges to it for a fully noisy input. This illustrates the ``best of both worlds" tradeoff obtained by our framework.

Finally, when comparing the \emph{Partition-Plant-Steal} strategy to the \emph{Partition-Steal} strategy, we observe that \emph{Partition-Plant-Steal} outperforms \emph{Partition-Steal} in the correlated case with a small amount of noise (worst-case scenario) for $\epsilon=0.02$, as planting guarantees your favorite items would not be taken. In other scenarios, \emph{Partition-Steal} outperforms \emph{Partition-Plant-Steal} because ``planting" removes a valuable item from the player's set that might be taken otherwise, especially in the uncorrelated case.

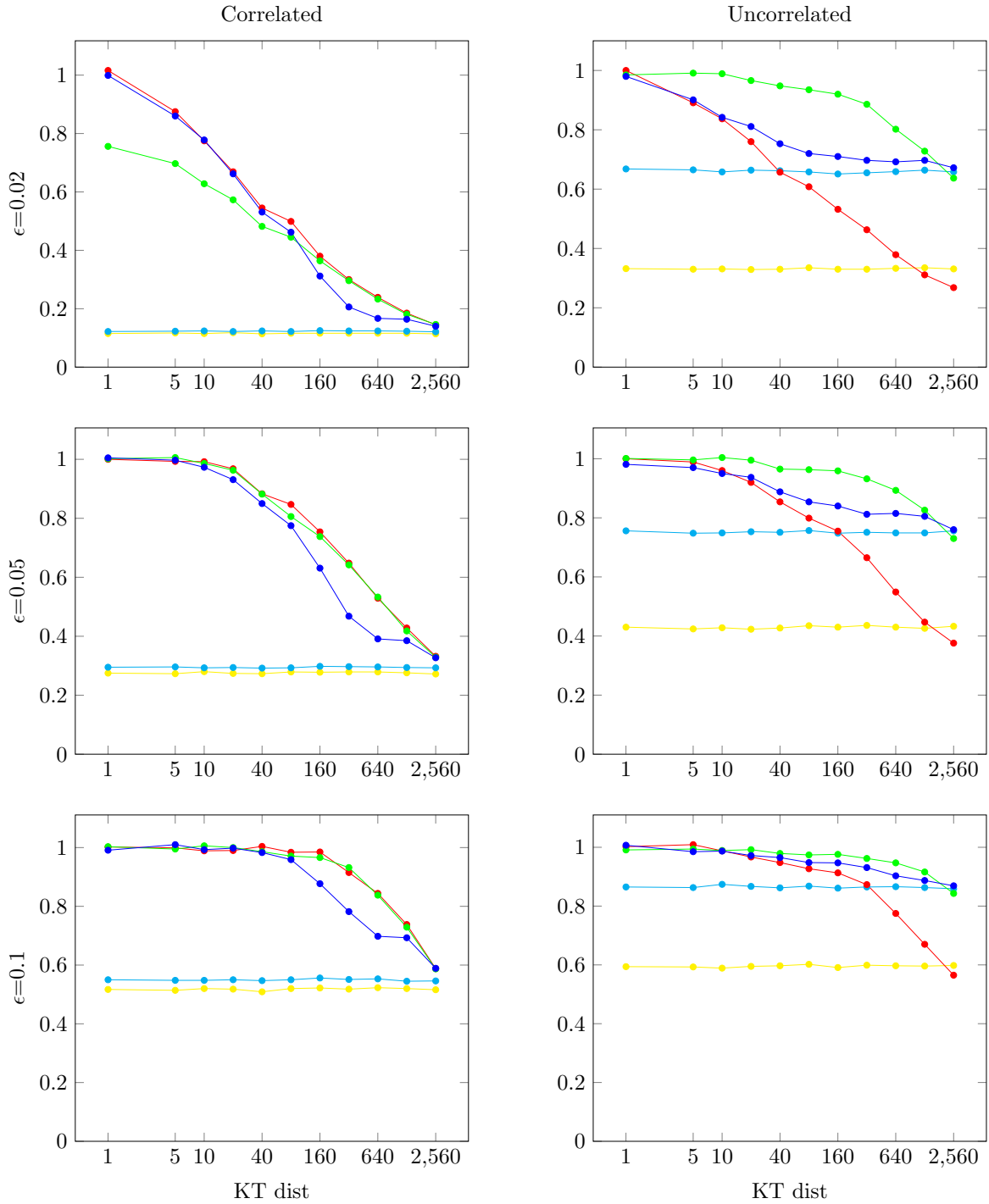
\begin{figure}

\begin{subfigure}{0.5\textwidth}
\begin{tikzpicture}[scale=0.90]
    \label{fig:expvals}
    \begin{axis}[xmode=log, log ticks with fixed point,xtick={1,5,10,40,160,640,2560},
        title={Correlated},ylabel={$\epsilon$=0.02},ymin=0]
\addplot[mark=*, mark size=1.5pt,  yellow, solid] coordinates
{(1,0.115) (5,0.117) (10,0.115) (20,0.118) (40,0.114) (80,0.116) (160,0.116) (320,0.116) (640,0.116) (1280,0.116) (2560,0.114) };
\addplot[mark=*, mark size=1.5pt, cyan, solid] coordinates
{(1,0.122) (5,0.123) (10,0.124) (20,0.122) (40,0.124) (80,0.122) (160,0.125) (320,0.124) (640,0.124) (1280,0.123) (2560,0.121) };
\addplot[mark=*, mark size=1.5pt,  red, solid] coordinates
{(1,1.016) (5,0.875) (10,0.775) (20,0.669) (40,0.545) (80,0.499) (160,0.380) (320,0.300) (640,0.239) (1280,0.185) (2560,0.145) };
\addplot[mark=*, mark size=1.5pt,  green, solid] coordinates
{(1,0.756) (5,0.697) (10,0.628) (20,0.573) (40,0.482) (80,0.445) (160,0.364) (320,0.296) (640,0.233) (1280,0.181) (2560,0.146) };
\addplot[mark=*, mark size=1.5pt,  blue, solid] coordinates
{(1,0.999) (5,0.860) (10,0.778) (20,0.662) (40,0.531) (80,0.462) (160,0.312) (320,0.206) (640,0.167) (1280,0.164) (2560,0.140) };
        \end{axis}
\end{tikzpicture}

\end{subfigure}
\hfill 
\begin{subfigure}{0.48\textwidth}

\begin{tikzpicture}[scale=0.90]
    \label{fig:expvals}
    \begin{axis}[xmode=log, log ticks with fixed point,xtick={1,5,10,40,160,640,2560},
        title={Uncorrelated},ymin=0]
\addplot[mark=*, mark size=1.5pt,  yellow, solid] coordinates
{(1,0.332) (5,0.330) (10,0.331) (20,0.329) (40,0.330) (80,0.335) (160,0.330) (320,0.330) (640,0.333) (1280,0.335) (2560,0.331) };
\addplot[mark=*, mark size=1.5pt,  cyan, solid] coordinates
{(1,0.668) (5,0.665) (10,0.658) (20,0.664) (40,0.662) (80,0.658) (160,0.651) (320,0.655) (640,0.659) (1280,0.664) (2560,0.658) };
\addplot[mark=*, mark size=1.5pt,  red, solid] coordinates
{(1,1.000) (5,0.891) (10,0.837) (20,0.760) (40,0.657) (80,0.608) (160,0.532) (320,0.463) (640,0.379) (1280,0.311) (2560,0.268) };
\addplot[mark=*, mark size=1.5pt,  green, solid] coordinates
{(1,0.985) (5,0.991) (10,0.989) (20,0.966) (40,0.948) (80,0.935) (160,0.920) (320,0.886) (640,0.802) (1280,0.728) (2560,0.637) };
\addplot[mark=*, mark size=1.5pt,  blue, solid] coordinates
{(1,0.980) (5,0.901) (10,0.842) (20,0.811) (40,0.753) (80,0.720) (160,0.71) (320,0.697) (640,0.692) (1280,0.697) (2560,0.672) };
        \end{axis}
\end{tikzpicture}

\end{subfigure}

\vspace{1em}

\begin{subfigure}{0.5\textwidth}
\begin{tikzpicture}[scale=0.90]
    \label{fig:expvals}
    \begin{axis}[xmode=log, log ticks with fixed point,xtick={1,5,10,40,160,640,2560},
        title={},xlabel={},ylabel={$\epsilon$=0.05},ymin=0]
\addplot[mark=*, mark size=1.5pt,  yellow, solid] coordinates
{(1,0.275) (5,0.273) (10,0.280) (20,0.274) (40,0.273) (80,0.279) (160,0.278) (320,0.279) (640,0.279) (1280,0.276) (2560,0.272) };
\addplot[mark=*, mark size=1.5pt,  cyan, solid] coordinates
{(1,0.295) (5,0.296) (10,0.293) (20,0.294) (40,0.292) (80,0.293) (160,0.298) (320,0.297) (640,0.296) (1280,0.294) (2560,0.293) };
\addplot[mark=*, mark size=1.5pt,  red, solid] coordinates
{(1,1.000) (5,0.993) (10,0.992) (20,0.968) (40,0.883) (80,0.847) (160,0.754) (320,0.648) (640,0.529) (1280,0.428) (2560,0.332) };
\addplot[mark=*, mark size=1.5pt,  green, solid] coordinates
{(1,1.002) (5,1.006) (10,0.986) (20,0.963) (40,0.882) (80,0.806) (160,0.738) (320,0.642) (640,0.533) (1280,0.418) (2560,0.329) };
\addplot[mark=*, mark size=1.5pt,  blue, solid] coordinates
{(1,1.005) (5,0.997) (10,0.973) (20,0.931) (40,0.850) (80,0.775) (160,0.631) (320,0.468) (640,0.391) (1280,0.385) (2560,0.327) };
        \end{axis}
\end{tikzpicture}

\end{subfigure}
\hfill 
\begin{subfigure}{0.48\textwidth}

\begin{tikzpicture}[scale=0.90]
    \label{fig:expvals}
    \begin{axis}[xmode=log, log ticks with fixed point,xtick={1,5,10,40,160,640,2560},
        title={},xlabel={}, ymin=0]
\addplot[mark=*, mark size=1.5pt,  yellow, solid] coordinates
{(1,0.430) (5,0.424) (10,0.428) (20,0.423) (40,0.427) (80,0.435) (160,0.430) (320,0.436) (640,0.430) (1280,0.426) (2560,0.433) };
\addplot[mark=*, mark size=1.5pt,  cyan, solid] coordinates
{(1,0.756) (5,0.748) (10,0.749) (20,0.753) (40,0.751) (80,0.757) (160,0.748) (320,0.751) (640,0.749) (1280,0.749) (2560,0.756) };
\addplot[mark=*, mark size=1.5pt, red, solid] coordinates
{(1,1.000) (5,0.989) (10,0.960) (20,0.920) (40,0.854) (80,0.799) (160,0.755) (320,0.665) (640,0.549) (1280,0.447) (2560,0.376) };
\addplot[mark=*, mark size=1.5pt, green, solid] coordinates
{(1,1.001) (5,0.996) (10,1.004) (20,0.995) (40,0.965) (80,0.963) (160,0.959) (320,0.932) (640,0.893) (1280,0.826) (2560,0.730) };
\addplot[mark=*, mark size=1.5pt,  blue, solid] coordinates
{(1,0.981) (5,0.970) (10,0.950) (20,0.937) (40,0.888) (80,0.854) (160,0.840) (320,0.812) (640,0.815) (1280,0.805) (2560,0.760) };
        \end{axis}
\end{tikzpicture}

\end{subfigure}

\vspace{1em}

\begin{subfigure}{0.5\textwidth}
\begin{tikzpicture}[scale=0.90]
    \label{fig:expvals}
    \begin{axis}[xmode=log, log ticks with fixed point,xtick={1,5,10,40,160,640,2560},
        title={},ylabel={$\epsilon$=0.1},xlabel={KT dist},ymin=0]
\addplot[mark=*, mark size=1.5pt, yellow, solid] coordinates
{(1,0.517) (5,0.514) (10,0.520) (20,0.518) (40,0.509) (80,0.520) (160,0.522) (320,0.518) (640,0.523) (1280,0.520) (2560,0.516) };
\addplot[mark=*, mark size=1.5pt, cyan, solid] coordinates
{(1,0.550) (5,0.548) (10,0.548) (20,0.550) (40,0.547) (80,0.550) (160,0.556) (320,0.551) (640,0.553) (1280,0.545) (2560,0.546) };
\addplot[mark=*, mark size=1.5pt, red, solid] coordinates
{(1,1.002) (5,0.999) (10,0.989) (20,0.990) (40,1.004) (80,0.984) (160,0.985) (320,0.915) (640,0.844) (1280,0.738) (2560,0.588) };
\addplot[mark=*, mark size=1.5pt, green, solid] coordinates
{(1,1.003) (5,0.995) (10,1.006) (20,1.000) (40,0.986) (80,0.971) (160,0.966) (320,0.932) (640,0.838) (1280,0.729) (2560,0.587) };
\addplot[mark=*, mark size=1.5pt, blue, solid] coordinates
{(1,0.991) (5,1.010) (10,0.993) (20,0.998) (40,0.983) (80,0.959) (160,0.877) (320,0.782) (640,0.698) (1280,0.693) (2560,0.589) };
        \end{axis}
\end{tikzpicture}

\end{subfigure}
\hfill 
\begin{subfigure}{0.48\textwidth}

\begin{tikzpicture}[scale=0.90]
    \label{fig:expvals}
    \begin{axis}[xmode=log, log ticks with fixed point,xtick={1,5,10,40,160,640,2560},
        title={},xlabel={KT dist},ymin=0]
\addplot[mark=*, mark size=1.5pt,  yellow, solid] coordinates
{(1,0.594) (5,0.593) (10,0.589) (20,0.595) (40,0.597) (80,0.602) (160,0.591) (320,0.599) (640,0.597) (1280,0.596) (2560,0.598) };
\addplot[mark=*, mark size=1.5pt,  cyan, solid] coordinates
{(1,0.865) (5,0.863) (10,0.874) (20,0.867) (40,0.862) (80,0.868) (160,0.861) (320,0.865) (640,0.866) (1280,0.863) (2560,0.859) };
\addplot[mark=*, mark size=1.5pt,  red, solid] coordinates
{(1,1.002) (5,1.009) (10,0.988) (20,0.967) (40,0.948) (80,0.927) (160,0.913) (320,0.873) (640,0.775) (1280,0.670) (2560,0.565) };
\addplot[mark=*, mark size=1.5pt,  green, solid] coordinates
{(1,0.991) (5,0.994) (10,0.989) (20,0.992) (40,0.979) (80,0.974) (160,0.976) (320,0.962) (640,0.947) (1280,0.916) (2560,0.843) };
\addplot[mark=*, mark size=1.5pt,  blue, solid] coordinates
{(1,1.007) (5,0.985) (10,0.987) (20,0.972) (40,0.965) (80,0.948) (160,0.947) (320,0.931) (640,0.903) (1280,0.887) (2560,0.869) };
        \end{axis}
\end{tikzpicture}

\end{subfigure}

\caption{Mechanism: \emph{Random} (yellow), \emph{Random-Steal}(cyan), \emph{Partition}(red), \emph{Partition-Steal}(green), \emph{Partition-Plant-Steal}(blue), for the correlated case (first column) and the uncorrelated case (second column) for epsilons: $0.98$ (first row),  $0.95$( second row) and $0.9$ (third row).}
\label{fig:expresults}
\end{figure}

\clearpage
\bibliographystyle{plainnat}
\bibliography{plant-and-steal.bib}

\appendix

\section{Experimental Supplement}
 \label{sec:generated_vals} 
\paragraph{Generating valuations.} 
To generate interesting valuations for the players, we use a multi-step function 
to generate item values, since if the values are close together, any balanced partition obtains good MMS guarantees, without considering reports and predictions. Specifically, we consider a three-step (High-Med-Low) random valuation function, where each player has a high valuation with a probability of $8/m$, a medium valuation with a probability of $1/4$, and a low valuation with a probability of $1/2$. The high valuation is $U[1000,2000]$, the medium valuation are $U[400,800]$
and the low valuations are $U[100,200]$ the rest of the values are $U[1,2]$.
Figure~\ref{fig:expvalst} shows the value distribution generated by our process for two players. we generate values over $m=100$ items.

We generate valuations satisfying one of the two types of relations between players' preferences:
\begin{itemize}
    \item \emph{Correlated}: the two preference orders are identical (but not the values).
    \item \emph{Uncorrelated}: Both preference orders are chosen independently and uniformly at random. 
\end{itemize}

\paragraph{Generating predictions.} To generate predictions, we take valuations and permute elements randomly to create noise. We generate predictions under varying noise levels according to the Kendall tau distance between the valuations and the predictions. We very the Kendall tau distance between $1$ to $2560$, where $2560$ corresponds to the expected noise level of a random permutation of $100$ items. To randomly choose a permutation of a certain noise level, we start with the ordered permutation and then choose two indices $j<k$ u.a.r. and swap items $r$ and $r+1$ for $r \in \{j,\dots,k-1\}$ if it increases the Kendall Tau distance by one. We repeat this process until the distance of the resulting permutation equals the desired value.

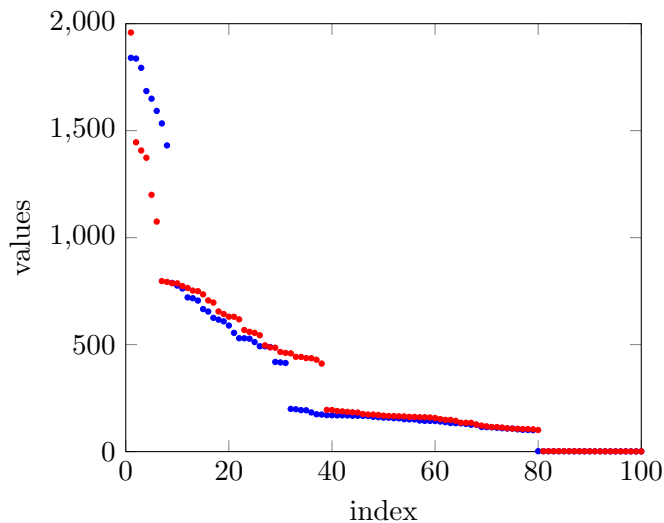
\begin{figure}[ht!]
\begin{tikzpicture}
    \begin{axis}[
        title={ },
        xlabel={index},
        ylabel={values},
        ymin=0,
        ymax=2000,
        xmax=100,
        xmin=0
    ]
    \addplot[only marks, mark=*, blue, mark size=1pt,solid] coordinates
    {(1,1840.420) (2,1836.950) (3,1793.779) (4,1685.203) (5,1649.372) (6,1592.343) (7,1533.702) (8,1430.986) (9,788.414) (10,775.869) (11,762.195) (12,720.649) (13,716.827) (14,705.959) (15,666.254) (16,654.578) (17,625.191) (18,616.754) (19,608.729) (20,589.249) (21,555.005) (22,529.950) (23,529.524) (24,527.797) (25,511.944) (26,492.353) (27,492.059) (28,488.857) (29,419.108) (30,416.640) (31,414.214) (32,199.646) (33,197.861) (34,193.700) (35,193.063) (36,183.214) (37,174.517) (38,173.301) (39,170.698) (40,169.835) (41,169.471) (42,169.416) (43,169.090) (44,167.710) (45,167.424) (46,166.898) (47,166.082) (48,162.690) (49,160.810) (50,158.910) (51,158.707) (52,156.091) (53,155.526) (54,151.408) (55,150.611) (56,150.279) (57,145.395) (58,143.881) (59,143.403) (60,142.944) (61,140.642) (62,137.354) (63,133.750) (64,132.657) (65,131.994) (66,129.182) (67,125.711) (68,125.500) (69,115.185) (70,114.157) (71,113.461) (72,111.400) (73,109.323) (74,107.882) (75,107.520) (76,103.970) (77,101.908) (78,101.117) (79,100.026) (80,1.952) (81,1.830) (82,1.805) (83,1.776) (84,1.759) (85,1.709) (86,1.616) (87,1.595) (88,1.588) (89,1.571) (90,1.510) (91,1.473) (92,1.363) (93,1.354) (94,1.289) (95,1.265) (96,1.265) (97,1.235) (98,1.145) (99,1.124) (100,1.031) };
    
    \addplot[only marks, mark=*, red, mark size=1pt, solid] coordinates 
    {(1,1958.593) (2,1445.680) (3,1407.280) (4,1373.382) (5,1200.216) (6,1075.096) (7,796.804) (8,793.150) (9,786.837) (10,786.373) (11,773.533) (12,764.651) (13,752.652) (14,749.770) (15,734.878) (16,706.527) (17,696.632) (18,655.108) (19,642.951) (20,630.696) (21,630.328) (22,618.147) (23,568.247) (24,559.853) (25,555.123) (26,543.770) (27,496.628) (28,487.156) (29,485.615) (30,465.490) (31,461.370) (32,459.187) (33,443.117) (34,442.577) (35,437.064) (36,436.288) (37,429.534) (38,411.598) (39,194.895) (40,194.207) (41,189.733) (42,188.083) (43,185.890) (44,183.843) (45,182.704) (46,175.951) (47,173.336) (48,173.291) (49,171.153) (50,168.171) (51,166.474) (52,166.032) (53,164.945) (54,164.535) (55,162.381) (56,162.160) (57,160.744) (58,160.613) (59,159.644) (60,157.247) (61,151.857) (62,148.535) (63,148.172) (64,144.115) (65,135.120) (66,134.870) (67,134.709) (68,126.699) (69,121.763) (70,118.189) (71,114.989) (72,114.789) (73,112.674) (74,109.168) (75,107.550) (76,106.975) (77,105.798) (78,105.609) (79,104.004) (80,101.015) (81,1.982) (82,1.961) (83,1.896) (84,1.877) (85,1.798) (86,1.778) (87,1.739) (88,1.658) (89,1.527) (90,1.436) (91,1.378) (92,1.309) (93,1.255) (94,1.126) (95,1.122) (96,1.067) (97,1.053) (98,1.038) (99,1.030) (100,1.018) 
 };
    \end{axis}
\end{tikzpicture}
\caption{Plotting randomly sampled valuations for two players, where the values are sorted such that lower indexed items have higher values.} 
\label{fig:expvalst}
\end{figure}

\section{Deferred proofs from Section \ref{sec:ordering}} \label{app:orderings_missing_proofs}

\begin{proof}[Proof of Lemma~\ref{lem:rr2agent}]
    
    By Observation~\ref{obv:rr}, we have $v_i(a_{i}^k)\geq \kthv{i}{2k}$, therefore 
    
    \begin{equation}
    \label{eq:rrsecond}
    v_i(A_i) = \sum_{k=1}^{|A_i|} a_{i}^k \geq \sum_{k=1}^{\lfloor m/2 \rfloor} \kthv{i}{2k}.
    \end{equation}

    Since $i$'s favorite item must be absent from some set of the sets defining the MMS value, $$\sum_{k=2}^{m} \kthv{i}{k} \ge \mu_i.$$ 
    Since the $\kthv{i}{k}$ are ordered,
    $\kthv{i}{2k} \geq \kthv{i}{2k+1}$, hence
    $\sum_{k=1}^{\lfloor m/2 \rfloor} \kthv{i}{2k} \geq \sum_{k=1}^{\lfloor m/2 \rfloor} \kthv{i}{2k+1}$.
    Therefore,
    \begin{equation}
        \label{eq:secondchoice}
        \sum_{k=1}^{\lfloor m/2 \rfloor} \kthv{i}{2k}\geq \mu_i/2
    \end{equation}

    By Equations~(\ref{eq:rrsecond}),(\ref{eq:secondchoice}), we have:
    $$ v_i(A_i) \geq \sum_{k=1}^{\lfloor m/2 \rfloor} \kthv{i}{2k} \geq \mu_i/2.$$
\end{proof}

\begin{proof}[Proof of Lemma~\ref{lem:12rrapx}]
    We first prove the approximation for player 1 (the first player to be allocated). First, observe that $v_1(M) \geq 2\mu_1$. Let $I_1=\{v_1^{3k-2} \ : \ k=1,\ldots, \lceil m/3 \rceil\}$ be the worst possible allocation agent 1 might get in the \otrr\ allocation. Notice that $v_1(I_1) \ge v_1(M)/3 \ge 2 \mu_1/3$. By Observation~\ref{obv:rrot}, $v_1(a_1^k) \geq v_1^{3k-2}$. Therefore, $v_1(A_1)\ge v_1(I_1)\ge 2 \mu_1/3$.

    Now consider player 2. As stated in the proof of Lemma~\ref{lem:rr2agent}, $v_2(M\setminus v_2^1) \geq \mu_2$. Let $$I_2^a = \{v_2^{3k-1} \ :  \ k\in \mathbb{N}_{> 0} \ \wedge\ 3k-1 \le m\} \mbox{ and } I_2^b = \{v_2^{3k} \ :  \ k\in \mathbb{N}_{> 0} \ \wedge\ 3k \le m\}.$$ First, notice that $$v_2(I_2^a\cup I_2^b)\ \ge\ 2 v_2(M\setminus v_2^1) /3\ \ge\ 2 \mu_2/3.$$ Moreover, by Observation~\ref{obv:rrot}, we have, $v_2(a_2^{2k-1}) \geq v_2^{3k-1}$, and 
     $v_2(a_2^{2k}) \geq v_2^{3k}.$ Therefore, $v_2(A_2)\ge v_2(I_2^a \cup I_2^b)\ge 2 \mu_2/3.$ 

\end{proof}

\begin{proof}[Proof of Theorem~\ref{thm:ot_guarantees}]
    By \Cref{lem:pas-truth}, the mechanism is truthful.
    By \Cref{obv:rr}, each agent receives at least $\lceil m/3 \rceil$ items; combining with \Cref{lem:pas-robust}, we get that the mechanism is $\lfloor \frac{2m}{3}\rfloor$-robust.
    Finally, if predictions correspond to valuations, by \Cref{lem:12rrapx} and \Cref{lem:restotwo}, the allocation is $3/2$-approximation to the MMS. Thus, the mechanism is $2/3$-consistent.
\end{proof}

\section{Deferred proofs from Section \ref{sec:non-ordering}}

\subsection{No Mechanism with Bounded Robustness and Consistency $<6/5$} \label{app:lb}

In \cite{AmanatidisBCM17} they define the following family of mechanisms.

\begin{definition}[Singleton Picking-Exchange Mechanisms~\cite{AmanatidisBCM17}]
    A mechanism $X$ is a \textit{singleton picking-exchange} mechanism if for each $i\in\{1,2\}$, there is exactly  one of two sets: either $N_i\subseteq M$, or $E_i = \{\ell_i\}$ for a single item $\ell_i\in M$. If $N_i$ is non-empty, then the mechanism lets player $j\neq i$ pick item $\ell \in N_i$ that maximizes $v_j(\ell)$, and $i$ gets $N_i\setminus \{\ell\}$. If both $E_1,E_2$ are non-empty, then the agents exchange the two items $\ell_1\in E_1$ and $\ell_2\in E_2$ if $v_1(\ell_2) > v_1(\ell_1)$ and $v_2(\ell_1) > v_1(\ell_2).$ Notice that if $m> 2$, either $E_1$ or $E_2$ is empty and there will be no exchange.
\end{definition}

\cite{AmanatidisBCM17} showed the following.

\begin{lemma}
    In order for a mechanism to be truthful and have a bounded approximation, it has to be a singleton picking-exchange mechanism
\end{lemma}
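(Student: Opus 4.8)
The plan is to fix one agent's report, analyze the induced single-agent allocation rule via the taxation (menu) principle, use the bounded-approximation requirement to collapse the menu to essentially a single-item choice, and finally reconcile the two agents' menus into the claimed global partition structure. The two ingredients that do the work are the linearity of additive valuations (which lets me perturb one item's value at a time) and the finiteness of the approximation ratio (which prevents either agent from being starved).

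First I would invoke the taxation principle separately for each agent. Fixing agent $2$'s report $v_2$, truthfulness says that agent $1$'s bundle $X_1(v_1,v_2)$ is always a $v_1$-maximizer over the fixed range $R_1(v_2)=\{X_1(v_1',v_2):v_1'\}$, and symmetrically for agent $2$ with $v_1$ fixed. Since $X_2=M\setminus X_1$, the two ranges are complementary, so the whole mechanism is a \emph{simultaneous} best-response selection for both agents. From maximization over a fixed range I extract the standard single-parameter monotonicity: holding all other coordinates of $v_1$ and all of $v_2$ fixed and raising $v_{1j}$ can only move item $j$ into agent $1$'s bundle, never out of it. Hence, for each fixed $v_2$, the items partition into those agent $1$ always receives, those it never receives, and a contested set $F_1(v_2)$ governed by per-item inclusion thresholds; the same holds with roles reversed.

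Next I would use bounded approximation to exclude degenerate allocations. Whenever $\mu_i>0$ (for instance on any profile assigning positive value to at least two items) the guarantee $v_i(X_i)\ge \mu_i/\beta>0$ forces each agent to receive a positive-value, hence nonempty, bundle; ranging over all such profiles shows neither agent can ever be handed the entire contested region, so the contested items must be split in a value-sensitive way between the two sides. The crux — and the step I expect to be the main obstacle — is then to show that this split degenerates to a single pick: for each fixed $v_2$, agent $1$ receives all of its contested items except the single item that agent $2$ most prefers (agent $2$ ``picks'' one item from a set designated to it), and vice versa. I would argue by contradiction: if agent $1$ could, via two different reports, independently withhold two distinct items $a$ and $b$ from its own bundle, then monotonicity together with agent $2$'s own truthful best-response would force agent $2$'s realized bundle to depend on $v_1$ in a manner inconsistent with agent $2$ also being a menu-maximizer, unless $a$ and $b$ are handled as one atomic choice. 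Instantiating a profile in which $a,b$ are essentially the only valuable items for one agent then produces a bundle of value below $\mu/\beta$, contradicting bounded approximation. This is precisely where additivity is indispensable, since it lets me isolate each item's threshold by moving a single coordinate at a time.

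Finally I would assemble the global form. The per-agent analysis yields, for each $i$, either a designated set $N_i$ from which the \emph{other} agent takes its favorite single item (the picking part) or a lone reserved item $E_i=\{\ell_i\}$; when both reserved sets are nonempty the only remaining freedom is a possible swap of $\ell_1$ and $\ell_2$ under the stated value comparisons, which is exactly the exchange part of the definition. A simple item-count then shows that for $m>2$ at most one of $E_1,E_2$ can be nonempty, so no exchange can occur and the mechanism reduces to the pure singleton-picking form, completing the characterization that a truthful mechanism with bounded approximation must be a singleton picking-exchange mechanism.
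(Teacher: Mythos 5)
First, note that the paper does not actually prove this lemma: it is imported verbatim from \cite{AmanatidisBCM17}, where the characterization of two-agent truthful mechanisms is the central technical result and occupies many pages of case analysis, so your attempt has to be measured against that argument rather than anything in the present paper. Your sketch correctly sets up the taxation principle (for fixed $v_2$, agent 1's bundle is a $v_1$-maximizer over a fixed menu $R_1(v_2)$, and symmetrically) and correctly identifies where bounded approximation must enter. But it contains a concrete false step: from the (true) per-item monotonicity --- raising $v_{1j}$ cannot eject $j$ from agent 1's chosen bundle --- you conclude that the contested items are ``governed by per-item inclusion thresholds.'' This does not follow for menus of bundles under additive valuations: if, say, $R_1(v_2)=\bigl\{\{a,b\},\{c\}\bigr\}$, then whether agent 1 receives $a$ is decided by comparing $v_{1a}+v_{1b}$ with $v_{1c}$, not by any threshold on $v_{1a}$ alone. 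Ruling out such bundled menus is precisely the content of the characterization, not a consequence of monotonicity, so at this point you have silently assumed a large part of the conclusion.

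Second, the step you yourself flag as the crux --- that the contested region degenerates to a single pick --- is asserted rather than proved. Your proposed contradiction (``monotonicity together with agent 2's truthful best-response would force agent 2's realized bundle to depend on $v_1$ in a manner inconsistent with agent 2 also being a menu-maximizer, unless $a$ and $b$ are handled as one atomic choice'') is exactly the two-sided menu-consistency problem that consumes most of the effort in \cite{AmanatidisBCM17}, where it needs dedicated machinery (a notion of which agent ``controls'' each item, plus an extensive case analysis of how the two agents' menus can coexist). Moreover, truthfulness alone genuinely permits richer structures --- picking families offering larger sets and multi-item exchange deals --- and it is only the bounded-approximation hypothesis that collapses them to the singleton form; your plan conflates these two stages (characterize truthful mechanisms first, then refine via bounded approximation), and the one-sentence argument establishes neither. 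The closing assembly and the $m>2$ counting observation are fine, but they rest on the unproven core, so the proposal as written has a genuine gap rather than being an alternative proof.
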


We make use of this characterization in our impossibility.

\predlb*
\begin{proof}
    Consider the case where $p_1=p_2=(1/2,1/2,1/3,1/3,1/3)$. Notice that for the predictions, $\mu_1=\mu_2=1$. We show that for any singleton-picking-exchange mechanism, no agent  obtains both large items (of value $1/2$). Consider agent 1 (the argument is symmetric for agent 2). If $N_1$ is non-empty, then if both large items are in $N_1$,  surely 1 will only get one of them. If both large items are in $N_2$, then agent $2$ will surely pick one of them, and agent 1 will only get one of them. If one large item is in $N_1$ and the other is in $N_2$, each agent $i$ will pick the large item in $N_i$. If agent 2 has a large item in $E_2$, then since $N_1$ is non-empty, $E_1$ is empty and agent  $2$ will keep the large item. Now consider the case where $E_1$ is non-empty. In this case, $E_1$ contains one item, and $N_1$ is empty. Since $E_2$ can contain at most one item, and there are more than 2 items, in this case, $E_2=\emptyset$, and $|N_2|=4$. Therefore, $N_2$ contains at least one large item. Since agent 2 will always pick the large item, agent one only gets one large item. We conclude that for any singleton picking-exchange mechanism, the large items are split among the agents. Since there are 3 small items, there must be an agent that gets at most one small item, and this agent has an overall value of at most $1/2+1/3=5/6$, while the MMS is $1$. Thus the claim follows. 
\end{proof}

\subsection{Proof of Proposition \ref{prop: there are succinct descriptions}}
\label{subsec: proof of succinctness}
We first show the following, which implies the first half of Proposition \ref{prop: there are succinct descriptions}. 
\begin{proposition}
\label{prop: good approximation to MMS using small space}
    There exists a partition $M=L_1 \bigcupdot L_2 \bigcupdot S$ and and indices $\alpha_1, \alpha_2, \beta_1, \beta_2$ in $[m]$ such that $M=[\alpha_1, \beta_1] \bigcupdot [\alpha_2, \beta_2]$, for the sets $S_1 = L_1 \cup (S \cap [\alpha_1, \beta_1])$ and  $S_2 = L_2 \cup (S \cap [\alpha_2, \beta_2])$ we have 
\begin{itemize}
\item $\min\{v_1(S_1),v_1(S_2)\}\ge (1-\epsilon/8)\mu_1$
\item $|L_1|+|L_2|\leq \lceil \frac{8}{\epsilon} \rceil+2$
\item $|S_1|\geq |S_2|$.
    \item For every $x$ in $L_1$ and $y$ in $S_1$ we have $v_1(x)>v_1(y)$. Analogously, for every $x$ in $L_2$ and $y$ in $S_2$ we have $v_1(x)>v_1(y)$ 
\item There are $\hat{j}, \hat{j}' \in L_1$ satisfying $\hat{j} \in \arg \max_{\ell \in S_1} v_1(\ell) $ and $\hat{j}' \in \arg \max_{\ell \in S_1\setminus \hat{j}} v_1(\ell)$, 
\end{itemize}
\end{proposition}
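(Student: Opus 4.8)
The plan is to work with the items sorted in non-increasing order of $v_1$-value, so I identify $[m]$ with the list $v_1(1)\ge v_1(2)\ge\cdots\ge v_1(m)$; every interval $[\alpha,\beta]$ and every ``prefix''/``suffix'' below refers to this order. The single guiding quantity is the \emph{length of the feasibility window}. Since any MMS-defining partition gives $v_1(M)\ge 2\mu_1$, the requirement $\min\{v_1(S_1),v_1(S_2)\}\ge(1-\epsilon/8)\mu_1$ is equivalent to placing $v_1(S_2)$ in the interval $\big[(1-\epsilon/8)\mu_1,\ v_1(M)-(1-\epsilon/8)\mu_1\big]$, whose length I denote $\theta:=v_1(M)-2(1-\epsilon/8)\mu_1\ge \epsilon\mu_1/4>0$. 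I will use \emph{exactly} this $\theta$ as the large/small threshold: let $\mathcal L=\{j:v_1(j)>\theta\}$ (an initial segment of the sorted order) and $S=M\setminus\mathcal L$ (the small items, each of value $\le\theta$).

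First I bound $|\mathcal L|$, which controls $|L_1|+|L_2|$. Since $v_1(\mathcal L)\le v_1(M)$ and each large item exceeds $\theta$, we get $|\mathcal L|<v_1(M)/\theta$. Viewing the right-hand side as the function $V\mapsto V/\big(V-2(1-\epsilon/8)\mu_1\big)$ of $V=v_1(M)$, it is decreasing for $V\ge 2\mu_1$ and equals $8/\epsilon$ at $V=2\mu_1$; hence $|\mathcal L|<8/\epsilon$. I emphasize that this needs only $v_1(M)\ge 2\mu_1$, so no separate ``huge item'' case analysis is required.

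Next I build the partition. Take an MMS-optimal partition and let $S_2^{0}$ be its part of value $\mu_1$, so $v_1(S_2^{0})=\mu_1\in[(1-\epsilon/8)\mu_1,\ v_1(M)/2]$ lies in the window. I keep the large items of $S_2^{0}$ exactly, setting $L_2:=S_2^{0}\cap\mathcal L$ and $L_1:=S_1^{0}\cap\mathcal L$ (so \emph{all} of $\mathcal L$ is listed); the crucial point is that the small part of $S_2^{0}$ can be swapped for a suffix of $S$ without leaving the window. Writing $\sigma:=v_1(S_2^{0}\cap S)$, the suffix-sums $0=\tau_0<\tau_1<\cdots<\tau_{|S|}=v_1(S)$ increase in steps of at most $\theta$, so $\sigma$ lies in some gap $[\tau_t,\tau_{t+1}]$ of length $\le\theta$. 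Replacing $S_2^{0}\cap S$ by the suffix of value $\tau_t$ moves $v_1(S_2)$ down by $\sigma-\tau_t$, and by the suffix of value $\tau_{t+1}$ moves it up by $\tau_{t+1}-\sigma$. Because the ``room below'' $\big(v_1(S_2^{0})-(1-\epsilon/8)\mu_1\big)$ plus the ``room above'' $\big(v_1(M)-(1-\epsilon/8)\mu_1-v_1(S_2^{0})\big)$ equals $\theta\ge\tau_{t+1}-\tau_t$, at least one of the two choices keeps $v_1(S_2)$ inside the window. Either way $S_2$ now equals $L_2$ together with a suffix of $S$, and $S_1$ equals $L_1$ together with the complementary prefix of $S$, i.e.\ a single cut of $S$, so the interval form $M=[\alpha_1,\beta_1]\bigcupdot[\alpha_2,\beta_2]$ holds and the value guarantee is met.

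It remains to secure $|S_1|\ge|S_2|$ and the top-two condition, which are cheap. If $|S_1|<|S_2|$ I swap the names of the two parts; this preserves the (symmetric) window guarantee and the single-cut form. With $S_1$ now the larger part, property~4 is automatic since every listed item has value $>\theta\ge$ every small item in its own part, and for property~5 I add the two highest-valued items of $S_1$ to $L_1$ if they are not already large, which deletes them from $S$ (leaving the two leftover blocks contiguous) and increases $|L_1|+|L_2|$ by at most $2$; with $|\mathcal L|<8/\epsilon$ this gives $|L_1|+|L_2|\le\lceil 8/\epsilon\rceil+2$. The genuine obstacle, and the step I would spend the most care on, is the third paragraph: obtaining \emph{few} listed items yet \emph{enough} granularity to land in a window of length only $\approx \epsilon\mu_1/4$. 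Setting the threshold $\theta$ equal to the window length is precisely what reconciles the two demands — it is the unique choice making the large-item count exactly $8/\epsilon$ at the extreme $v_1(M)=2\mu_1$, while forcing every small item to be $\le\theta$, so the ``round up or round down'' rebalancing of small items cannot overshoot. The only residual bookkeeping is verifying strict domination in property~4 under value ties (resolved by a fixed tie-break consistent with the sorted order) and checking that removing the two special items for property~5 still leaves $S$ split by a single cut.
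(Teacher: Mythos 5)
Your argument is correct in substance, and it takes a genuinely different route from the paper's. The paper fixes the threshold $\epsilon\mu_1/4$ for ``large'' items, bounds their number by $\lceil 8/\epsilon\rceil$ via a contradiction with the definition of $\mu_1$ (too many large items would let a balanced split of them exceed $\mu_1$ on both sides), takes $(L_1,L_2)$ to be an \emph{optimal} max--min split of the large items, and then runs a two-pointer sweep over the small items maintaining the invariant $|v_1(S_\ell)-v_1(S_r)|\le\epsilon\mu_1/4$, with a separate boundary case ($j\in\{0,m\}$) whose analysis needs the optimality of the large-item split. You instead set the threshold equal to the feasibility-window length $\theta=v_1(M)-2(1-\epsilon/8)\mu_1\ge\epsilon\mu_1/4$, bound $|\mathcal L|<v_1(M)/\theta\le 8/\epsilon$ by direct counting, keep the large items exactly where a global MMS-optimal partition places them, and re-round only the small part of the min side to one of the two nearest suffix sums of $S$; the pigeonhole ``room below $+$ room above $=\theta\ge$ gap'' shows one of the two roundings lands in the window. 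This buys a one-shot argument with no case analysis, no iterative invariant, and no optimal sub-partition of the large items; the individual steps check out ($v_1(S_2^0)=\mu_1$ does lie in the window, the suffix sums move in steps of at most $\theta$, and the promotion of the top two items of $S_1$ costs only $+2$ in $|L_1|+|L_2|$).

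The one genuine problem is your opening move: re-indexing $[m]$ by the $v_1$-sorted order, with all intervals taken in that order. The proposition asks for intervals $[\alpha_i,\beta_i]$ with respect to the \emph{given} labeling $M=[m]$, and this is not cosmetic: the entire purpose of Proposition~\ref{prop: there are succinct descriptions} is that $S_1,S_2,S'$ be reconstructible from $O(\log m/\epsilon)$ bits alone, and the decoder (the allocation procedure, which in the \pas\ framework sees only the prediction, never the reports) cannot recover a $v_1$-dependent ordering. Intervals in the sorted order are therefore not a valid succinct description, and formally you have proved the statement only after a value-dependent relabeling. Fortunately the flaw is in the framing, not the mathematics: nothing in your argument uses monotonicity. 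The suffix-sum gap argument needs only that every item of $S$ has value at most $\theta$, which holds in any fixed order, and promoting the top two items of $S_1$ into $L_1$ preserves the representation $S_1=L_1\cup(S\cap[\alpha_1,\beta_1])$ because promotion simply removes them from $S$. So the repair is to delete the sorting and run the identical argument over the canonical order of $[m]$. (The tie-breaking worry you raise about strictness in property~4 after promotion is real but is equally present in the paper's own proof, and is immaterial to how the proposition is used downstream.)
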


We do this by inspecting two types of items, large items, with value greater than $\epsilon\mu_1/4$, and small items items with value at most $\epsilon\mu_1/4$. We first show that there are $O(1/\epsilon)$ large items, therefore, separating these items into two bundles require at most $O(1/\epsilon)$ intervals. Moreover, we can find a separation of the larges items into two sets, $L_1,L_2$, and a single index $j\in [m]$ such that all small items to the left of $j$ (including) together with $L_1$ form $S_1$, and all items to the right of $j$ (excluding) together with $L_2$ form $S_2$, such that $S_1,S_2$ satisfy the approximation requirement. 
It is easy to see that this increases the number of intervals by at most 1.

We start by showing there are not too many large items.
\begin{lemma}
    There are at most $\lceil \frac{8}{\epsilon} \rceil$ items with value strictly greater than $\epsilon\mu_1$ for agent 1.
\end{lemma}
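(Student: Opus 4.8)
The plan is to split the items by comparing their value to the full MMS value $\mu_1$ rather than to the threshold $\epsilon\mu_1$ directly. Call an item \emph{huge} if $v_1(j) > \mu_1$ and \emph{medium} if $\epsilon\mu_1 < v_1(j) \le \mu_1$; every item counted by the lemma is huge or medium, so it suffices to bound these two classes separately. The medium items will be controlled by a total-value argument, while the huge items are handled by a pigeonhole argument on the maximin definition of $\mu_1$.

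First I would bound the number of huge items by $2$. Suppose toward contradiction that three items $a,b,c$ each have value strictly exceeding $\mu_1$. Placing $a$ in one bundle and $b,c$ together with all remaining items in the other yields a two-bundle partition of $M$ whose smaller side has value at least $\min\{v_1(a),\,v_1(b)+v_1(c)\} > \mu_1$, using $v_1(a)>\mu_1$ and $v_1(b)+v_1(c)>2\mu_1\ge\mu_1$. This contradicts the fact that $\mu_1$ is the maximum over all two-bundle partitions of the minimum bundle value. Hence at most two items are huge.

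The crux is to bound the medium items, and here a direct total-value estimate fails: a couple of huge items can make $v_1(M)$ arbitrarily large compared with $\mu_1$, so I must bound only the value carried by items of value at most $\mu_1$. Let $M' = \{j : v_1(j) \le \mu_1\}$, and partition $M'$ greedily, repeatedly adding the next item to the currently less valuable of two bundles $C_1,C_2$. Since each item of $M'$ has value at most $\mu_1$, a one-line induction shows $|v_1(C_1)-v_1(C_2)| \le \mu_1$ is maintained throughout, so at the end $\min\{v_1(C_1),v_1(C_2)\} \ge (v_1(M')-\mu_1)/2$. Dumping the (at most two) huge items into $C_1$ extends this to a partition of all of $M$ with the same lower bound on its minimum side, so the maximin optimality of $\mu_1$ gives $\mu_1 \ge (v_1(M')-\mu_1)/2$, i.e. $v_1(M') \le 3\mu_1$. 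Since every medium item has value strictly greater than $\epsilon\mu_1$ and lies in $M'$, their number is strictly less than $v_1(M')/(\epsilon\mu_1) \le 3/\epsilon$.

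Combining the two bounds, the number of items of value strictly greater than $\epsilon\mu_1$ is at most $2 + 3/\epsilon$, which for $\epsilon \le 1$ is at most $5/\epsilon \le 8/\epsilon \le \lceil 8/\epsilon\rceil$, completing the proof. The main obstacle is precisely the observation that motivates the case split: one cannot bound the count by $v_1(M)/(\epsilon\mu_1)$ because $v_1(M)/\mu_1$ is unbounded. Peeling off the constantly many huge items and bounding only the residual mass $v_1(M')$ is what makes the argument go through.
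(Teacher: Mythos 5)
Your proof is correct in the regime $\epsilon\le 1$ that you state, but it follows a genuinely different route from the paper's. The paper's proof is a single pigeonhole step: assuming there are $\lceil 8/\epsilon\rceil+1$ items above the threshold, it splits them as evenly as possible between two bundles, so that each bundle contains at least $\lceil 8/\epsilon\rceil/2\ge 4/\epsilon$ of them and hence has value strictly greater than $\mu_1$ on its own; this contradicts the maximin definition of $\mu_1$, with no case split and no bound on total value. Your argument instead separates ``huge'' items ($>\mu_1$) from ``medium'' ones ($\epsilon\mu_1< v\le \mu_1$), handles the huge ones by a direct maximin contradiction (in fact at most \emph{one} item can exceed $\mu_1$: with two such items, placing one alone against the rest already contradicts maximality, so your bound of two is sound but loose), and counts the medium ones through the intermediate claim $v_1(\{j: v_1(j)\le \mu_1\})\le 3\mu_1$, proved by greedy balancing. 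That claim is a nice, tight, reusable byproduct (three items of value $1$ show $3\mu_1$ is attained), but it costs constants: note that the paper's proof actually bounds the number of items of value greater than $\epsilon\mu_1/4$ --- the lemma's statement and its proof are inconsistent in the paper, and it is the $\epsilon\mu_1/4$ version that is used downstream, since the set of large items in the surrounding proposition is defined with that smaller threshold. Run with threshold $\epsilon\mu_1/4$, your counting gives roughly $12/\epsilon+2$, which overshoots $\lceil 8/\epsilon\rceil$, whereas the pigeonhole argument delivers $\lceil 8/\epsilon\rceil$ for the smaller threshold and for every $\epsilon>0$ (your final step $2+3/\epsilon\le 5/\epsilon$ genuinely needs $\epsilon\le 1$; for large $\epsilon$ the statement survives only via the sharper one-huge-item observation).
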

\begin{proof}
    Let items with value greater than $\epsilon\mu_1/4$ be the \textit{large} items. Suppose there are at least $\lceil\frac{8}{\epsilon}\rceil+1$ large items. If $\lceil\frac{8}{\epsilon}\rceil$ is even, consider a partition $(S_1,S_2)$ such that each $S_i$ gets at least $\lceil\frac{8}{\epsilon}\rceil/2$ large items and the rest are allocated arbitrarily. If $\lceil\frac{8}{\epsilon}\rceil$ is odd, consider the allocation in which each $S_i$ gets $(\lceil\frac{8}{\epsilon}\rceil+1)/2$ large items and the rest are allocated arbitrarily. In either case, each $S_i$ gets at least $\lceil\frac{8}{\epsilon}\rceil/2\ge \frac{4}{\epsilon}$ large items. Thus, $\min\{v_1(S_1),v_1(S_2)\} > \epsilon \mu_1/4 \cdot \frac{4}{\epsilon} = \mu_1,$ a contradiction.
\end{proof}

We are now ready to prove Proposition \ref{prop: good approximation to MMS using small space}.
\begin{proof}[Proof of Proposition \ref{prop: good approximation to MMS using small space}]
 Consider the set of large items, $L=\{j\in [m]\ : \ v_1(j)> \epsilon\mu_1/4\} 
$, and let $S= M\setminus L$ be the set of small items. 

    We give a constructive proof which finds both  sets $L_1,L_2$ and an index $j$ satisfying the condition stated in the lemma. Let $$(L_1,L_2)\in {\arg\max}_{(T_1,T_2)\ :\ T_1\bigcupdot T_2 = L}\min_{j\in \{1,2\}} v_1(S_j).$$ We use the following procedure to find $j$. 
    \begin{enumerate}
        \item Let $j_{\ell}=0$ and $j_r=m$.
        \item While $j_{\ell} \ne j_r$:
        \begin{enumerate}            
            \item Let  $S_\ell = L_1 \cup \{j'\in S \ : \ j'\le j_\ell\}$ and $S_r = L_2 \cup \{j'\in S \ : \ j'> j_r\}$.
            \item If $v_1(S_\ell) < v_1(S_r):$\label{cond:1}
            \begin{itemize}
                \item $j_\ell:= j_\ell+1$.
            \end{itemize}
            \item Else:
            \begin{itemize}
                \item $j_r:= j_r-1$.
            \end{itemize}
        \end{enumerate}
        \item Set $j:=j_\ell=j_r$.
    \end{enumerate}

    We consider two cases:
    
    \noindent\textbf{Case 1:} $j=0$ (or symmetrically, $j=m$). Without loss of generality, suppose that $j=m.$ We first show that if $v_1(S_1) < v_1(S_2)$ then  $\min\{v_1(S_1),v_1(S_2)\}=\mu_1$. Notice that since $S_1$ gets all the small items, it must be the case that $v_1(L_1)<v_1(L_2).$ Suppose there's a different partition $T_1\bigcupdot T_2$ such that   $\min\{v_1(T_1),v_1(T_2)\} > \min\{v_1(S_1),v_1(S_2)\}.$ Without loss of generality, let $v_1(T_1\cap L)\le v_1(T_2\cap L)$ (otherwise, we can rename both bundles). By the definition of $L_1,L_2,$ it must be the case that $v_1(L_1)\ge v_1(T_1\cap L)$. Thus, Since $T_1\setminus(T_1\cap L)\subseteq S$, it must be that $$v_1(S_1)\ =\ v_1(L_1)+v_1(S)\ \ge\ v_1(T_1\cap L) + v_1(T_1\setminus(T_1\cap L)) \ =\ v_1(T_1) \ \ge \ \min\{v_1(T_1),v_1(T_2)\},$$ a contradiction.
    
    On the other hand, if $v_1(S_1) \ge v_1(S_2)= v_1(L_2)$, by condition~\ref{cond:1} of the above procedure, it must be the case that when $j_\ell$ was equal $m-1$,  $$v_1(S_\ell)\ <\ v_1(S_r)\ =\ v_1(L_2)\ =\ v_1(S_2).$$ Thus, $$v_1(S_1)\ =\ v_1(S_\ell)+ v_1(m)\ <\ v_1(S_2) + \epsilon\mu_1/4.$$

    We get that 
    \begin{eqnarray}
        & &v_1(S_2)  \ \ge\ v_1(S_1) - \epsilon\mu_1/4\ \ge\ 2\mu_1 - v_1(S_2) - \epsilon\mu_1/4 \ \Rightarrow\ \nonumber\\ & &\min\{v_1(S_1),v_1(S_2)\} \ =\ v_1(S_2)\ \ge\ (1-\epsilon/8)\mu_1, \label{eq:eps_diff}  
    \end{eqnarray} 
    where the second inequality follows since $2\mu_1 \le v_1(S_1)+v_1(S_2).$

    \noindent{Case 2:} $0 < j < m$. In this case, since both $j_\ell$ and $j_r$ were moved, there were some values of $j_\ell$ and $j_r$ such that $v_1(S_\ell)\le v_1(S_r)$   and some values such that  $v_1(S_\ell) > v_1(S_r)$. Assume initially that  $v_1(S_\ell)\le v_1(S_r)$. Since at each step of the procedure, the lower-valued bundle can increase by at most $\epsilon\mu_1/4$, when the first item is added to $S_\ell$ such that  $v_1(S_\ell) > v_1(S_r)$, it must be the case that $v_1(S_\ell)\le v_(S_r)  +\epsilon\mu_1/4.$ It is easy to see that the invariant where $|v_1(S_\ell)-v_1(S_r)|\le \epsilon\mu_1/4$ is kept throughout the run of the procedure. Therefore, this also holds for the final $S_1$ and $S_2$. Thus, we can use the same reasoning of Eq.~\eqref{eq:eps_diff} to conclude that $\min\{v_1(S_1),v_1(S_2)\}\ge (1-\epsilon/8)\mu_1.$

    Thus, the sets $S_1$ and $S_2$ have a form  $S_1 = L_1 \cup (S \cap [1, j])$ and  $S_2 = L_2 \cup (S \cap [j+1, m])$ and have the form required.  If $|S_1|<|S_2|$ we can swap our definitions for the sets $S_1$ and $S_2$, thus ensuring that $|S_1|>|S_2|$. Due to our definitions of $L_1$ and $L_2$ we have for every $x$ in $L_1$ and $y$ in $S_1$ we have $v_1(x)>v_1(y)$. Analogously, for every $x$ in $L_2$ and $y$ in $S_2$ we have $v_1(x)>v_1(y)$. 

    We can ensure that
 There are $\hat{j}, \hat{j}' \in L_1$ satisfying $$\hat{j} \in \arg \max_{\ell \in S_1} v_1(\ell) \mbox{ and }\hat{j}' \in \arg \max_{\ell \in S_1\setminus \hat{j}} v_1(\ell),$$ by adding such values from $S\cap [\alpha_1, \beta_1]$ to $L_1$ (we see that after this all other properties still hold). Overall, we see that $|L_1|+|L_2|\leq \lceil \frac{8}{\epsilon} \rceil+2$, as required.
\end{proof}

Now, we proceed to proving the second half of Proposition \ref{prop: there are succinct descriptions}. We will need the following lemma.
\begin{lemma}
\label{lemma: dividing set in two contiguously}
    Let $k_1$ and $k_2$ be positive integers satisfying $k_1>k_2$, and let $f$ be a function mapping $[k_1]$ to non-negative real numbers. Then, there exist a pair of integers $\alpha, \beta, \alpha'$ and $\beta'$ in $[k_1]$ such that $\left \lvert 
    [\alpha, \beta] \cup [\alpha', \beta']
    \right\rvert=k_2$ and 
    \[
    \frac{\sum_{i\in [\alpha, \beta] \cup [\alpha', \beta']}f(i)}{k_2}
    \leq
    \frac{\sum_{i\in [k_1]}f(i)}{k_1}
    \]
\end{lemma}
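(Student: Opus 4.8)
The plan is to reduce the problem to choosing a single contiguous block on a \emph{cycle} and then apply an averaging (double-counting) argument. Let $S=\sum_{i\in[k_1]}f(i)$ and write $\mu=S/k_1$ for the global average; the goal is to exhibit a union of two integer intervals of total size exactly $k_2$ whose average does not exceed $\mu$.

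First I would arrange the indices $1,\dots,k_1$ on a cycle of length $k_1$. For each start $s\in[k_1]$ let $W_s$ denote the set of $k_2$ consecutive indices (modulo $k_1$) beginning at $s$. Each $W_s$ has exactly $k_2$ elements, and when read back on the line it is either a single interval $[\alpha,\beta]$ (when the block does not wrap past $k_1$) or the disjoint union of a suffix $[\alpha',k_1]$ and a prefix $[1,\beta]$ (when it wraps). Since $k_1>k_2$, in the wrapping case the prefix and suffix are nonempty and disjoint, so in either case $W_s$ is a union of at most two integer intervals of total size $k_2$.

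The core step is a double count. Each index $i\in[k_1]$ lies in exactly $k_2$ of the $k_1$ windows $W_1,\dots,W_{k_1}$, so
\[
\sum_{s=1}^{k_1}\sum_{i\in W_s}f(i)=\sum_{i\in[k_1]}f(i)\,\bigl|\{s : i\in W_s\}\bigr|=k_2\,S.
\]
Averaging over the $k_1$ windows, some window $W_{s^*}$ satisfies $\sum_{i\in W_{s^*}}f(i)\le k_2 S/k_1$, and dividing by $k_2$ yields $\tfrac1{k_2}\sum_{i\in W_{s^*}}f(i)\le S/k_1=\mu$, which is precisely the claimed bound.

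It then remains only to present $W_{s^*}$ in the required form $[\alpha,\beta]\cup[\alpha',\beta']$ with $\bigl|[\alpha,\beta]\cup[\alpha',\beta']\bigr|=k_2$. If $W_{s^*}$ wraps, it is already $[1,\beta]\cup[\alpha',k_1]$ with the two pieces disjoint and of total size $k_2$. If $W_{s^*}=[\alpha,\beta]$ is a single (non-wrapping) interval, I take the second interval to be a single point inside it (say $[\alpha',\beta']=[\alpha,\alpha]$), so the union is still $[\alpha,\beta]$ and has exactly $k_2$ elements. I expect the only subtlety to be this bookkeeping --- confirming that each index is covered by precisely $k_2$ cyclic windows and that both the wrapping and non-wrapping cases genuinely yield a union of two integer intervals of the exact prescribed size --- whereas the averaging inequality itself is immediate.
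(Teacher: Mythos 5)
Your proof is correct and is essentially the paper's own argument: the paper picks a uniformly random cyclic window of length $k_2$ and applies linearity of expectation, which is exactly your deterministic double count over all $k_1$ cyclic windows (each index covered by exactly $k_2$ of them), followed by the same averaging conclusion. Your extra bookkeeping for the non-wrapping case (taking the second interval to be a point inside the first) is a fine way to meet the two-interval format that the paper glosses over.
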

\begin{proof}
  We prove the lemma using the probabilistic method. Let $j$ be a uniformly random integer in $[k_1]$, and choose $\alpha, \beta, \alpha'$ and $\beta'$ such that 
  \[[\alpha, \beta] \cup [\alpha', \beta']
    =
    \{
    j, j+1 \mod k_1, \cdots, j+k_2-1 \mod k_1 
    \}.
  \]
We see that indeed a set chosen as above can be represented as a union of two intervals. Now, since $j$ is chosen uniformly at random form $[k_1]$, we see that for every element $i$ in $[k_1]$ we have
\[
\Pr_{j \sim [k_1]} 
[i \in \{
    j, j+1 \mod k_1, \cdots, j+k_2-1 \mod k_1 
    \}]
=\frac{k_2}{k_1}.
\]
Thus via linearity of expectation we have:
\[
\mathbb{E}
_{j \sim [k_1]} 
\left[
\frac{1}{k_2}
\sum_{i\in \{
    j, j+1 \mod k_1, \cdots, j+k_2-1 \mod k_1 
    \}\}} f(i)
\}\right]
=
\frac{1}{k_1} \sum_{i\in [k_1]} f(i).
\]
Thus, since $f(i)$ is non-negative for all values of $i$, we see that for some specific choice of $j$ it has to be the case that
\[
\frac{1}{k_2}
\sum_{i\in \{
    j, j+1 \mod k_1, \cdots, j+k_2-1 \mod k_1 
    \}\}} f(i)
\}
\leq
\frac{1}{k_1} \sum_{i\in [k_1]} f(i),
\]
which finishes the proof.
\end{proof}
Now, we apply the lemma above.
If $m<4\lceil\frac{t}{\epsilon}\rceil+2$ we can satisfy Proposition \ref{prop: good approximation to MMS using small space} by:
\begin{enumerate}
    \item First choosing a partition $M=S_1 \bigcupdot S_2$ such that $\min(v_1(S_1), v_1(S_2))\geq \mu_1$ and $|R_1|\geq |R_2|$.
    \item Define $L_2:=S_2$, put the smallest $\lfloor m/2 \rfloor - |S_2$ elements of $S_1$ into $S$, and define $L_1$ to contain the rest of elements in $S_1$.
    \item Define $\alpha_1=\alpha_3=1$, $\beta_1=\beta_3=m$, $\alpha_2=\beta_2=\alpha_4=\beta_4=m+1$. 
\end{enumerate}
Overall, this allocates $S'$ to be the bottom $\lfloor m/2 \rfloor$ elements of $S_1$. We see that this suffices to guarantee the properties that $S'$ needs to satisfy in Proposition \ref{prop: there are succinct descriptions}. 

Therefore, henceforth we can assume that $m>4\lceil\frac{t}{\epsilon}\rceil+2$. Since $|S_1|\geq m/2$ and $|L_1|\leq \frac{8}{\epsilon}+2$, and $S_1 = L_1 \cup (S \cap [\alpha_1, \beta_1])$ this implies that 
$
\left \lvert
S \cap [\alpha_1, \beta_1])
\right \rvert
>3 \left\lceil\frac{t}{\epsilon}\right\rceil > m/2
$ 
Thus, we can ensure that $|S'|=\lfloor m/2 \rfloor-|S_2|$ using a subset $S' \subset S \cap [\alpha_1, \beta_1])$.

If $|S_2|=1$ we only need choose $S'$ to satisfy $|S'|=\lfloor m/2 \rfloor-|S_2|$ and $v_1(S')\leq v_1(S_1\setminus\{\hat{j}, \hat{j}'\})/2$. First of all, since every element in $L_1$ is larger than any element in $S \cap [\alpha_1, \beta_1])$, we see that this is also true in average
\begin{equation}
\label{eq: ineq 1 1}
\frac{\sum_{\ell \in S_1}v_1(\ell)}{|S_1|}
\leq 
\frac{\sum_{\ell \in S \cap [\alpha_1, \beta_1])}v_1(\ell)}{\left\lvert S \cap [\alpha_1, \beta_1])\right\rvert}
\end{equation}
Then, applying Lemma \ref{lemma: dividing set in two contiguously} to the set $ S \cap [\alpha_1, \beta_1])$ we see that there exist disjoint subsets $[\alpha_3, \beta_3]$ and $[\alpha_4, \beta_4]$ of $[\alpha_1, \beta_1]$ such that $\left\lvert S \cap \left([\alpha_3, \beta_3] \bigcup [\alpha_4, \beta_4]\right)) \right\rvert= \lfloor m/2\rfloor -|S_2|$
\begin{equation}
    \label{eq: ineq 1 2}
\frac{\sum_{\ell \in S \cap [\alpha_1, \beta_1])}v_1(\ell)}{\left\lvert S \cap [\alpha_1, \beta_1]) \right\rvert}
\leq
\frac{\sum_{\ell \in S \cap \left([\alpha_3, \beta_3] \bigcup [\alpha_4, \beta_4]\right))} v_1(\ell)}{\left\lvert S \cap \left([\alpha_3, \beta_3] \bigcup [\alpha_4, \beta_4]\right)) \right\rvert}
\end{equation}
Combing Equations \ref{eq: ineq 1 1} and \ref{eq: ineq 1 2}, we see that taking $S'=S \cap \left([\alpha_3, \beta_3] \bigcup [\alpha_4, \beta_4]\right)$ satisfies $v_1(S')\leq v_1(S_1)/2$ and the other requirements of Proposition \ref{prop: there are succinct descriptions}.

Now, suppose $|S_2|>1$. Since by Proposition \ref{prop: good approximation to MMS using small space}, the set $S$ does not contain the two largest elements $\hat{j}$ and $\hat{j}'$ of $S_1$, as well as the fact that every element in $L_1$ is at least as large as any element in $S \cap [\alpha_1, \beta_1])$, we see that every every element in $S_1 \setminus\{\hat{j}, \hat{j}'\}$ is either in $S \cap [\alpha_1, \beta_1])$ or greater than every element in $S \cap [\alpha_1, \beta_1])$. this implies that:
\begin{equation}
\label{eq: ineq 2 1}
\frac{\sum_{\ell \in S_1 \setminus\{\hat{j}, \hat{j}'\}}v_1(\ell)}{|S_1|-2}
\leq 
\frac{\sum_{\ell \in S \cap [\alpha_1, \beta_1])}v_1(\ell)}{\left\lvert S \cap [\alpha_1, \beta_1])\right\rvert}
\end{equation}
Then, we can again apply applying Lemma \ref{lemma: dividing set in two contiguously} to the set $ S \cap [\alpha_1, \beta_1])$ we see that there exist disjoint subsets $[\alpha_3, \beta_3]$ and $[\alpha_4, \beta_4]$ of $[\alpha_1, \beta_1]$ such that $\left\lvert S \cap \left([\alpha_3, \beta_3] \bigcup [\alpha_4, \beta_4]\right)) \right\rvert= \lfloor m/2\rfloor -|S_2|$
\begin{equation}
    \label{eq: ineq 2 2}
\frac{\sum_{\ell \in S \cap [\alpha_1, \beta_1])}v_1(\ell)}{\left\lvert S \cap [\alpha_1, \beta_1]) \right\rvert}
\leq
\frac{\sum_{\ell \in S \cap \left([\alpha_3, \beta_3] \bigcup [\alpha_4, \beta_4]\right))} v_1(\ell)}{\left\lvert S \cap \left([\alpha_3, \beta_3] \bigcup [\alpha_4, \beta_4]\right)) \right\rvert}
\end{equation}
Combing Equations \ref{eq: ineq 2 1} and \ref{eq: ineq 2 2}, we see that taking $S'=S \cap \left([\alpha_3, \beta_3] \bigcup [\alpha_4, \beta_4]\right)$ satisfies $v_1(S')\leq v_1(S_1 \setminus \{\hat{j}, \hat{j}'\})/2$ as required in Proposition \ref{prop: there are succinct descriptions}. Note that this also implies that $v_1(S')\leq v_1(S_1 \})/2$ since $\hat{j}$, $\hat{j}'$ have the top two largest values of $v_1$ in $S_1$. 
Overall, this finishes the proof of Proposition \ref{prop: there are succinct descriptions}.

\subsection{Proof of $(2+\epsilon)$-consistency.}
\label{sec: proof of consistency}
It remains to show that the algorithm is $2+\epsilon$-consistent. We will be referencing the variables $\hat{j}_1, \hat{j}_2, \tilde{j}_1, \tilde{j}_2, T_1$ and $T_2$ within the Plant-And-Steal framework (Algorithm \ref{alg:pas}).

    We first reason about agent 2. First, notice that since agent 2 has a higher value for $\tilde{S}_{i_2}$, $$v_2(\tilde{S}_{i_2})\ge \mu_2.$$
    
    Since the mechanism had a chance to pick item $\hat{j}_2$ from $T_1$ as $\tilde{j}_2$, it must be the case that $v_2(\tilde{j}_2)\ge v_2(\hat{j}_2)$ (and possibly $\tilde{j}_2=\hat{j}_2$). If $\tilde{j}_1=\hat{j}_1$, then $T_2\setminus \tilde{j}_1=\tilde{S}_{i_2}\setminus \hat{j}_2$, and 
    \begin{eqnarray*}
    	\mu_2 \ \le\ v_2(\tilde{S}_{i_2})\ =\ v_2(\tilde{S}_{i_2}\setminus \hat{j}_2)+v_2(\hat{j}_2)\ \le \ v_2(T_2\setminus \tilde{j}_1)+v_2(\tilde{j}_2) \ =\ v_2(X_2).
    \end{eqnarray*}
    
    Otherwise, $\tilde{j}_1\in \tilde{S}_{i_2}$, and 
    
    \begin{eqnarray}
    	\tilde{S}_{i_2}\setminus \hat{j}_2\setminus \tilde{j}_1 \subset T_2\setminus \tilde{j}_1 \ \Rightarrow\  v_2(\tilde{S}_{i_2}\setminus \hat{j}_2\setminus \tilde{j}_1) \le v_2(T_2\setminus \tilde{j}_1). \label{eq:otherwise} 
    \end{eqnarray}
    
    Since $\hat{j}_2$ is the item with the highest value for agent 2 in $\tilde{S}_{i_2},$ $v_2(\tilde{j}_2)\ \ge\ v_2(\hat{j}_2)\ \ge\ v_2(k_1).$ Combining with Eq.~\eqref{eq:otherwise}, we get that  $$v_2(T_2\setminus \tilde{j}_1\cup\{\tilde{j}_2\})\ \ge\ v_2(\tilde{S}_{i_2}\setminus \hat{j}_2).$$
    Moreover, $$v_2(T_2\setminus \tilde{j}_1\cup\{\tilde{j}_2\}) \ \ge\ v_2(\tilde{j}_2) \ \ge\ v_2(\hat{j}_2).$$ 
    Thus, $$v_2(X_2) \ =\ v_2(T_2\setminus \tilde{j}_1\cup\{\tilde{j}_2\})\ \ge\ v_2(\tilde{S}_{i_2})/2\ =\ \mu_2/2, $$ as desired.
    
    It is left to show that $v_1(X_1)\ge \mu_1/(2+\epsilon).$ If $i_1=2$, then $$v_1(\tilde{S}_{i_1})\ =\ v_1(\tilde{S}_2)\ \ge\ v_1(S_2)\ \ge\ (1-\epsilon/4)\mu_1.$$
    
    In this case, the same exact arguments used for agent 2 can be harnessed to show that $v_1(X_1)\ge (1-\epsilon/4)\mu_1/2\geq \mu_1/(2+\epsilon).$ Thus, it is left to consider the case where $i_1=1$.

    Consider the $(S_1,S_2)$ partition that is set in the first step of $\pcac$. 
    Since $v_1(S') \leq v_1(S_1)/2$, 
    we have $$v_1(\tilde{S}_1) \ \ge\ v_1(S_1)/2 \ \geq\ (1-\epsilon/4)\mu_1/2\geq \frac{\mu_1}{2+\epsilon}.$$ 
    
    
    If $\tilde{j}_2=\hat{j}_2$, we have that $$v_1(X_1)= v_1(\tilde{S}_1 \cup \{\tilde{j}_1\} \setminus \{\hat{j}_1\})\ge v_1(\tilde{S}_1) \geq \frac{\mu_1}{2+\epsilon},$$ where the first inequality follows since $v_1(\tilde{j}_1) \ge v_1(\hat{j}_1)$.
   
    Note also that if $|S_2|=1$ i.e., $S_2 = \{a\}$, if $\hat{j}_2 \neq a$ then $v_1(X_1)\geq v_1(S_2)$ since $a\in T_2$, similarly if $\tilde{j}_2 \neq a $ then $v_1(X_1)\geq v_1(S_2)$, finally we have $\hat{j}_2=k_2=a$ and $v_1(X_1) \ge \mu_1/(2+\epsilon)$ an in the first case.
    
    Therefore, we assume $\tilde{j}_2 \neq \hat{j}_2$ and $|S_2|>1$, and let  $\hat{j}'_1\in {\arg\max}_{j\in \tilde{S}_1 \setminus \{\hat{j}_1\}}v_{1}(j)$
    \begin{eqnarray*}
    	v_1(X_1) &=& v_1(T_1\cup\{\tilde{j}_1\}\setminus\{\tilde{j}_2\}) \\ 
    	&= & v_1(T_1)+ v_1(\tilde{j}_1) - v_1(k_2)\\
    	& \ge & v_1(\tilde{S}_1\cup \{\hat{j}_2\}\setminus\{\hat{j}_1\}) + v_1(\hat{j}_1) - v_1(\tilde{j}_2) \\
    	& \ge & v_1(\tilde{S}_1\setminus\{\hat{j}_1\}) + v_1(\hat{j}_1) - v_1(\tilde{j}_2) \\
    	& = & v_1(S_1 \setminus S' \setminus\{\hat{j}_1\} ) + v_1(\hat{j}_1) - v_1(\tilde{j}_2) \\
    	& \ge & v_1(S_1 \setminus S' \setminus\{\hat{j}_1\} ) + v_1(\hat{j}_1) - v_1(\hat{j}'_1 ) \\
    	& = & v_1(S_1 \setminus S' \setminus\{\hat{j}_1,\hat{j}'_1\} ) + v_1(\hat{j}_1), 
    	\label{eq:cons_lb}
    \end{eqnarray*}
    where the first inequality is since, $v_{1\tilde{j}_1}= \max_{j\in T_2}v_{1j}\ge v_{1\hat{j}_1}$. The second inequality is since $v_1(\hat{j}_2) \geq 0$, the third inequality is by $\hat{j}'_1$ definition since $\tilde{j}_2 \in \tilde{S}_1 \setminus {\hat{j}_1}$ by our assumption that $k_2\neq \hat{j}_2$.
    Finally, we have have $|S_1 \setminus S' \setminus\{\hat{j}_1,\hat{j}'_1\}| \geq |S'|$
    since    
    $$|S_1|-2-|S'| = |S_1|-2-(m/2-|S_2|) = |S_1|-2-(m/2-(m-|S_1|)) = m/2-2 \ge |S'|,$$
    where the last inequality is since $|S_2| > 1$.
    Since we handles the case $|S_2|=1$ earlier, we can here assume $|S_2|>1$ in which case the set $S'$ is required to satisfy $v_1(S')\leq v_1(S_1\setminus\{\hat{j}, \hat{j}'\})/2$.
    Therefore, 
   we have $v_1(S_1 \setminus S' \setminus\{\hat{j}_1,\hat{j}'_1\} \geq v_1(S')$.
    
    \begin{eqnarray*}
    	(1-\epsilon/4)\mu_1 \leq v_1(S_1) &=&  v_1(S_1 \setminus S' \setminus\{\hat{j}_1,\hat{j}'_1\} ) + v_1(\hat{j}_1) +v_1(\hat{j}'_1) + v_1(S')
    	\\ &\leq&   v_1(S_1 \setminus S' \setminus\{\hat{j}_1,\hat{j}'_1\} ) + 2\cdot v_1(\hat{j}_1)  + v_1(S')
    	\\ &\leq&   2\cdot v_1(S_1 \setminus S' \setminus\{\hat{j}_1,\hat{j}'_1\} ) + 2\cdot v_1(\hat{j}_1) 
    	\\ &\leq& 2 \cdot v_1(X_1),
    \end{eqnarray*}
    which implies that $v_1(X_1)\geq\frac{\mu_1}{2+\epsilon}$, finishing the proof.


\end{document}

\appendix

\section{Quasi-balanced Allocations for $n$ agents}
\subsection{Plant \& Steal Algorithm}
Our algorithm for $n$ agents is based on the modified Round-Robin algorithm. We divide the algorithm into three phases. 
The first phase is allocation according to a modified Round-Robin strategy (\cite{}) according to the prediction $p$, which yields a $2$-consistent assignment but clearly is not robust.
The second phase sets the item 'planting' in other agents sets according to prediction $p$, and the third phase is stealing according to valuation $v$.
We will show that if the predictions are correct, the planting \& stealing phases will outcome as the original assignment of the first phase, and therefore, it is a $2$-consistent. Moreover, we will show that regardless of the quality of the predictions 
that the planting \& stealing phases outcomes have good guarantees for various MMS measurements.

Let $i \text{ modp } j = (i-1 \text{ mod } j)+1$, note that $(i \text{ modp } j) \in \{1,\dots,j\}$ for any $i,j$. Given $N^r \subseteq N$ subset of agents, 
Assuming order on $N_r$ let $o_i \in \{1,\dots |N_r|\}$ the order of agent $i$ in $N^r$ and let $\pi_k,\pi^{-1}_k$
   where $\pi_k(i)$ is the index of the  $(o_i+k) \text{ modp } |N^r|$'th agent in $N^r$
    and $\pi^{-1}_k(i)$ is the index of the $(o_i-k) \text{ modp } |N^r|$'th agent in $N^r$.

\begin{algorithm}[H]
  \SetAlgoNoEnd\SetAlgoNoLine
  \SetKwInOut{Input}{Input}
  \SetKwInOut{Output}{Output}
  \DontPrintSemicolon
  \Input{Set of agents $N$, set of items $M$, predictions $\bp$ and reports $\br$}
  \Output{Allocations  $\cupdot_{i\in M} X_i = M$  }
  
  \BlankLine
  $N' \leftarrow N$, $M' \leftarrow M$\;
  \ForEach{$i\in N^r$}{ 
    $A_i \leftarrow \emptyset$\;
  }

  \tcc*{ Large Items Allocation}
  \While{ $\exists i^* \in N'$ and $j \in M'$ such that $v_{i^*}(\{j\}) \geq \mu^{|N'|}_{i^*}(M')$}{   

    $j^* \leftarrow \arg\max_{j\in M'} v_{i^*}(\{j\})$\;
    $A_{i*} = \{j^*\}$, $N' \leftarrow N' \setminus \{i^*\}$, $M' \leftarrow M \setminus \{j^*\}$\;
  }
  $N^r \leftarrow N'$, $M^r \leftarrow M'$\;
  
  \BlankLine
\tcc*{ Round-Robin}
  \ForEach{$\ell = 1$ \KwTo $|M^r|$}{     
    Let $i$ be the $(\ell \mod |N'|)$'th agent in $N'$\;
    $j^* \leftarrow \arg\max \{p_i(\{j\}) : j\in M'\}$\;
    $A_i \leftarrow A_i \cup \{j^*\}$, $M' \leftarrow M' \setminus \{j^*\}$\;
  }
  
  \BlankLine
\tcc*{Plant}
  Let $\sigma$ be a random permutation of $1, \dots, |N^r|-1$\;
  \ForEach{ $i\in N^r$ }{
    $Q_i \leftarrow A_i$, $P_i \leftarrow \emptyset$\;
  }
  \For{$k = 1$ \KwTo $|N^r|-1$}{
    \ForEach{$i \in N^r$}{
      $j^* \leftarrow \arg\max \{p_i(j) : j\in Q_i\}$ \label{lin:plant}\;
      $Q_i \leftarrow Q_i \setminus \{j^*\}$, $P_{\pi_{\sigma(k)}(i)} \leftarrow P_{\pi_{\sigma(k)}(i)} \cup \{j^*\}$\;
    }
  }
  
  \BlankLine
\tcc*{Steal - Large Items}
  $M' \leftarrow M$\;
  \label{lin:orderLarge}
  \ForEach{$i\in N \setminus N^r$ }  {
    $X_i = \{\arg\max \{r_i(j) : j\in M' \}\}$\label{lin:stealLarge}\;
    $M' = M' \setminus X_i$\;
  }
  \tcc*{Steal - Round-Robin}
  \ForEach{ $i\in N^r$}{
    $T_i \leftarrow (P_i \cup Q_i) \cap M'$, $X_i \leftarrow \emptyset$\;
  }
  \For{$k = 1$ \KwTo $|N^r|-1$}{
    \ForEach{$i \in N^r$  \label{lin:order}}
    {
      $j^* \leftarrow \arg\max \{r_i(j) : j\in T_{\pi_{\sigma(k)}(i)} \}$\label{lin:stealRR}\; 
      $T_{\pi_{\sigma(k)}(i)} \leftarrow T_{\pi_{\sigma(k)}(i)} \setminus \{j^*\}$\;
      $X_i \leftarrow X_i \cup \{j^*\}$\;
    }
  }
  \ForEach{ $i\in N^r$}{
    $X_i \leftarrow X_i \cup T_{i}$\;
  }
  
  \caption{Plant \& Steal}
\end{algorithm}

For $i\in N^r$ and $k\in [N^r-1]$ let $a^i_k \in M^r$ be the $k$'th item which assigned to agent $i$ in round $k$ of the Round-Robin.
    
\subsubsection*{Algorithm tie-breaking}

In the \emph{Steal - Large Items} part, the order of agents in $N\setminus N^r$ in line \ref{lin:orderLarge} will be as the order the algorithm assigned in the \emph{Large Items Allocation} part. Moreover, for $i\in N \setminus N^r$, if the corresponding set $M'$ in line \ref{lin:stealLarge}, $A_i \subseteq  \arg\max\{r_i(j):j\in M'\}$
then the algorithm will break-tie by choosing the item in $A_i$.

In the \emph{Plant} part, the item $j^*$ chosen for $i\in N^r$ and $k$ will be set to $a^i_k$.
In \emph{Steal - Round-Robin} part, for $i\in N^r$ and $k$ in line~\ref{lin:stealRR}, if $a^i_k \in \arg\max \{r_i(j) : j\in T_{\pi_{\sigma(k)}(i)}\}$ then the algorithm will break-tie will choosing the item $a^i_k$.

\begin{lemma}
\label{lem:pstruthful}
    The algorithm Plant \& Steal is universally truthful.
\end{lemma}
\begin{proof}
     We will show that for any agent $i\in N$, it is better to report their true valuation. 
     First, notice that the choice of whether $i\in N^r$,
     the order of agents in $N\setminus N^r$ and the subsets $Q_i, P_i$ for $i\in N^r$ depended only on the predictions, not on the actual reported valuations. 

     For $i\in N\setminus N^r$, the agent will get a single item from $M'$ (a subset that does not depend on its reported value). By reporting truthfully, it will get the item maximizing its utility.

     For $i \in N^r$ at round $k$, the agent will 'steal' an item from the corresponding set $T_{\pi_{\sigma(k)}(i)}$. Note that, for $k'\neq k$, ${\pi_{\sigma(k)}(i)} \neq {\pi_{\sigma(k')}(i)}$ and $\pi_{\sigma(k)}(i) \neq i$, i.e., No matter what the permutation is, the agent will steal from a set once, and since the valuation is additive, stealing another item (by reporting untruthfully) would not affect the future subsets ($T_{\pi_{\sigma(k')}(i)}$ for $k'>k$) which it will steal from nor its own remaining subset ($T_i$ at the last round).
     Therefore, by reporting truthfully, at each round $k$, it will get the item maximizing its utility.

\end{proof}
\begin{lemma}
    The algorithm Plant \& Steal is $2$-consistent.
\end{lemma}
\begin{proof}
    By assuming $\bp = \bv$ and by Lemma~\ref{lem:pstruthful}, we may assume that $\br = \bv$ as well. 
    We claim that $X_i = A_i$ for all $i\in N$ and since
    by \cite{}, it is known that this allocation is $2$-approximation, 
    this will prove that the algorithm is $2$-consistent.

    For $i\in N\setminus N^r$, since the order of the agents is as the large item allocation phase, by induction, the set of remaining items $M'$ would be exactly the same in this step as in the original assignment, and the algorithm will assign the same item by our tie-breaking rule.
    By the previous claim, we have that $(P_i \cup Q_i) \cap M' = P_i\cup Q_i$ for $i\in N^r$.
    
    By our tie-breaking rule, we have:
    $j \in P_i$ if and only if $j=a^{i'}_k$ where $i'=\pi^{-1}_{\sigma(k)}(i)$,
    i.e., the set $P_i$ contains the first choice in the Round-Robin for 
    agent $\pi^{-1}_{\sigma(1)}(i)$, the second choice in the Round-Robin for 
    agent $\pi^{-1}_{\sigma(2)}(i)$ etc.
    Moreover, for each $i\in N^r$, $Q_i$ contains only items chosen after round $k$ of the Round-Robin phase.
    Finally, we show that in round $k$ of the Round-Robin steal phase $a^i_k \in \arg\max \{r_i(j) : j\in T_{\pi^{-1}_{\sigma(k)}(i)} \}$ and therefore by our tie-breaking rule we will have $a^i_k\in X_i$.
    By induction, assuming in the previous rounds the corresponding element was chosen and removed from $T_{\pi_{\sigma(k)}(i)}$, we have that
    $a^i_k \in T_{\pi_{\sigma(k)}(i)}$, moreover all the other elements in $T_{\pi_{\sigma(k)}(i)}$ were chosen in rounds larger than $k$. Therefore, those items were available in the $k$'th choice of the agent $i$; by definition, their value is at most the value of $a^i_k$. 
\end{proof}

\begin{lemma}
For any prediction:
\begin{itemize}
    \item All the agents will get one of its  $(\frac{n(n-1)}2+1)$'th most valued items.
    \item an agent will get one of its $n$'th most valued items with a constant probability.
    \item All the agents will get one of their $(2  n\log n)$'th most valued items with high probability (as a function of $1/n$).
\end{itemize}
\end{lemma}

\begin{proof}
    Clearly, for agent $i\in N\setminus N^r$, it will get its $n$'th most valued item with probability $1$. For the agents in $N^r$, we may assume w.l.o.g. that $N=N^r$ and $n=|N|$.
    For an agent $i\in N$ and $r \in \{n,\dots, \frac{n(n-1)}2+1$, let $S^r$ be the set of its $r$ most valued items. By definition of the algorithm, if $|T_i\cup S^r| \geq n$ then the agent will get an item in $S^r$. In addition, for a fixed permutation $\sigma$ then
\end{proof}

\subsection{Balls And Bins}
Consider the balls and bins game; there are $n$ bins and $k$ balls. The adversary assigns the $k$ balls into the bins (a bin can store several balls). Then the agent chooses a random permutation $\sigma_1,\sigma_2,\dots,\sigma_{n}$ on the bins. If the $i$'th chosen bin $\sigma_i$ has $i$ balls or more for any $i$, the agent wins; otherwise, the adversary wins. Equivalently, A ball is discarded from each bin (if a ball exists) at each round (choice of bin), and the agent wins if the chosen bin has a ball in this round.
What is the best arrangement of balls by the adversary to increase the probability that she would win?
Given an assignment of the adversary, let $I^t_i=1$ if bin $i$ has $t$ balls or more.
Given an assignment Let $S$ be the \emph{representing} sequence of the assignment where $S_t = \sum_{i\in [n]} I^t_i$.
\begin{claim}
\label{cl:numballs}
    For any representing sequence  $S$, we have $S_i\geq S_{i+1}$, and 
    the number of assigned balls $t = \sum_{i=1}^{n} S_i$.
    For any representing sequence $S$ such that $S_i\geq S_{i+1}$ and $t = \sum_{i=1}^{n} S_i$, there exists a corresponding assignment of the balls.
\end{claim}

\begin{proof}
By definition the number of bins that have exactly $r$ balls is $S_{r+1}-S_r$
and therefore
$$ \sum_r r\cdot |\{i\in n : \text { Bin $i$ has $r$ balls} \} = \sum_r r\cdot (S_{r+1}-S_r) = \sum_r S_r$$
\end{proof}

\begin{claim}
The agent will always win if $t\in [n]$ exists, such as $S_t \geq n-t+1$.
\end{claim}
\begin{proof}
    By definition, if in rounds $1,\dots,t$ he selects a bin with at least $t$ balls, then the agent wins. Therefore, if $S_t \geq n-t+1$, 
    the number of bins with less than $t$ balls is less than $t-1$, and such bin must be selected, and the agent will win.
\end{proof}
By the Pigeonhole principle we have, 
\begin{corollary}
    The agent will always win if the total number of balls $k$ is strictly larger than $\frac{n\cdot(n-1)}{2}$.
\end{corollary}

Accordingly, we assume that $S_t \leq n-t+1$ for all $t\in [n]$
\begin{claim}
\label{cl:prob}
 For the case that $S_t \leq n-t+1$ for all $t\in [n]$.
 The probability the adversary will win it
 $$\pr{\ind{A}|S} = \prod_{i=1}^n \frac{n-S_i-(i-1)}{n-(i-1)}$$
 \end{claim}
 \begin{proof}
 Given that the agent didn't win in rounds $1,\dots, r-1$: the probability that the agent would not win in round $i$ is:  $\frac{n-S_r-(r-1)}{n-(r-1)}$. Since in the first $r-1$ round, it did not choose one of the bins of $S_r$, there are exactly 
 $n-S_r - (r-1)$ bins with less than $r$ balls out of $n-(r-1)$ bins left and the equation hold since it the next bin is chosen uniformly at random.
 \end{proof}

Using the next two claims, we establish the structure of the best (for the adversary) representing sequence.

\begin{claim}
\label{cl:structure1}
    For a non-increasing sequence $S$ and $t$ such that $S_{t} \geq S_{t+1}+2$.
    Let $S'$ such that $S'_{j} = S'_{j}$ for $j\notin \{t,t+1\}$
    and $S'_t = S_t-1$ and $S'_{t+1}=S_{t+1}+1$.
    Then $S'$ is a representing sequence with the same number of balls as $S$,
    and $\frac{\pr{\ind{A}|S'}}{\pr{\ind{A}|S}} = 1$
    if $S_{t} = S_{t+1}+2$, and 
    $\frac{\pr{\ind{A}|S'}}{\pr{\ind{A}|S}} > 1$
    if $S_{t} \geq S_{t+1}+3$.

\end{claim}

\begin{proof}
    
By definition, $S'$ is also a non increasing sequence. By Claim~\ref{cl:numballs}, the number of balls in $S'$ equals the number of balls in $S$.
Moreover, by Claim~\ref{cl:prob} we have

\begin{align*}
\frac{\pr{\ind{A}|S'}}{\pr{\ind{A}|S}} &= 
\frac {(n-S'_t-t+1) \cdot (n-S'_{t+1}-t)}
{(n-S_t-t+1) \cdot (n-S_{t+1}-t)}\\
&=\frac {(n-S_t-t+2 ) \cdot (n-S_{t+1}-t+1)}
{(n-S_t-t+1) \cdot (n-S_{t+1}-t)} \\
&= 
\frac{S_{t}-S_{t+1}-2}{(n-S_t-t+1) \cdot (n-S_{t+1}-t)} + 1\geq 1
\end{align*}
Note that, the inequality is strict if $S_{t}-S_{t+1} \geq 3$ as required.
\end{proof}

By applying Claim\ref{cl:structure1}, we have
\begin{corollary}
    
\label{col:structure1}
    Given a fixed number of balls, there exists a sequence that maximizes the probability the adversary would win such that have $S_{t}-S_{t+1} \leq 1$.
\end{corollary}

\begin{claim}
\label{cl:structure2}
    For a non-increasing sequence $S$ and $t$ such that $S_{t} = S_{t+r}$ and $S_{t+r+1} < S_t$ for $r\geq 1$.
Let $S'$ such that $S'_{j} = S'_{j}$ for $j\notin \{t, t+r\}$
and $S'_t = S_t+1$ and $S'_{t+r}=S_{t+r}-1=S_t-1$.
Then $S'$ is a representing sequence with the same number of balls as $S$, and 
$\frac{\pr{\ind{A}|S'}} {\pr{\ind{A}|S}} = 1$
for $r=1$, and 
$\frac{\pr{\ind{A}|S'}} {\pr{\ind{A}|S}} > 1$
for $r\geq 2$.
\end{claim}

\begin{proof}
    
By definition, $S'$ is a non-increasing sequence. By Claim~\ref{cl:numballs}, the number of balls in $S'$ equals the number of balls in $S$.
Moreover, by Claim~\ref{cl:prob} we have

\begin{align*}
\frac{\pr{\ind{A}|S'}}{\pr{\ind{A}|S}} &= 
\frac {(n-S'_t-t+1) \cdot (n-S'_{t+r}-t-r+1)}
{(n-S_t-t+1) \cdot (n-S_{t+r}-t-r+1)}\\
&=\frac {(n-S_t-t ) \cdot (n-S_t-t-r+2)}
{(n-S_t-t+1) \cdot (n-S_t-t-r+1)} \\
&= 
\frac{r-1 }
{(n-S_t-t+1) \cdot (n-S_t-t-r+1)} + 1 \geq 1
\end{align*}
Note that, the inequality is strict if $r \geq 2$ as required.
    
\end{proof}
By Corollary~\ref{col:structure1} and Claim~\ref{cl:structure2}, we have

\begin{corollary}
\label{col:structure2}
    Given a fixed number of balls, there exists a sequence $S$ that maximizes the probability the adversary would win such that have $S_{t}-S_{t+1} \leq 1$, moreover there exists a single $t$ such that $S_t = S_{t+1}$.
\end{corollary}
\begin{proof}
    Given a sequence $S$ that maximizes the probability the adversary would win such that have $S_{t}-S_{t+1} \leq 1$ according to Corollary\ref{col:structure1}. And if exists consecutive ${t_1} < {t_2}$ such that $S_{t_1} = S_{{t_1}+1}$ and $S_{t_2} = S_{{t_2}+1}$.
    We may apply Claim~\ref{cl:structure2} 
    on ${t_1},{t_1}+1,\dots {t_2}-1$ with ($r= 1$) and afterwards apply with $t=t_2-1$ and $r\geq 2$  which contradicts the maximality of $S$. 
\end{proof}

\begin{corollary}
    If the number of the balls $k = t\cdot (t+1)/2$ for some $t\in \mathbb{Z}$
    then the sequence that maximizes the probability the adversary would win is 
    $S_i = t-i+1$ for $i\in [t]$ and $S_i=0$ for $i \geq t+1$ and the probability the adversary will win is at most $$\frac{(n-t)^t \cdot (n-t)!}{n!}$$
\end{corollary}

\begin{claim}
For a constant $k = o(n^{1/6})$ and $t=k\cdot \sqrt{n}$, we have  
$$\lim_{n\rightarrow \infty} \frac{(n-t)^t \cdot (n-t)!}{n!} \rightarrow \exp(-k^2/2)$$
\end{claim}

By Stirling approximation, we have
\begin{align*}
\lim_{n\rightarrow \infty} \frac{(n-t)^t \cdot (n-t)!}{n!} &\rightarrow (n-t)^t \cdot \frac{\sqrt{2 \pi (n-t)}}{\sqrt{2 \pi n}} \cdot \frac {((n-t)/e)^{n-t}}{(n/e)^{n}}\\
&\rightarrow (n-t)^t \cdot \frac {((n-t)/e)^{n-t}}{(n/e)^{n}}\\
&=  \left(\frac{n-t}{n}\right)^{n} \cdot e^t \\
&=  \left(e\cdot (1-\frac{k}{\sqrt{n}})^{\sqrt{n}/k}\right)^{k\cdot\sqrt{n}}  
\end{align*}
Therefore
\begin{align*}
    \lim_{n\rightarrow \infty} \log \left( \frac{(n-t)^t \cdot (n-t)!}{n!} \right) &\rightarrow
    \frac{1+  \frac{\sqrt{n}}{k}\cdot\log (1 -\frac{k}{\sqrt{n}})}{\frac{1}{\sqrt{n}\cdot k}} \\
    &\rightarrow \frac{1+  \frac{\sqrt{n}}{k}\cdot ( -\frac{k}{\sqrt{n}}-\frac{1}{2}\cdot(\frac{k}{\sqrt{n}})^2+o((\frac{k}{\sqrt{n}})^3))}{\frac{1}{\sqrt{n}\cdot k}} \\
    &\rightarrow -\frac{k^2}{2}
\end{align*}

$$ \frac{(n-t)^t \cdot (n-t)!}{n!} \leq \left(\frac{n-t}{n}\right)^{n} \cdot e^t  $$
$$ \frac{ (n-t)!}{(n-t)^{n-t}} \leq \frac{n!}{n^n} \cdot e^t  $$

If you put $1,2,...,t$ balls into $n$ bins ($n-t$ bins would have $0$ balls) the probability is 

(Matlab check - this is the best assignment for the adversary)

The total number of balls $t=\frac{k(k+1)}{2}$
if $t=2\cdot n \log n$, matlab check - the probability is $\approx 1/n$.

\end{document}